\newtheorem{definition}{Definition}
\newtheorem{theorem}{Theorem}[section]
\newtheorem{proposition}{Proposition}[section]
\newtheorem{remark}{Remark}[section]
\newtheorem{corollary}{Corollary}[section]
\newtheorem{lemma}[theorem]{Lemma}
\tikzstyle{process} = [rectangle, rounded corners, minimum width=2.5cm, minimum height=1cm,text width = 2.4cm, text centered, draw=black, fill=black!9]
\tikzstyle{box} = [rectangle, rounded corners, minimum width=2.5cm, minimum height=1cm,text width = 2.4cm, text centered, draw=black, fill=black!4]
\tikzstyle{io} = [rectangle, minimum width=2cm, minimum height=1cm, text centered, text width = 2 cm, draw=black, fill=black!4]
\tikzstyle{arrow} = [thick,->,>=stealth]
\begin{document}

\newgeometry{bottom=1.5in}


\begin{center}

  \title{Differentially Private Linear Regression with Linked Data}
  \maketitle

  \thispagestyle{empty}
  
  \vspace*{.2in}

  \begin{tabular}{cc}
    Shurong Lin\upstairs{\affilone,*}, Elliot Paquette\upstairs{\affiltwo}, Eric D. Kolaczyk\upstairs{\affiltwo}
   \\[0.25ex]
   {\small \upstairs{\affilone} Department of Mathematics and Statistics, Boston University} \\
   {\small \upstairs{\affiltwo} Department of Mathematics and Statistics, McGill University} 
  \end{tabular}

  \emails{
    \upstairs{*} shrlin@bu.edu. The research was supported in part by the U.S. Census Bureau Cooperative Agreement CB20ADR0160001 and Canadian NSERC RGPIN-2023-03566. The authors would like to thank Adam Smith (Boston University) for all the helpful discussions and comments.
    }
  \vspace*{0.4in}

\begin{abstract}
There has been increasing demand for establishing privacy-preserving methodologies for modern statistics and machine learning. 
Differential privacy, a mathematical notion from computer science, is a rising tool offering robust privacy guarantees. 
Recent work focuses primarily on developing differentially private versions of individual statistical and machine learning tasks, with nontrivial upstream pre-processing typically not incorporated. 
An important example is when record linkage is done prior to downstream modeling. 
Record linkage refers to the statistical task of linking two or more datasets of the same group of entities without a unique identifier. 
This probabilistic procedure brings additional uncertainty to the subsequent task. In this paper, we present two differentially private algorithms for linear regression with linked data. 
In particular, we propose a noisy gradient method and a sufficient statistics perturbation approach for the estimation of regression coefficients.
We investigate the privacy-accuracy tradeoff by providing finite-sample error bounds for the estimators, which allows us to understand the relative contributions of linkage error, estimation error, and the cost of privacy. 
The variances of the estimators are also discussed.
We demonstrate the performance of the proposed algorithms through simulations and an application to synthetic data.
\end{abstract}
\end{center}

\vspace*{0.15in}
\hspace{10pt}
  \small	
  \textbf{\textit{Keywords: }} {differential privacy, record linkage, data integration, privacy-preserving record linkage, gradient descent}
  
\copyrightnotice

\section*{Media Summary} 
Differential privacy is a mathematical framework for ensuring the privacy of individuals in datasets. 
{\color{black} 
It mitigates the privacy risk of disclosing sensitive information about individuals within the dataset during data analysis.}
Under such a framework, we are interested in finding the relationship between two variables (via statistical regression) \textit{after} they are linked from two data sources with uncertainties.
A pre-processing procedure of linking datasets is called record linkage, and the uncertainties should be taken into account in the downstream analysis.
In the article, we propose two algorithms that satisfy differential privacy for regression estimation problems with linked data. 
The theoretical results regarding privacy guarantees and statistical accuracy are provided. We demonstrate the performance of the proposed algorithms through simulations and an application.

\section{Introduction}
\label{sec:intro}
Data for the same group of entities are often scattered across different resources, lacking unique identifiers for perfect linkage. 
To conduct statistical modeling or inference on the integrated information, it is necessary to probabilistically link multiple datasets by comparing the common quasi-identifiers (e.g., names, gender, address) as a pre-processing step.  
Such a procedure
is called record linkage (RL), also known as entity resolution, or data matching \citep{Christen2012data}, which is an essential component of data integration in big data analytics \citep{Dong2015}.
Thanks to its wide application in many disciplines such as public health and official statistics,
record linkage has been studied for decades.
Earlier pioneering works include
\citet{Newcombe1959, fellegi1969,Jaro1989AdvancesIR}.
In addition, record linkage is frequently used in current practice.
{\color{black} 
The U.S. Census Bureau has a
long tradition using record linkage methodology for
multiple endeavors.
A current prominent example is the Decennial Census \citep{census2022rl}. 
In this context, record linkage involves using administrative records and other data sources to improve data quality, with efforts underway to construct a comprehensive ``reference database'' including individuals from multiple administrative records. 
}
A recent review paper,
\citet{Olivier2022RL}, provided a comprehensive summary of record linkage. 
Broadly speaking, there are two perspectives regarding record linkage \citep{Chambers2019SmallAE}: (1) the primary viewpoint concerns how to link the records; (2) the secondary perspective is focused on how to propagate the uncertainty to the downstream statistical learning tasks \textit{after} the linkage has been determined. Our focus in this paper will adopt the second of these two perspectives.  

Closely related to record linkage is data privacy. 
In the area of privacy-preserving record linkage (PPRL), two or more private datasets owned by different organizations are linked without revealing the data to one another \citep{Hall2010pprl, Christen2020}.
{\color{black} The outcome of PPRL is the information regarding which pairs or sets are matched.}
PPRL, in turn, is associated with secure multiparty computation (SMPC) in that SMPC techniques are commonly used to solve PPRL problems \citep{Kuzu2013, He2017pprl, Rao2019}. 
PPRL to date only engages in the private linkage process from a primary perspective, without concerning how the linkage uncertainties would impact the downstream analysis. 
On the contrary, the secondary perspective is to modify the statistical tools to account for the linkage uncertainty. Our goal is to incorporate privacy into the secondary perspective of record linkage, which is different from yet complementary to PPRL or SMPC.

Privacy concerns have, if anything, become significantly more exacerbated with the emergence of individual-level big data. Releasing information about a sensitive dataset is subject to a variety of privacy attacks \citep{Dwork2017Exposed}. Therefore, there has been a growing demand for establishing robust privacy-preserving methodologies for modern statistics and machine learning. A mathematical framework proposed by \citet{Dwork2006}, differential privacy (DP), is now considered the gold standard for rigorous privacy protection and has made its way to broad application in industry \citep{apple2017dp, ms2020dp, google2021dp} and the public sector \citep{census2021dp}.
The literature on differential privacy has been flourishing in recent years and the interface of differential privacy and statistics has started to draw increasing attention from the statistics community. 

Recent work on differential privacy focuses primarily on individual statistical and machine learning tasks, with nontrivial upstream pre-processing, such as record linkage, typically not incorporated. 
In this paper, we consider the linear regression problem, i.e.,
\begin{equation}
\label{model:lm}
    \bm y = X\bm \beta + \bm e, \quad \bm e \sim \mathcal{N}(0, \sigma^2 I_n)
\end{equation}
but where $X$ and $\bm y$ are observed in two separate datasets.  As a result, rather than having $X$ and $\bm y$ in hand, we are instead provided with a pair $X$ and $\bm z$.  Here $\bm z$ is a permutation of $\bm y$ resulting from record linkage performed by an external entity, who also supplies a minimum amount of information about the linkage accuracy.  
In the regression procedure, we take into account the linkage uncertainty as well as offer differential privacy guarantees. 
As shown in Figure \ref{Flowchart} which depicts the pipeline of the problem we consider, we assume that an external analyst conducts record linkage a priori.
From there, we aim to devise a private estimator for the regression coefficients of ultimate interest with the help of differential privacy.

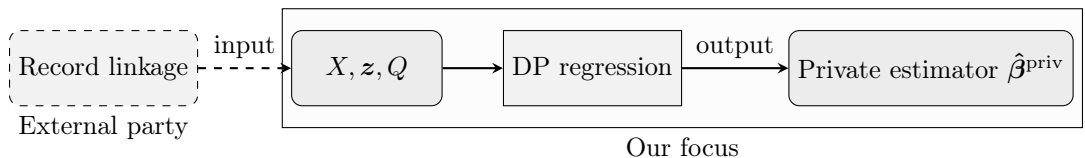
\begin{figure}[H]
\centering
\begin{tikzpicture}
    \node (rl) at (-3.5,1)  [rectangle,dashed, rounded corners, minimum width=2.5cm, minimum height=1cm, text centered, draw=black, label=below:External party, fill=gray!10] {Record linkage};
    \node (input) at (0,1)  [rectangle, rounded corners, minimum width=2cm, minimum height=1cm, text centered, draw=black,  fill=gray!15] {$ X, \bm z, Q$};
    \node (process) at (3,1)  [rectangle, minimum width=2cm, minimum height=1cm, text centered, draw=black,  fill=gray!10] {DP regression};
    \node (output) at (7.5,1)  [rectangle,  rounded corners, minimum width=2cm, minimum height=1cm, text centered, draw=black,  fill=gray!15] {Private estimator ${\bm{\hat\beta}^{\text{priv}}}$};
    \draw [arrow] (input) -- node [midway,above] {} (process);
    \draw [arrow] (process) -- node [midway,above] {output} (output);
    \draw [dashed, arrow] (rl) -- node [midway,above] {input} (input);
    
    \begin{scope}[on background layer]
    \node [fit=(input) (output), rectangle, fill=gray!5, draw=black, fill opacity=0.5, label=below:Our focus, minimum height=45, minimum width = 290] {};
    \end{scope}
    
\end{tikzpicture}
\caption{{\color{black} Pipeline of private regression with linked data.}}
\label{Flowchart}
\end{figure}

Specifically, we propose two algorithms for linear regression after record linkage to meet differential privacy: (1) post-RL noisy gradient descent (NGD), and (2) post-RL sufficient statistics perturbation (SSP). 
Our work builds on the seminal work by \citet{lahiri2005} where an estimator is proposed for linear regression with linked data in a non-privacy-aware setting. We construct a private estimator, ${\bm{\hat\beta}^{\text{priv}}}$, by deploying differential privacy tools to achieve privacy protections.
To the best of our knowledge, our work is the first one in the literature to consider a statistical model after record linkage in a privacy-aware setting. 

The two proposed algorithms also extend the noisy gradient method \citep{Bassily2014PrivateER} and 
{\color{black} the ``Analyze Gauss'' algorithm \citep{DworkTT014}}, which are applied to linear regression, to additionally handle the presence of linkage errors.
Prior works \citep{Sheffet17, wang2018, Bernstein2019,Cai2019TheCO, Alabi2022} on differentially private linear regression do not consider possible record linkage pre-processing. 
If the data are linked beforehand, directly applying their algorithms to the imperfectly linked data is not ideal.
It is well known that overlooking the linkage errors leads to substantial bias even with a high linkage accuracy \citep{Neter1965TheEO, Scheuren1993}. 
{\color{black} Figure \ref{fig:rlexample} 
showcases a toy example of record linkage, where mismatches, if treated as true, change the sign of the slope estimate.
}
Our illustrative application later in the paper confirms this, where around 90\% of the records are correctly linked, and the estimators ignoring linkage errors end up with large biases.
\begin{figure}[H]
\begin{tikzpicture}
\centering
  \node[draw,rectangle,label=below:Dataset A,] at (0,0) (leftTable) {
    \begin{tabular}{cc|c}
      \hline
      first name & last name & $x$ \\
      \hline
      Shurong & Lin & 1 \\
      \hline
      Erik & K & 2 \\
      \hline
      Elliot & P & 3 \\
      \hline
      S & Li & 4 \\
      \hline
    \end{tabular}
  };
  
  \node[draw,rectangle,label=below:Dataset B] at (7,0) (rightTable) {
    \begin{tabular}{cc|c}
      \hline
      first name & last name & $y$ \\
      \hline
      S & L & 2 \\
      \hline
      Eric & K & 4 \\
      \hline 
      Eliot & P & 6 \\
      \hline
      Sharon & Li & 8 \\
      \hline
    \end{tabular}
  };

  \draw[arrow,dashed] (2.5,0.5) -- (4.5,-1);
  \draw[arrow,dashed] (2.5, -1) -- (4.5, 0.5);
  \draw[arrow] (2.5, 0) -- (4.5, 0);
  \draw[arrow] (2.5, -0.5) -- (4.5, -0.5);
\end{tikzpicture}
\caption{{\color{black} A toy example of record linkage with mismatches (dashed links). \\
The true dataset $(X, \bm y)$  is $\{(1,2),(2,4),(3,6),(4,8)\}$, yielding a slop estimate $\hat\beta_1=2$, while the linked set $(X, \bm z)$ is given by $\{(1,8),(2,4),(3,6),(4,2)\}$, yielding  $\hat\beta_1=-1.6$.}}
\label{fig:rlexample}
\end{figure}

{\color{black} 
Accompanying the estimators resulting from our algorithms, we provide mean-squared error bounds under typical regularity assumptions and record linkage schemes. 
When no linkage errors are present (i.e., a special case in our scenario), our result in Theorem \ref{thm:ngd-bound} improves upon the noisy gradient method proposed in \citet{Cai2019TheCO} by using zero-concentrated differential privacy (zCDP, \citet{Bun2016ConcentratedDP}) to enable tighter bounds on privacy cost (see Lemma \ref{lem:tighter-comp}). 
Additionally, we have presented (approximate) theoretical variances for ${\bm{\hat\beta}^{\text{priv}}}$ resulting from both proposed algorithms. There appear to be very few other works that have addressed the issue of uncertainty. 
Two that we are aware of are \citet{Alabi}, who provided confidence bounds for the univariate case, and \citet{Sheffet17}, who provided confidence intervals dependent on differential privacy noise. Our work focuses on the multivariate case and appears to be the first to directly work on exact variances rather than relying on bounds.
}

The remainder of this paper is organized as follows. Section \ref{sec:Preliminaries} provides preliminaries on linear regression with linked data and differential privacy. 
We propose our two algorithms in Section \ref{sec:DP-algo} and present the relevant theoretical results 
in Section \ref{sec:theoreical}. 
In Section \ref{sec:numerical}, we conduct a series of simulation studies and an application to synthetic data. Section \ref{sec:discussion} concludes and discusses future work. Complete proofs of all theorems can be found in the supplementary materials.


\section{Preliminaries}
\label{sec:Preliminaries}
In this section, we review the background results of linear regression after record linkage upon which we build our work, and fundamental concepts from differential privacy.
Related work on linear regression with linked data and record linkage with privacy awareness are discussed.

\subsection{Linear Regression with Record Linkage}
\label{subsec:lrwrl}
Let $(X, \Phi_{X}) $ and $ (\bm y, \Phi_{\bm y}) $ be two datasets that refer to the same group of $n$ entities, with unknown one-to-one correspondence. The quasi-identifiers $\Phi_{X}$ and $\Phi_{\bm y}$ are used to perform the linkage procedure. Let $(X, \bm z)$ be the linked data where $\bm z$ is a permutation of $\bm y$. Consider the following model for $\bm z$:
\begin{equation}
\label{model:z}
    \mathbb{P}(z_i = y_j) = q_{ij},\quad i,j = 1, \dots, n,
\end{equation}
then $\sum_{j=1}^n q_{ij} = 1$ for all $i$ and $ \sum_{i=1}^n q_{ij} = 1$ for all $j$.
Thus, $q_{ii}$ is the probability of the $i$th record being linked correctly.
Let $Q = (q_{ij})$, which we call the matching probability matrix (MPM), a doubly stochastic matrix. The matrix $Q$ can be estimated, for example, through bootstrapping \citep{Chipperfield2015, Chipperfield2020}.
In some cases, estimating $Q$ can require inference on only a single parameter (e.g., in the exchangeable linkage error (ELE) model described in Section \ref{subsubsec:rl-schemes}).

For the fixed-design homoskedastic linear model (\ref{model:lm}), when inference is done after record linkage based on $(X,{\bm z})$,
\citet{lahiri2005} proposed an unbiased estimator
\begin{equation}
    {\hat {\bm \beta}^{\text{RL}}} = (W^\top  W)^{-1} W^\top  \bm z,
\label{est: rlest}
\end{equation}
where $W = QX$.
{\color{black} 
Let $\bm w_i$ be the $i$-th row vector of $W$, then $\bm w_i = \sum_{j=1}^nq_{ij}\bm x_j$. 
Note that $\mathbb{E}(z_i) = \bm{w}_i^\top \bm{\beta}$, where the expectation is taken over both linkage uncertainties and $\bm y$.
Transforming $X$ into $W$ offers bias correction for regression estimation after record linkage.
}

In addition, the variance of ${\hat {\bm \beta}^{\text{RL}}}$ is given by
\begin{equation}
    \Sigma^{\text{RL}} \stackrel{def}{=} \operatorname{Var}({\hat {\bm \beta}^{\text{RL}}}) = (W^\top  W)^{-1} W^\top  \Sigma_{\bm z} W (W^\top  W)^{-1},
    \label{eq:cov-rl}
\end{equation}
where $\Sigma_{\bm z} \stackrel{def}{=} \operatorname{Var}(\bm z)$.  \citet{lahiri2005} provide the following characterization of the first two moments of $\bm z$.
\begin{lemma}[Theorem A.1, \cite{lahiri2005}]
\label{lem:cov-z}
Under the model described by (\ref{model:lm}) and (\ref{model:z}), we have for $i,j = 1, \dots, n$
\begin{itemize}
    \item $\mathbb{E}(z_i) = \bm w_i^\top \bm\beta$;
    \item $
    \operatorname{Var}(z_i) = \sigma^2 + \bm\beta^\top  A_i \bm\beta \mbox { with }
    A_i = \sum_{j=1}^n q_{ij}(\bm x_j - \bm w_i) (\bm x_j - \bm w_i)^\top
    $; 
    \item $
    \operatorname{Cov}(z_i, z_j) = \bm\beta^\top  A_{ij} \bm\beta \mbox { with }
    A_{ij} = \sum_{u=1}^n \sum_{v \neq u}^n q_{iu}q_{jv}(\bm x_i - \bm w_u)(\bm x_j - \bm w_v)^\top
$.
\end{itemize}
\end{lemma}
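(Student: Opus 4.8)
The plan is to expose the two independent sources of randomness in $\bm z$ --- the linkage and the regression noise --- and to keep them separate throughout. Writing $y_u = \bm x_u^\top\bm\beta + e_u$ and letting $\delta_{iu} = \mathbf{1}(z_i = y_u)$ denote the linkage indicators, I would decompose
\begin{equation*}
z_i = \sum_{u=1}^n \delta_{iu} y_u = S_i + E_i, \qquad S_i := \sum_{u=1}^n \delta_{iu}\,\bm x_u^\top\bm\beta, \quad E_i := \sum_{u=1}^n \delta_{iu}\, e_u ,
\end{equation*}
so that $S_i$ is the (randomly selected) signal and $E_i$ the (randomly selected) noise. The indicators satisfy $\mathbb{E}(\delta_{iu}) = q_{iu}$, $\delta_{iu}^2 = \delta_{iu}$, and $\delta_{iu}\delta_{iv} = 0$ for $u \neq v$ (each $z_i$ selects exactly one $y_u$), and, crucially, $\delta_{iu}\delta_{ju} = 0$ for $i \neq j$ (distinct records cannot be linked to the same $y_u$, since $\bm z$ is a permutation of $\bm y$). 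The single modeling assumption I would state explicitly is that the linkage, which operates on the quasi-identifiers $\Phi_X,\Phi_{\bm y}$, is independent of $\bm e$; this lets every expectation factor into a linkage part and a noise part.

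Given this decomposition the first two claims are short. Since $\mathbb{E}(e_u)=0$ and the linkage is independent of $\bm e$, we get $\mathbb{E}(E_i)=0$, hence $\mathbb{E}(z_i)=\mathbb{E}(S_i)=\sum_u q_{iu}\bm x_u^\top\bm\beta = \bm w_i^\top\bm\beta$, using $\bm w_i = \sum_u q_{iu}\bm x_u$. For the variance I would split $\operatorname{Var}(z_i) = \operatorname{Var}(S_i) + 2\operatorname{Cov}(S_i,E_i) + \operatorname{Var}(E_i)$. The cross term vanishes because $\mathbb{E}(\delta_{iu}e_u)=q_{iu}\mathbb{E}(e_u)=0$; the noise term collapses via $\delta_{iu}^2=\delta_{iu}$ to $\operatorname{Var}(E_i)=\sum_u q_{iu}\sigma^2=\sigma^2$; and the signal term is the $q_{iu}$-weighted variance of $\bm x_u^\top\bm\beta$, which completing the square turns into $\sum_u q_{iu}\bigl((\bm x_u-\bm w_i)^\top\bm\beta\bigr)^2 = \bm\beta^\top A_i\bm\beta$. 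Adding these yields $\operatorname{Var}(z_i) = \sigma^2 + \bm\beta^\top A_i\bm\beta$.

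The covariance is where the argument has genuine content, and I expect it to be the main obstacle. Decomposing $\operatorname{Cov}(z_i,z_j)$ for $i\neq j$ in the same way, the two mixed terms vanish (mean-zero noise, independence) and, more importantly, the noise--noise term $\operatorname{Cov}(E_i,E_j)=\sigma^2\sum_u \mathbb{E}(\delta_{iu}\delta_{ju})$ is identically zero by the permutation identity $\delta_{iu}\delta_{ju}=0$. This is exactly why no $\sigma^2$ survives in the covariance, in contrast to the variance. What remains is $\operatorname{Cov}(z_i,z_j)=\operatorname{Cov}(S_i,S_j)$, a pure linkage quantity equal to $\sum_{u\neq v}\mathbb{P}(z_i=y_u,z_j=y_v)\,(\bm x_u^\top\bm\beta)(\bm x_v^\top\bm\beta) - (\bm w_i^\top\bm\beta)(\bm w_j^\top\bm\beta)$. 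The difficulty is that this involves the \emph{second-order} linkage probabilities $\mathbb{P}(z_i=y_u,z_j=y_v)$ rather than the marginals $q_{iu}$ alone; handling these under the permutation coupling, and expressing them through the entries of $Q$ as the model prescribes, is the crux. Once they are in hand, I would subtract the product of means --- rewritten as an unrestricted double sum so it merges term by term with the restricted $u\neq v$ sum --- and regroup the summand into the centered outer products that define $A_{ij}$. The delicate bookkeeping is precisely the omitted diagonal forced by the permutation constraint, which is what distinguishes the true second-order structure from a naive product of marginals.
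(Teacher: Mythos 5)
First, a point of reference: the paper does not prove this lemma at all --- it is imported verbatim as Theorem A.1 of \citet{lahiri2005}, and no derivation appears in the supplementary materials. There is therefore no in-paper argument to compare against, and your attempt has to be judged on its own merits.

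Your decomposition $z_i = S_i + E_i$ via the linkage indicators $\delta_{iu}$, together with the explicit assumption that the linkage is independent of $\bm e$, disposes of the first two bullets correctly and completely: $\mathbb{E}(z_i)=\bm w_i^\top\bm\beta$ is immediate, $\operatorname{Var}(E_i)=\sigma^2$ follows from $\delta_{iu}^2=\delta_{iu}$ and $\delta_{iu}\delta_{iv}=0$, the cross term vanishes, and completing the square in the $q_{i\cdot}$-weighted signal variance yields $\bm\beta^\top A_i\bm\beta$. The genuine gap is exactly where you place it, and you do not close it. The covariance requires the pairwise probabilities $\mathbb{P}(z_i=y_u,\,z_j=y_v)$, and the model in (\ref{model:z}) specifies only the marginals $q_{iu}$; no amount of bookkeeping recovers the joint law from $Q$. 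The permutation constraint does give $\mathbb{P}(z_i=y_u,\,z_j=y_u)=0$, which, as you note, is why no $\sigma^2$ survives --- but it does not give $\mathbb{P}(z_i=y_u,\,z_j=y_v)=q_{iu}q_{jv}$ for $u\neq v$. Indeed that assignment is not even a consistent joint distribution: its total mass is $\sum_{u}\sum_{v\neq u}q_{iu}q_{jv}=1-\sum_u q_{iu}q_{ju}<1$, and it fails to marginalize back to $q_{iu}$. The cited result rests on precisely this quasi-independence treatment of the indicator vectors of distinct records, which is an additional modeling approximation in \citet{lahiri2005} rather than a consequence of the permutation coupling. Without stating and invoking that assumption, the claimed form of $A_{ij}$ cannot be derived, so your third bullet is a correct diagnosis of the obstacle but not a proof of the statement.
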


Note that $\Sigma_{\bm z}$ involves the true coefficients $\bm\beta$ and $\Sigma_{\bm z} = \sigma^2 I_d + h(\bm\beta, Q, X)$ where $h(\bm\beta, Q, X)$ is a function of $\bm\beta, Q, X$ as elaborated in Lemma~\ref{lem:cov-z}. Compared to the covariance of $\bm y$, $\Sigma_{\bm z}$ has an additional component $h(\bm\beta, Q, X)$ due to the uncertainty of record linkage.

\subsubsection{Structural Schemes of MPM}
\label{subsubsec:rl-schemes}
The matching probability matrix (MPM) $Q$ is generally assumed to have a simple structure. Two schemes used commonly in the literature are as follows.
\paragraph{\textbf{Blocking Scheme}} 
It is assumed that the MPM is a block diagonal matrix, which means the true matches only happen within blocks. Blocking significantly reduces the number of pairs for comparison and allows scalable record linkage. This scheme is used in almost all real-world applications, and different methods for blocking have been developed \citep{Christen2012data, Steorts2014a, Christophides2020an}.

\paragraph{\textbf{Exchangeable Linkage Errors (ELE) Model}}
The ELE model \citep{Chambers2009RegressionAO} assumes homogeneous linkage accuracy and errors:
    \begin{equation}
    \label{model:ele}
    \begin{aligned}
        & \mathbb{P}(\text{correct linkage}) = q_{ii} = \gamma,\\
        & \mathbb{P}(\text{incorrect linkage}) = q_{ij} = \frac{1-\gamma}{n-1} \text{ for } i \neq j.
    \end{aligned}       
    \end{equation}
The ELE model has been adopted in recent works, such as \citet{Chambers2019SmallAE,Chambers2022}, for various estimation problems.
Even though (\ref{model:ele}) may oversimplify the reality, it is a representative model for a secondary analyst who has minimum information about the linkage quality. 
When blocking is used, 
the homogeneous linkage
accuracy assumption is imposed within individual blocks. In other words, it still allows heterogeneous linkage accuracy between blocks.

\subsection{Differential Privacy}
Let $\mathcal{X}$ be some data space, and $D, D' \in \mathcal{X}^n$ be two neighboring datasets of size $n$ which only differ in one record. Such a relation is denoted by $D\sim D'$. 
\begin{definition}[$(\epsilon,\delta)$-DP, \cite{DworkR14}] For $\epsilon > 0, \delta \geq 0$, a randomized algorithm $A$: $\mathcal{X}^n \rightarrow  \mathcal{R}$ is  
$(\epsilon,\delta)$-differentially private if, for all  $D  \sim D' \in \mathcal{X}^n$ and any $\mathcal{O} \subseteq \mathcal{R}$,
\begin{equation}
\label{def:dp}
    \mathbb{P}(A(D )\in \mathcal{O}) \leq e^\epsilon \cdot \mathbb{P}(A(D')\in \mathcal{O}) + \delta.
\end{equation}
\end{definition}

The expression (\ref{def:dp}) controls the distance between the output distributions on two neighboring datasets through the privacy budget $\epsilon$ and $\delta$.
Intuitively, differential privacy ensures that $D$ is not distinguishable from $D'$ based on the outputs. Thus, $\epsilon$ should be small enough for the privacy level to be meaningful. Typically, $\epsilon \in (10^{-3},10)$ and $\delta = o(1/n)$.

Differential privacy enjoys the following properties that facilitate the construction of differentially private algorithms.
\begin{proposition}[Basic composition, \cite{DworkR14}]
\label{prop:comp}
    If $f_1$ is $(\epsilon_1,\delta_1)$-DP and $f_2$ is $(\epsilon_2,\delta_2)$-DP,
    then $f: = (f_1, f_2)$ is $(\epsilon_1 + \epsilon_2,\delta_1 +\delta_2)$-DP.
\end{proposition}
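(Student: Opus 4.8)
The plan is to verify the defining inequality \eqref{def:dp} for the product mechanism $f = (f_1, f_2)$ directly. Fix a pair of neighboring datasets $D \sim D'$ and an arbitrary measurable event $\mathcal{O} \subseteq \mathcal{R}_1 \times \mathcal{R}_2$, where $\mathcal{R}_1, \mathcal{R}_2$ denote the output ranges of $f_1, f_2$. For each first-coordinate value $r_1$, write $\mathcal{O}_{r_1} = \{r_2 : (r_1, r_2) \in \mathcal{O}\}$ for the corresponding slice. Since $f_1$ and $f_2$ are run independently on the input, the joint law factorizes, and I can express
$$\mathbb{P}(f(D) \in \mathcal{O}) = \int \mathbb{P}(f_2(D) \in \mathcal{O}_{r_1}) \, d\mu_D(r_1),$$
where $\mu_D$ is the law of $f_1(D)$; the task is then to transport this quantity from $D$ to $D'$ one coordinate at a time, accumulating the two privacy budgets additively.

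To isolate the main idea I would first treat the pure case $\delta_1 = \delta_2 = 0$. Here each mechanism admits a pointwise likelihood-ratio bound: the density of $f_i(D)$ is at most $e^{\epsilon_i}$ times that of $f_i(D')$ at every point. Applying this bound to the inner integrand (for $f_2$) and then to the outer measure (for $f_1$) multiplies the two factors, giving $\mathbb{P}(f(D)\in\mathcal{O}) \le e^{\epsilon_1+\epsilon_2}\,\mathbb{P}(f(D')\in\mathcal{O})$, which is exactly the claim with additive $\epsilon$ and zero $\delta$.

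For general $\delta_i > 0$ the subtlety is that $(\epsilon_i,\delta_i)$-DP is a statement about events rather than a pointwise density bound, so I would use a bad-set decomposition. For each mechanism I would split its output space into a \emph{good} region, on which the likelihood ratio between the $D$- and $D'$-laws is bounded by $e^{\epsilon_i}$, and a \emph{bad} region carrying probability at most $\delta_i$ under $f_i(D)$; the existence of such a decomposition is the standard reformulation of approximate DP as the hockey-stick divergence bound $\int (dP - e^{\epsilon_i}\,dQ)_+ \le \delta_i$ between the $D$- and $D'$-output laws $P,Q$ of $f_i$. Bounding the contribution of the good regions by the multiplicative argument above, and crudely bounding each bad region's contribution by its probability, the excess terms contribute at most $\delta_1 + \delta_2$ in total. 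The main obstacle is precisely this $\delta$-bookkeeping: one must check that the slack from the two bad events is merely summed rather than compounded, which holds because each $\delta_i$ enters linearly and is accounted for before the multiplicative $e^{\epsilon_i}$ factors are applied. Combining the good-region product bound with the summed bad-region slack yields \eqref{def:dp} for $f$ with parameters $(\epsilon_1+\epsilon_2,\, \delta_1+\delta_2)$; the same steps apply verbatim to the adaptive variant, where $f_2$ may depend on the output of $f_1$, by performing the argument conditionally on $r_1$.
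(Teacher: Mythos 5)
The paper itself offers no proof of this proposition: it is stated as a known property, quoted directly from Dwork and Roth (2014), so there is no in-paper argument to compare yours against. Your blind proof is essentially the standard textbook argument and is correct in substance. The pure case is handled exactly as one should, by multiplying pointwise likelihood-ratio bounds across the two coordinates of the product mechanism. For the approximate case, the one point I would tighten is your description of the ``bad region'': $(\epsilon_i,\delta_i)$-DP does not give a subset of the output space whose probability under $f_i(D)$ is at most $\delta_i$; rather, the hockey-stick bound $\int (dP - e^{\epsilon_i}\,dQ)_+ \le \delta_i$ lets you decompose the \emph{measure} $P$ as $\min(P, e^{\epsilon_i}Q)$ plus an excess component of total mass at most $\delta_i$, and it is the excess component's contribution that you bound crudely by its mass. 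With that reading, your bookkeeping is exactly right, and you correctly identify the only genuinely delicate step: the $\delta_1$ excess must be peeled off \emph{before} the $e^{\epsilon_1}$ factor is applied to the outer integral, since doing the bounds in the naive order inflates the slack to $e^{\epsilon_2}\delta_1 + \delta_2$ instead of $\delta_1+\delta_2$. Your closing remark that the argument goes through verbatim for adaptive composition (conditioning on $r_1$) is also correct and is in fact the form in which the paper implicitly uses the result when composing the $T$ gradient steps of Algorithm 1.
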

\begin{proposition}[Post-processing, \cite{DworkR14}]
    If $f$ is $(\epsilon,\delta)$-DP, for any deterministic mapping $g$ that takes $f(D)$ as an input, then $g(f(D))$ is $(\epsilon,\delta)$-DP.
\end{proposition}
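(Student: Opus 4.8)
The plan is to reduce the differential-privacy guarantee of the composed map $g \circ f$ directly to that of $f$ by passing to preimages. First I would fix arbitrary neighboring datasets $D \sim D'$ and an arbitrary (measurable) output set $\mathcal{O}' \subseteq \mathcal{R}'$ in the range of $g$, with the goal of verifying the defining inequality
\[
\mathbb{P}(g(f(D)) \in \mathcal{O}') \leq e^\epsilon \, \mathbb{P}(g(f(D')) \in \mathcal{O}') + \delta .
\]

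The key step is to exploit that $g$ is deterministic, so that the event $\{g(f(D)) \in \mathcal{O}'\}$ coincides exactly with the event $\{f(D) \in \mathcal{O}\}$, where $\mathcal{O} := g^{-1}(\mathcal{O}') = \{r \in \mathcal{R} : g(r) \in \mathcal{O}'\}$ is the preimage. This gives the identity $\mathbb{P}(g(f(D)) \in \mathcal{O}') = \mathbb{P}(f(D) \in \mathcal{O})$, and the same identity holds verbatim with $D$ replaced by $D'$. The whole proof hinges on recognizing that no information about the true input is created downstream of $f$: any event defined through the output of $g$ is already an event defined through the output of $f$.

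I would then apply the $(\epsilon,\delta)$-DP property of $f$ to this particular set $\mathcal{O}$. Since the definition in (\ref{def:dp}) is required to hold for \emph{all} output sets, it holds for $\mathcal{O}$ in particular, yielding $\mathbb{P}(f(D) \in \mathcal{O}) \leq e^\epsilon \, \mathbb{P}(f(D') \in \mathcal{O}) + \delta$. Substituting the two preimage identities back in produces the desired inequality for $g \circ f$, which completes the argument since $D \sim D'$ and $\mathcal{O}'$ were arbitrary.

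The one technical point to handle carefully is measurability: I must ensure that $\mathcal{O} = g^{-1}(\mathcal{O}')$ is a legitimate event to which the guarantee of $f$ can be applied. This is precisely where determinism — more exactly, measurability — of $g$ is used, since the preimage of a measurable set under a measurable map is measurable, so no difficulty arises. I would remark that the same reasoning extends to a randomized $g$ by conditioning on the internal randomness of $g$ and averaging over it, but for the deterministic mapping stated here the preimage reduction is the entire proof, and it is essentially free rather than being a genuine obstacle.
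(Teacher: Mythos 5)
Your preimage argument is correct and is the standard proof of post-processing for a deterministic map, which is exactly the argument in the cited reference (the paper itself states this proposition without proof, deferring to Dwork and Roth). The reduction $\mathbb{P}(g(f(D)) \in \mathcal{O}') = \mathbb{P}(f(D) \in g^{-1}(\mathcal{O}'))$ followed by applying the $(\epsilon,\delta)$ guarantee of $f$ to the preimage set is complete, and your measurability caveat is the only technical point worth flagging.
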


Generally, a differentially private algorithm is constructed by adding random noise from a certain structured distribution, such as the Laplace or Gaussian distributions. 
A notion central to the amount of noise we add is the sensitivity of the estimation function we desire to release privately.
\begin{definition}[$\ell_2$-sensitivity]
Let $f: \mathcal{X}^n \rightarrow \mathbb{R}^d$ be an algorithm. The $\ell_2$-sensitivity of $f$ is defined as
\begin{equation}
\Delta_f = \max_{D\sim D' \in \mathcal{X}^n } \|f(D)-f(D')\|_2.
\end{equation}
\end{definition}
The sensitivity of a function characterizes how much the output would change if one record in the dataset changes. To achieve $(\epsilon, \delta)$-DP, the amount of noise we need depends on both the budget and the sensitivity.
The Gaussian Mechanism is a canonical example that will be employed herein, which does just that. 
\begin{lemma}[Gaussian mechanism, \citet{DworkR14}]
Let $0<\epsilon <1$ and $\delta > 0$. For an algorithm $f$ on the dataset $D$, the Gaussian Mechanism $A(\cdot)$ defined as
\begin{equation}
    A(D) := f(D) + u,
\end{equation}
where $u\sim \mathcal{N} (0, 2\ln(1.25/\delta) ( \Delta_f/ \epsilon)^2)$,
is $(\epsilon,\delta)$-DP.
\label{lem: gausmech}
\end{lemma}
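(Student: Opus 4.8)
The plan is to prove the statement via the standard privacy-loss argument, reducing the $(\epsilon,\delta)$ guarantee to a single Gaussian tail bound. Fix two neighboring datasets $D \sim D'$ and write $p_D$ and $p_{D'}$ for the densities of $A(D)$ and $A(D')$ on $\mathbb{R}^d$; since $A$ only adds isotropic noise $u \sim \mathcal{N}(0,\sigma^2 I_d)$ with $\sigma^2 = 2\ln(1.25/\delta)(\Delta_f/\epsilon)^2$, these are translated Gaussians centered at $f(D)$ and $f(D')$. The central object is the privacy loss $L(o) = \ln\!\big(p_D(o)/p_{D'}(o)\big)$. Because the isotropic Gaussian is rotationally invariant and $L$ depends on $o$ only through its component along $\Delta := f(D') - f(D)$, I would first reduce to the one-dimensional problem in which $f$ is shifted by $\|\Delta\|_2 \le \Delta_f$; since the loss is monotone in the shift, it suffices to treat the worst case $\|\Delta\|_2 = \Delta_f$.

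Next I would establish the reduction that it is enough to prove the single tail inequality $\mathbb{P}_{o\sim A(D)}\big(L(o) > \epsilon\big) \le \delta$. Given this, for any measurable $\mathcal{O}\subseteq\mathcal{R}$ split it along the ``bad'' set $B=\{o: L(o)>\epsilon\}$: the mass $\mathbb{P}(A(D)\in\mathcal{O}\cap B)$ is at most $\mathbb{P}(A(D)\in B)\le\delta$, while on $B^c$ the pointwise bound $p_D(o)\le e^{\epsilon}p_{D'}(o)$ gives $\mathbb{P}(A(D)\in\mathcal{O}\cap B^c)\le e^{\epsilon}\,\mathbb{P}(A(D')\in\mathcal{O})$. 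Adding the two pieces yields exactly $\mathbb{P}(A(D)\in\mathcal{O})\le e^{\epsilon}\mathbb{P}(A(D')\in\mathcal{O})+\delta$, which is the definition of $(\epsilon,\delta)$-DP.

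It then remains to verify the tail inequality. A direct computation of the Gaussian log-density ratio gives, in the reduced one-dimensional case with $o=f(D)+u$,
\begin{equation}
L = \frac{\Delta_f^2 - 2\Delta_f u}{2\sigma^2},
\end{equation}
so that the event $\{L>\epsilon\}$ is equivalent to a one-sided tail event $\{u < \tfrac{\Delta_f}{2} - \tfrac{\epsilon\sigma^2}{\Delta_f}\}$ for $u\sim\mathcal{N}(0,\sigma^2)$. By symmetry this probability equals $\mathbb{P}\big(Z > s\big)$ for $Z\sim\mathcal{N}(0,1)$ and a threshold $s = \tfrac{\epsilon\sigma}{\Delta_f} - \tfrac{\Delta_f}{2\sigma}$, which one checks is positive in the stated regime.

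The main obstacle is this final estimate: I would bound the tail using the Mills-ratio inequality $\mathbb{P}(Z>s)\le \tfrac{1}{s\sqrt{2\pi}}e^{-s^2/2}$, substitute $\sigma = \sqrt{2\ln(1.25/\delta)}\,\Delta_f/\epsilon$, and verify that the resulting expression is at most $\delta$. This is precisely where the hypothesis $0<\epsilon<1$ and the specific constant $1.25$ enter: they are exactly what is needed to absorb the lower-order contributions (the $-\Delta_f/(2\sigma)$ shift in $s$ and the $1/(s\sqrt{2\pi})$ prefactor) so that the exponent is driven down to $\ln(\delta/1.25)$. The algebra is routine but delicate, and forcing the constants to close the inequality---rather than the conceptual structure---is the part that requires genuine care.
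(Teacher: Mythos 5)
The paper does not prove this lemma; it is quoted verbatim from the cited reference (Dwork and Roth, 2014, Theorem A.1), and no proof appears in the supplementary material. Your sketch is correct and reproduces essentially the standard argument from that reference: reduction to one dimension along $f(D')-f(D)$, the decomposition of $\mathcal{O}$ over the bad event $\{L>\epsilon\}$ to convert a tail bound on the privacy loss into the $(\epsilon,\delta)$ guarantee, and the final Gaussian tail estimate where the hypothesis $\epsilon<1$ and the constant $1.25$ are consumed. The only part left unverified is the closing numerical inequality, which you correctly flag as the delicate step; as written the proposal is a faithful outline of the known proof rather than a new route.
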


{\color{black} 
Combining the basic composition rule and the Gaussian mechanism, for a sequence of functions $(f_1,f_2,\dots,f_T)$, let
\begin{equation*}
u_t \sim \mathcal{N}\left(0, \frac{2T^2\Delta_t^2\ln(1.25T/\delta)}{\epsilon^2}\right),
\end{equation*}
where $\Delta_t$ is the $\ell_2$-sensitivity of $f_t$. Then, $A := (f_1+u_1,f_2+u_2,\dots,f_t+u_T)$ satisfies $(\epsilon, \delta)$-DP. 
However, as $T$ increases, this construction tends to add more noise than necessary due to the loose composition. 
Instead, we could utilize zero-concentrated differential privacy (zCDP, \citet{Bun2016ConcentratedDP}), another variant of DP, to achieve tighter composition for $(\epsilon,\delta)$-DP. The following Lemma essentially captures the results from \citet{Bun2016ConcentratedDP}, formulated for our purposes.

\begin{lemma}[Better composition for $(\epsilon,\delta)$-DP via zCDP]
Let $\epsilon>0, \delta>0$.
For a sequence of functions $(f_1,f_2,\dots,f_T)$, let 
\begin{equation}
    u_t \sim \mathcal{N}\left(0, \frac{T\Delta_t^2}{2\rho}\right),
   \label{eq:tighter-comp-rho} 
\end{equation}
with $\rho:= \epsilon+2\ln(1/\delta)-2\sqrt{(\epsilon+\ln(1/\delta))\ln(1/\delta)}$.
Then, the randomized algorithm $A:=(f_1+u_1,f_2+u_2,\dots,f_T+u_T)$
satisfies $(\epsilon, \delta)$-DP. If $\epsilon \leq \frac{8\ln(1/\delta)}{2+\sqrt{2}}$, it suffices to have
\begin{equation}
    u_t \sim \mathcal{N}\left(0, \frac{4T\Delta_t^2\ln(1/\delta)}{\epsilon^2}\right).
\label{eq:tighter-comp}
\end{equation}
\label{lem:tighter-comp}
\end{lemma}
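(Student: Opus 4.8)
The plan is to treat this as a two-layer argument: first push the Gaussian noise into a zero-concentrated differential privacy (zCDP) guarantee and compose \emph{within} that framework, and then convert the composed zCDP guarantee back to $(\epsilon,\delta)$-DP, having chosen $\rho$ so that the conversion lands exactly on the target budget. The conceptual ingredients are all standard and can be cited; the genuine work is algebraic.

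First I would invoke the three facts about zCDP from \citet{Bun2016ConcentratedDP}: (i) adding $\mathcal{N}(0,\sigma^2)$ noise to a function of $\ell_2$-sensitivity $\Delta$ yields $\tfrac{\Delta^2}{2\sigma^2}$-zCDP; (ii) zCDP parameters add under composition; and (iii) any $\rho$-zCDP mechanism is $\bigl(\rho + 2\sqrt{\rho\ln(1/\delta)},\,\delta\bigr)$-DP for every $\delta>0$. Applying (i) to $f_t + u_t$ with $\sigma_t^2 = T\Delta_t^2/(2\rho)$ gives precisely $\tfrac{\rho}{T}$-zCDP for each coordinate, since $\tfrac{\Delta_t^2}{2\sigma_t^2}=\tfrac{\rho}{T}$; by (ii) the full release $A$ is $T\cdot\tfrac{\rho}{T}=\rho$-zCDP.

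Next I would close the loop with (iii) and check that the stated $\rho$ makes the conversion exact. Writing $a=\ln(1/\delta)$ and solving $g(\rho):=\rho + 2\sqrt{\rho a}=\epsilon$ as a quadratic in $s=\sqrt{\rho}$ gives the positive root $s=\sqrt{a+\epsilon}-\sqrt{a}$, hence $\rho = \epsilon + 2a - 2\sqrt{a(a+\epsilon)}$, which is exactly the value in \eqref{eq:tighter-comp-rho}. Thus $A$ is $(\epsilon,\delta)$-DP, establishing the first claim.

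For the simplified variance \eqref{eq:tighter-comp}, I would note that it corresponds to the choice $\rho'=\epsilon^2/(8a)$, because $T\Delta_t^2/(2\rho')$ then equals the stated variance $4T\Delta_t^2\ln(1/\delta)/\epsilon^2$. Since $g$ is strictly increasing, replacing $\rho$ by any smaller $\rho'\le\rho$ only adds more noise and still yields $(\epsilon,\delta)$-DP, so it suffices to prove $\rho'\le\rho$ under the hypothesis $\epsilon\le \tfrac{8a}{2+\sqrt{2}}$. The clean way to do this is the substitution $t=\epsilon/a$ followed by $u=\sqrt{t+1}$: the target inequality $\tfrac{t^2}{8}\le t+2-2\sqrt{t+1}$ factors as $(u-1)^2\bigl[\,1-\tfrac{(u+1)^2}{8}\,\bigr]\ge 0$, which holds exactly when $u\le 2\sqrt{2}-1$, i.e. $t\le 8-4\sqrt{2}=\tfrac{8}{2+\sqrt{2}}$. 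This factorization is the one non-routine step, and it also shows that the threshold $\tfrac{8\ln(1/\delta)}{2+\sqrt{2}}$ is sharp for this simplified bound rather than merely a loose sufficient condition.
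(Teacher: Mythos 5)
Your proof is correct and follows essentially the same route as the paper: Gaussian mechanism for zCDP, additive composition within zCDP, and conversion back to $(\epsilon,\delta)$-DP with $\rho$ chosen as the root of $\rho+2\sqrt{\rho\ln(1/\delta)}=\epsilon$. You additionally supply the algebra the paper leaves implicit — the quadratic solution for $\rho$ and, via the substitution $u=\sqrt{\epsilon/\ln(1/\delta)+1}$, a proof that $\rho\geq\epsilon^2/(8\ln(1/\delta))$ exactly when $\epsilon\leq\frac{8\ln(1/\delta)}{2+\sqrt{2}}$, which also shows that threshold is sharp.
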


Please refer to the supplementary materials for details. Since, in most practical budget settings, we have $\epsilon \leq \frac{8\ln(1/\delta)}{2+\sqrt{2}}$, we will apply (\ref{eq:tighter-comp}) for composition and analysis in the rest of the paper, acknowledging that (\ref{eq:tighter-comp-rho}) is valid for all parameter ranges.

In Section \ref{sec:DP-algo}, we shall employ Lemmas \ref{lem: gausmech} and \ref{lem:tighter-comp} in devising two distinct algorithms for linear regression after record linkage.
}
\subsection{Related Work}
\label{subsec:related-work}
Linear regression with linked data is a fundamental statistical task that has been explored in various articles. \citet{Scheuren1993} initially considered the linkage model (\ref{model:z}) for linear regression  and proposed an estimator that is not generally unbiased. Later, \citet{lahiri2005} introduced an exactly unbiased OLS-like estimator given in (\ref{est: rlest}) with an expression for the variance, which outperformed the approach by \citet{Scheuren1993}. Besides, \citet{Chambers2009RegressionAO, Zhang2021linkage} offered a few other estimators. 
According to their simulation studies, some of the estimators provided performance that was at most similar, but not noticeably better, compared to the one proposed by \citet{lahiri2005}. 
Yet, \citet{Zhang2021linkage} relaxed the condition by not assuming that the probability of correct linkage, $q_{ii}$ in the model (\ref{model:z}), can be obtained or estimated. 
For more extensive reviews of this literature,
\citet{Wang2022reg} gave an account of the recent development of various methods on regression analysis with linked datasets. \citet{Chambers2022} reviewed current research on robust regression of linked data.

On the other hand, there is ongoing research on privacy-preserving record linkage (PPRL) in the field of computer science. PPRL aims to privately link multiple sensitive datasets held by different organizations when they are unwilling or not permitted to share their data with external parties due to privacy and confidentiality concerns. To achieve privacy protection, techniques such as SMPC and DP are combined with machine learning and deep learning methods for conducting PPRL \citep{ Christen2020, Divanis2021, Ranbaduge2022PPRL}. 
PPRL primarily concerns data leakage during the linkage process and produces a linked dataset that can be used for further analysis, yet most applications treat the linked data as if there were no linkage errors. Neither the uncertainty propagation nor private release of the downstream analysis is considered within the scope of PPRL. 
 
Note that there are several articles on privacy-preserving analysis on vertically partitioned databases. In these databases, the attributes are distributed among multiple parties, but common unique identifiers exist to facilitate data linkage across the different parties. Unlike probabilistic record linkage, vertically partitioned databases do not involve linkage errors.
\citet{Du2004, Sanil2004, Hall2011SecureML, Gasc2017} discussed the
implementations of privacy-preserving linear regression protocols that prevent data disclosure across organizations, whereas \citet{Dwork2004PPVRD} considered data mining from the perspective of the private release of statistical querying in a spirit similar to our work.

\section{Differentially Private Algorithms}
\label{sec:DP-algo}
The unbiased and simply structured estimator provided in (\ref{est: rlest}) with a known closed-form variance makes it a suitable prototype to construct our private estimators. 
We introduce two differentially private algorithms in the following, based on
(1) noisy gradient descent, and (2) sufficient statistics perturbation. As the names suggest, we mitigate privacy risk by perturbing either the gradient or sufficient statistics during the computation of the linear model.
Hereafter, if not specified otherwise, $\|\cdot \|$ denotes the 2-norm.

\subsection{Post-RL Noisy Gradient Descent}
Gradient descent methods are ubiquitous in scientific computing for numerous optimization problems. 
Within the framework of differential privacy,
\citet{Bassily2014PrivateER} provided a noisy variant of the classic gradient descent algorithm. It was later adapted by
\citet{Cai2019TheCO} to solve the classic linear regression problem with faster convergence. 
Leveraging the work by  \citet{Bassily2014PrivateER, Cai2019TheCO}, we tailor the noisy gradient method for the post-RL linear regression model for $(X,\bm z)$ based on (\ref{model:lm}) and (\ref{model:z}).

Let $\mathcal{L}_n(\bm \beta) \stackrel{def}{=} \frac{1}{2n}(\bm z - W\bm\beta)^\top (\bm z - W\bm\beta)$ be the loss function, where recall $W = QX$.  The minimizer of $\mathcal{ L}_n(\bm\beta)$ is the non-private 
 RL estimator proposed by \citet{lahiri2005}. Let  $\Pi_R(\bm r)$  denote the projection of $\bm r \in\mathbb{R}^s $ onto the $\ell_2$ ball $\{\bm r\in\mathbb{R}^s: \|\bm r\|\leq R\}$.  The post-RL noisy gradient descent (NGD) algorithm is defined as follows.

\begin{algorithm}[ht]
\caption{Post-RL Noisy Gradient Descent}
\begin{algorithmic}[1]
\Require Linked dataset $(X,\bm z)$ and matching probability matrix $Q$, 
privacy budget ($\epsilon, \delta$), noise scale factor $B$, step size $\eta$, number of iterations $T$, truncation level $R$, feasibility $C$, initial value $\bm{\beta}^0$.

\State Let $ W = QX$.
\For{$t=0$ to $T-1$}
    \State Generate $\bm u_t \,{\sim}\, \mathcal{N}\left(0, \omega^2 I_d \right)$ where $\displaystyle \omega = \frac{2\eta  B\sqrt{T\ln (1/\delta)}}{n\epsilon}$.
    \State Compute 
    \begin{equation}
    \label{eq:update}
       \bm{\beta}^{t+1} = \Pi_C(\bm{\beta}^t - 
    \frac{\eta}{n}\sum_{i=1}^n(\bm w_i^\top\bm{\beta}^t - \Pi_R(z_i))\bm w_i + \bm u_t).
    \end{equation}
\EndFor
\Ensure $\bm{\hat\beta}^{priv} = \bm{\beta}^T$. 

\end{algorithmic}
\label{alg:ngd}
\end{algorithm}

Algorithm \ref{alg:ngd} is a modified version of the projected gradient descent that incorporates (1) post-RL transformation of the design matrix, (2) addition of noise $\bm u_t$ at each gradient step, and (3) use of projection $\Pi_R(\cdot)$ on the response variable. 
The regular parameters, including $\eta$, $T$ and $C$ for the projected gradient method, are specified in Theorem~\ref{thm:ngd-bound} for the discussion of the accuracy of ${\bm{\hat\beta}^{\text{priv}}}$.
The injection of noise follows Lemma~\ref{lem:tighter-comp}. 
The scale of the Gaussian noise $\bm u_t$ at step $t$ depends on the privacy budget ($\epsilon, \delta$), and the noise scale factor $B$ associated with the sensitivity in the update function (\ref{eq:update}). 
The purpose of the projection on $\bm z$ 
is to bound the sensitivity of the gradient. With a proper choice of $R$ that scales up with $\sqrt{\ln n}$ (specified in Section \ref{sec:theoreical}), 
the projection does not affect the accuracy of the final estimator with high probability.

The major challenge lies in calculating the sensitivity. In the non-RL least square regression, two neighboring datasets $D = (X, \bm y)$ and $D = (X', \bm y')$ differ in a single row, making it straightforward to derive the sensitivity of the gradient of $\mathcal{L}_n(\bm \beta) =\frac{1}{2n}(\bm y - X\bm\beta)^\top (\bm y - X\bm\beta)$.
Here, in the context of post-RL analysis, we consider two neighboring datasets containing both linking variables and regression variables, denoted as $D = (X, \Phi_{X}, \bm y, \Phi_{\bm y})$ and $D' = (X', \Phi_{X'}, \bm y', \Phi_{\bm y'})$, which differ in the record of one individual. The change in one row of the quasi-identifiers $\Phi_{X}$ and $\Phi_{\bm  y}$ may affect more than one row of the matching probability matrix $Q$. As a result, the entries of the transformed design matrix $W = QX$ subject to change are not limited to one row as in the non-RL case. 
Consequently, determining the sensitivity of the gradient of $\mathcal{L}_n(\bm \beta) =\frac{1}{2n}(\bm z - W\bm\beta)^\top (\bm z - W\bm\beta)$ becomes non-trivial. This challenge distinguishes our work from \citet{Cai2019TheCO}. However, we will demonstrate in Section \ref{sec:theoreical} that, under a condition on the structure of $Q$, the sensitivity can be tracked.

\subsection{Post-RL Sufficient Statistics Perturbation}
\label{subsec:algo-ssp}
Noise can be injected into the process besides the gradient computation. Since the estimator interacts with the data through its (joint) sufficient statistics, an efficient way is to perturb the sufficient statistics to protect the data.
Such a technique, sufficient statistics perturbation (SSP), has been used in previous works such as \citet{Slavkovic2009, Foulds2016, wang2018}. For the 
{\color{black} 
non-private OLS estimator ${\hat {\bm \beta}^{\text{OLS}}}=(X^\top X)^{-1}X\bm y$}, to perturb the joint sufficient statistics $(X^\top X, X\bm y)$, it suffices to add noise to $A^\top A $ where $A = (X \mid \bm y)$ is the augmented matrix.
\citet{DworkTT014} offered an algorithm, ``Analyze Gauss'', to privately release $A^\top A $. It was later utilized by 
\citet{Sheffet17} for private linear regression, primarily perturbing the sufficient statistics. 

In our work, we adapt the ``Analyze Gauss'' algorithm to linear regression after record linkage, as shown in Algorithm \ref{alg:ssp}. 
The noise scale factor $B$ is the sensitivity of 
$A^\top A \stackrel{def}{=}\begin{pmatrix}
W^\top W & W^\top \bm z\\
\bm z ^\top W & \bm z ^\top \bm z
\end{pmatrix}$ which is specified in Section \ref{sec:theoreical}. 
The gram matrix $A^\top A$ exhibits properties that facilitate the computation of its sensitivity.
Algorithm \ref{alg:ssp} illustrates how incorporating the joint sufficient statistics in a comprehensive form facilitates the deployment of differential privacy.

\begin{algorithm}[ht]
\caption{Post-RL Sufficient Statistics Perturbation}
\begin{algorithmic}[1]
\Require  Linked dataset $(X,\bm z)$ and matching probability matrix $Q$, 
privacy budget ($\epsilon, \delta$), noise scale factor $B$, truncation level $R$.

\State Let $ W = QX$.
\State Generate a $d\times d$ symmetric Gaussian random matrix $U$ whose upper triangle entries (including the diagonal) are sampled i.i.d. from $\mathcal{N}(0, \omega^2)$ where $\displaystyle \omega = \frac{B \sqrt{2\ln(1.25/\delta)}}{\epsilon}$.
\State Generate a $d$-dimensional Gaussian random vector $\bm u$
whose entries are sampled i.i.d. from $\mathcal{N}(0, \omega^2)$.
{\color{black} 
\If{$W^\top W+U$ is computationally singular} 
    \State Repeat steps $2 \sim 3$.
\EndIf
}
\Ensure 
$\displaystyle {\bm{\hat\beta}^{\text{priv}}} = ( W^\top W + U )^{-1}(W^\top \bm z^* + \bm u)$ where $z^*= (\Pi_R(z_1), \dots, \Pi_R(z_n))^\top$.
\end{algorithmic}
\label{alg:ssp}
\end{algorithm}
\begin{remark}
\label{rmk:ssp}
{\color{black} 
    In step 4,  by post-processing,
    checking for singularity of $W^\top W+U$ consumes no extra privacy budget.
    In fact, the probability of $W^\top W+U$ being singular decreases exponentially as the sample size increases.  
}
\end{remark}

An alternative approach to implementing the SSP method is to add random noise separately to each sufficient statistic. In this approach, the total privacy budget should be divided between $X^\top X $ and $X\bm y$ for the estimation of linear regression, as proposed by \citet{wang2018}.
However, treating the joint statistics as a whole is more economical in terms of budgeting in general.
\citet{Lin2023} showed through comparison that splitting the total budget among the components results in introducing larger noise on average.
Although adding noise individually to the components of interest allows for the private release of each quantity, it is not part of the goal of the estimation.

\section{Theoretical Results}
\label{sec:theoreical}

In this section, we provide the theoretical results of the two algorithms introduced in Section \ref{sec:DP-algo}. The results are threefold: (1) differential privacy guarantees, (2) finite-sample error bounds, and (3) variances of the private estimators.
We present each of these along with a discussion of the corresponding conditions as they relate to the main variables in our record linkage model. All proofs for these results can be found in the supplementary materials.

\subsection{Privacy Guarantees}
The algorithms are designed to achieve certain privacy guarantees, given the corresponding sensitivity, for the post-RL case:
\begin{theorem}
[Privacy Guarantees]
Assume the following boundedness conditions hold:

(A1) There is a constant $c_{x} < \infty$ such that $\|\bm x\|_2 \leq c_{x}$.

(A2) Let $Q$ and $Q'$ be the matching probability matrices (MPMs) resulting from the neighboring datasets $D$ and $D'$ and let $Q\sim Q'$ denote such a relation. We assume that $\sup_{Q \sim Q'}\|Q - Q'\|_1 \leq M$ for some constant $M < \infty$, where $\|\cdot\|_1$ is the entry-wise 1-norm.  

{\color{black} Given the linked data $(X, \bm z)$ and the matching probability matrix $Q$ for the regression problem in (\ref{model:lm})}, under Assumptions (A1) and (A2), it follows that
\begin{enumerate}
    \item  Algorithm~\ref{alg:ngd} satisfies $(\epsilon, \delta)$-differential privacy with 
    \begin{equation}
        B = Rc_{x}  (M + 4) + 2Cc^2_{x}(M + 2),
    \end{equation} 
    
    \item Algorithm~\ref{alg:ssp} satisfies $(\epsilon, \delta)$-differential privacy with
    \begin{equation}
        B =  Rc_x(M+4) + \max\{2c_x^2(M+2), 2R^2\}.
    \end{equation}
\end{enumerate}
 
\label{thm:privacy}
\end{theorem}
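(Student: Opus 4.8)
The plan is to reduce both privacy claims to sensitivity computations and then invoke the Gaussian mechanism (Lemma~\ref{lem: gausmech}) and the zCDP composition bound (Lemma~\ref{lem:tighter-comp}). First I would record the uniform magnitude bounds that make (A1)--(A2) usable. Since $Q$ is doubly stochastic and $\|\bm x_j\|\le c_x$, every row of $W=QX$ obeys $\|\bm w_i\|=\|\sum_j q_{ij}\bm x_j\|\le\sum_j q_{ij}\|\bm x_j\|\le c_x$ (and likewise $\|\bm w_i'\|\le c_x$), while the projections force $\|\bm\beta^t\|\le C$ and $|\Pi_R(z_i)|\le R$. These three magnitudes $c_x,C,R$ are the only ones entering the final constants.

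The crux of the argument, and what distinguishes it from ordinary (non-record-linkage) regression, is controlling how much $W=QX$ moves between neighboring datasets: changing one individual's record perturbs the quasi-identifiers and hence can alter \emph{many} rows of $Q$ at once, so the row-by-row bound of the classical case is unavailable. Instead I would establish the key estimate
\[
\sum_{i=1}^n \|\bm w_i-\bm w_i'\| \le c_x(M+2),
\]
by writing $\bm w_i-\bm w_i'=\sum_j(q_{ij}-q_{ij}')\bm x_j+q_{ik}'(\bm x_k-\bm x_k')$, where row $k$ is the single changed row of $X$; the first piece is bounded by $c_x\sum_j|q_{ij}-q_{ij}'|$ and the second by $2c_x\,q_{ik}'$. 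Summing over $i$, the double sum of $|q_{ij}-q_{ij}'|$ is $\|Q-Q'\|_1\le M$ by (A2), and $\sum_i q_{ik}'=1$ by double stochasticity, producing the $+2$. This one inequality, together with $\|\bm w_i\|,\|\bm w_i'\|\le c_x$, makes every subsequent sensitivity tractable; I expect it to be the main obstacle and the place where double stochasticity and (A2) do the real work.

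For Algorithm~\ref{alg:ngd}, the quantity released before the deterministic (hence post-processing) steps $\Pi_C$ and subtraction from $\bm\beta^t$ is the scaled gradient $\tfrac{\eta}{n}g(\bm\beta)$ with $g(\bm\beta)=W^\top W\bm\beta-W^\top\bm z^*$, and the bound must be uniform over feasible $\bm\beta$ so that adaptive composition applies. Splitting $g-g'=(W^\top W-W'^\top W')\bm\beta-(W^\top\bm z^*-W'^\top\bm z'^*)$ and using $W^\top W-W'^\top W'=W^\top(W-W')+(W-W')^\top W'$, each resulting term is a sum of the form $\sum_i\|\bm w_i\|\,\|\bm w_i-\bm w_i'\|$ or $\sum_i|\Pi_R(z_i)|\,\|\bm w_i-\bm w_i'\|$, all controlled by the key estimate; the only separate contribution is $\sum_i|\Pi_R(z_i)-\Pi_R(z_i')|\,\|\bm w_i'\|\le 2Rc_x$, using that the response changes in a single coordinate and $\Pi_R$ is $1$-Lipschitz with range $R$. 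Collecting terms gives $\Delta_g=Rc_x(M+4)+2Cc_x^2(M+2)=B$. Substituting $\Delta_t=\eta B/n$ into (\ref{eq:tighter-comp}) reproduces exactly $\omega=2\eta B\sqrt{T\ln(1/\delta)}/(n\epsilon)$, so the $T$-fold mechanism is $(\epsilon,\delta)$-DP.

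For Algorithm~\ref{alg:ssp}, I would bound the sensitivity of $A^\top A=\left(\begin{smallmatrix} W^\top W & W^\top\bm z^*\\ \bm z^{*\top}W & \bm z^{*\top}\bm z^*\end{smallmatrix}\right)$ block by block: the off-diagonal block is the same $W^\top\bm z^*$ difference as above, bounded by $Rc_x(M+4)$; the top-left block obeys $\|W^\top W-W'^\top W'\|\le 2c_x^2(M+2)$ by the identical splitting (each rank-one term $\bm w_i(\bm w_i-\bm w_i')^\top$ contributing $\|\bm w_i\|\,\|\bm w_i-\bm w_i'\|$); and the scalar $\bm z^{*\top}\bm z^*$ changes by at most $2R^2$ since only one coordinate moves and $|\Pi_R(z_k)|,|\Pi_R(z_k')|\le R$. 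Writing the symmetric difference as (block-diagonal part) $+$ (off-diagonal part) and using that the operator norm of a block-diagonal matrix is the maximum of its blocks' norms while that of the off-diagonal vector block is $\|W^\top\bm z^*-W'^\top\bm z'^*\|$, the triangle inequality yields sensitivity $B=Rc_x(M+4)+\max\{2c_x^2(M+2),2R^2\}$. The Gaussian mechanism of Lemma~\ref{lem: gausmech} with $\omega=B\sqrt{2\ln(1.25/\delta)}/\epsilon$ then gives $(\epsilon,\delta)$-DP, and the singularity check (Remark~\ref{rmk:ssp}) is post-processing and consumes no budget.
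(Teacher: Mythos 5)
Your proposal is correct and follows essentially the same route as the paper's proof: the same key estimate $\sum_i\|\bm w_i-\bm w_i'\|\le c_x(\|Q-Q'\|_1+2)\le c_x(M+2)$ via double stochasticity, the same splitting of the gradient sensitivity into a $W^\top W$-type term and a $W^\top\bm z^*$-type term for Algorithm~\ref{alg:ngd}, and the same block-diagonal/off-diagonal decomposition of $A^\top A-A'^\top A'$ for Algorithm~\ref{alg:ssp}, concluded by the Gaussian mechanism and the zCDP composition. No gaps.
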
 
Essentially, we assume that the data domain is bounded,
which is critical for deriving a finite sensitivity of the target function on the data.
(A1) is a standard assumption for a \textit{bounded design} $X$. 
For the linking variables that are generally categorical, there are no analogous definitions of ``norm'' for numerical vectors. 
Instead, (A2) is imposed on 
the MPM since it summarizes all the information of the linking variables in the linkage model we consider.
Specifically, we assume that two MPMs produced by two neighboring datasets do not differ much in terms of the entry-wise 1 norm. This assumption characterizes a \textit{bounded linkage model}.

The rationale of (A2) is supported by typical schemes imposed on the structures of MPM in practice, as reviewed in Section \ref{subsubsec:rl-schemes}. 
For example, with the blocking scheme, the size of each block is manageably small (O(1)). When one record is altered, the fluctuation of the MPM is limited to at most two blocks.
Additionally, with the ELE model~(\ref{model:ele}),
as long as the changes to a single record only affect a finite number of records, the linkage accuracy $\gamma$ changes at most $O(1/n)$. Therefore, we have $\sup_{Q \sim Q'}\|Q - Q'\|_1 = O(1)$.
In general, a robust record linkage approach should not produce two considerably different MPMs from two neighboring datasets. 
Therefore, it is realistic to assume a bounded linkage model.

The proofs of Theorem~\ref{thm:privacy} revolve around calculating the sensitivity of the target function in each algorithm. 
Besides the upper bounds $c_{x}$ and $M$ discussed above, the sensitivity also depends on the truncation level $R$ on the response. Truncation is commonly used in DP algorithm designs when there are no priori bounds on the relevant quantities (e.g., \citet{Abadi2016}). In Section \ref{subsec:bounds}, we provide a specific choice of $R$ and present an accuracy statement with high probability.

\subsection{Finite-Sample Error Bounds}
\label{subsec:bounds}
We study the accuracy of the proposed estimators by deriving the finite-sample error bounds. In the following, we introduce two more assumptions in addition to (A1) and (A2):

(A3) The true parameter $\bm\beta$ satisfies $\|\bm\beta\|_2 \leq c_0$ for some constant $0 < c_0 < \infty$.

(A4) The minimum and maximum eigenvalues of $ W^\top  W/n$ satisfy 
\begin{equation}
0< \frac{1}{L} < d\lambda_{\min}\left(\frac{W^\top  W}{n} \right) \leq d\lambda_{\max}\left(\frac{W^\top  W}{n} \right)  < L
    \label{assump:eigen}
\end{equation}
 for some constant $1 < L < \infty$.

Assumption (A4) implies the smoothness and strong convexity of the loss function $\mathcal{L}_n(\bm \beta) = \frac{1}{2n}(\bm z - W\bm\beta)^\top (\bm z - W\bm\beta)$, which allows for a fast convergence rate for the gradient descent method in Algorithm \ref{alg:ngd}.
On the other hand, for Algorithm \ref{alg:ssp}, note that the term $(W^\top W)^{-1}$ is a component of sufficient statistics. Assumption (A4) offers a bound on the norm of $(W^\top W)^{-1}$, which helps derive the error bound of ${\bm{\hat\beta}^{\text{priv}}}$. 
{\color{black} 
Let Assumption (A4') be (A4) with $W$ replaced by $X$ and the constant $L$ replaced by $L'$. 
The larger of $L$ and $L'$ can be chosen as the constant to satisfy both (A4) and (A4'). 
Therefore, for convenience, we consider (A4) and (A4') to be the same assumption. 
}
\medskip
We first obtain the accuracy of the non-private estimators, for comparison purposes.

\begin{lemma}
\label{lem:olsest-error}
Let ${\hat {\bm \beta}^{\text{OLS}}} = \underset{\bm \beta}{\arg\min}(\bm y - X\bm\beta)^\top (\bm y - X\bm\beta)$ be the OLS estimator. Then,
    under (A4), it follows that $\displaystyle \mathbb{E}\|{{\hat {\bm \beta}^{\text{OLS}}} - \bm \beta}^{}\|^2  = \sigma^2 \operatorname{tr}(X^\top X)^{-1} = \Theta \left(\frac{\sigma^2d^2}{n}\right)$.
\end{lemma}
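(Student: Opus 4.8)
The plan is to reduce the mean-squared error to a quadratic form in the Gaussian noise and then evaluate it through the trace. Under the model (\ref{model:lm}), the OLS estimator has the closed form ${\hat {\bm \beta}^{\text{OLS}}} = (X^\top X)^{-1} X^\top \bm y$, so substituting $\bm y = X\bm\beta + \bm e$ gives the error representation ${\hat {\bm \beta}^{\text{OLS}}} - \bm\beta = (X^\top X)^{-1} X^\top \bm e$. First I would square this to obtain $\|{\hat {\bm \beta}^{\text{OLS}}} - \bm\beta\|^2 = \bm e^\top X (X^\top X)^{-2} X^\top \bm e$, a quadratic form in the noise vector $\bm e$.

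The second step uses the standard identity $\mathbb{E}[\bm e^\top M \bm e] = \sigma^2 \operatorname{tr}(M)$ for $\bm e \sim \mathcal{N}(0, \sigma^2 I_n)$, applied with $M = X(X^\top X)^{-2} X^\top$. The cyclic property of the trace then collapses $\operatorname{tr}(M)$ to $\operatorname{tr}((X^\top X)^{-1})$, yielding the exact identity $\mathbb{E}\|{\hat {\bm \beta}^{\text{OLS}}} - \bm\beta\|^2 = \sigma^2 \operatorname{tr}((X^\top X)^{-1})$, which is the first equality in the claim.

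The final step establishes the order $\Theta(\sigma^2 d^2/n)$ using the eigenvalue control in (A4), applied to $X$ via (A4'). Since the eigenvalues of $X^\top X/n$ lie in the interval $(1/(dL),\, L/d)$, those of $(X^\top X)^{-1}$ lie in $(d/(nL),\, dL/n)$. Because $X^\top X$ is $d\times d$, the trace is a sum of exactly $d$ eigenvalues, so $\operatorname{tr}((X^\top X)^{-1})$ is squeezed between $d^2/(nL)$ and $Ld^2/n$; this is precisely $\Theta(d^2/n)$, and multiplying by $\sigma^2$ finishes the proof.

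I anticipate no serious obstacle, since the computation is essentially routine once the quadratic-form-to-trace reduction is in place. The only point that warrants care is the eigenvalue bookkeeping in the last step: correctly converting the normalized bounds on $X^\top X/n$ into bounds on the inverse, and noting that the $d^2$ scaling arises from two separate factors of $d$ — one from the magnitude $\Theta(d/n)$ of each eigenvalue of $(X^\top X)^{-1}$, and one from the fact that the trace sums $d$ of them.
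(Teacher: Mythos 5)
Your proof is correct and, in its final step, identical to the paper's own argument: the paper's proof consists only of the eigenvalue bookkeeping under (A4), converting the bounds on $\lambda(X^\top X/n)$ into bounds on the eigenvalues of $(X^\top X)^{-1}$ and summing $d$ of them to get $\Theta(d^2/n)$. The quadratic-form-to-trace reduction you use for the first equality $\mathbb{E}\|{\hat {\bm \beta}^{\text{OLS}}} - \bm\beta\|^2 = \sigma^2 \operatorname{tr}\bigl((X^\top X)^{-1}\bigr)$ is simply taken as classical in the paper, so your write-up is if anything slightly more complete.
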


\begin{lemma} 
\label{lem:rlest-error}
Let ${\hat {\bm \beta}^{\text{RL}}} = \underset{\bm \beta}{\arg\min}(\bm z - W\bm\beta)^\top (\bm z - W\bm\beta)$ be the non-private record linkage estimator, and $\Sigma^{\text{RL}}$ be the covariance matrix of ${\hat {\bm \beta}^{\text{RL}}}$. Then, 
\begin{equation}
\label{eq:rlest-error}
    \mathbb{E}\|{{\hat {\bm \beta}^{\text{RL}}} - \bm \beta}^{}\|^2  =  \operatorname{tr}(\Sigma^{\text{RL}}),
\end{equation}
where $\Sigma^{\text{RL}} = (W^\top W)^{-1}W^\top \Sigma_{\bm z} W(W^\top W)^{-1}$.
\end{lemma}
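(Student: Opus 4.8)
The plan is to recognize this as the standard bias--variance decomposition of mean-squared error, exploiting the fact that ${\hat {\bm \beta}^{\text{RL}}}$ is unbiased for $\bm\beta$. First I would recall that the minimizer of the quadratic objective is the closed-form estimator ${\hat {\bm \beta}^{\text{RL}}} = (W^\top W)^{-1}W^\top \bm z$ given in (\ref{est: rlest}), which is well-defined under (A4) since $W^\top W$ is invertible.

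The crucial ingredient is unbiasedness. From Lemma~\ref{lem:cov-z}, $\mathbb{E}(z_i) = \bm w_i^\top \bm\beta$ for all $i$, hence $\mathbb{E}(\bm z) = W\bm\beta$. Substituting into the closed form gives $\mathbb{E}({\hat {\bm \beta}^{\text{RL}}}) = (W^\top W)^{-1}W^\top \mathbb{E}(\bm z) = (W^\top W)^{-1}W^\top W\bm\beta = \bm\beta$, so the estimator carries no bias. Note this is the only step that genuinely invokes the record-linkage structure; everything that follows is purely a property of unbiased estimators.

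Next I would apply the decomposition. Writing ${\hat {\bm \beta}^{\text{RL}}} - \bm\beta = ({\hat {\bm \beta}^{\text{RL}}} - \mathbb{E}{\hat {\bm \beta}^{\text{RL}}}) + (\mathbb{E}{\hat {\bm \beta}^{\text{RL}}} - \bm\beta)$ and expanding the squared norm, the cross term has expectation zero because $\mathbb{E}[{\hat {\bm \beta}^{\text{RL}}} - \mathbb{E}{\hat {\bm \beta}^{\text{RL}}}] = \bm{0}$. Using the identity $\mathbb{E}\|\bm v - \mathbb{E}\bm v\|^2 = \operatorname{tr}(\operatorname{Var}(\bm v))$, this yields
\begin{equation*}
\mathbb{E}\|{\hat {\bm \beta}^{\text{RL}}} - \bm\beta\|^2 = \operatorname{tr}\!\left(\operatorname{Var}({\hat {\bm \beta}^{\text{RL}}})\right) + \|\mathbb{E}{\hat {\bm \beta}^{\text{RL}}} - \bm\beta\|^2 .
\end{equation*}
The second term vanishes by the unbiasedness established above, leaving $\mathbb{E}\|{\hat {\bm \beta}^{\text{RL}}} - \bm\beta\|^2 = \operatorname{tr}(\Sigma^{\text{RL}})$. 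Finally, the explicit form of $\Sigma^{\text{RL}}$ is exactly equation (\ref{eq:cov-rl}), obtained by propagating $\Sigma_{\bm z} = \operatorname{Var}(\bm z)$ through the linear map $(W^\top W)^{-1}W^\top$, with the entries of $\Sigma_{\bm z}$ supplied by Lemma~\ref{lem:cov-z}.

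There is no serious obstacle here: the statement is essentially the textbook fact that an unbiased estimator's MSE equals the trace of its covariance. The only content specific to the setting is the first-moment identity $\mathbb{E}(\bm z) = W\bm\beta$, and I would simply flag its use rather than dwell on it.
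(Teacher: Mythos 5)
Your proof is correct and follows essentially the same route as the paper's: both rest on the unbiasedness of ${\hat {\bm \beta}^{\text{RL}}}$ and the fact that the mean-squared error of an unbiased estimator equals the trace of its covariance. The only difference is that you explicitly verify unbiasedness from $\mathbb{E}(\bm z)=W\bm\beta$ (Lemma~\ref{lem:cov-z}), whereas the paper simply invokes it as known from \citet{lahiri2005}; this is a harmless and arguably welcome extra step.
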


As a special case, when the linkage is perfect (i.e., $Q$ is an identity matrix), the expected error of ${\hat {\bm \beta}^{\text{RL}}}$ in (\ref{eq:rlest-error}) takes the reduced form $\sigma^2 \operatorname{tr}(X^\top X)^{-1}$ which is exactly the lower bound obtained by ${\hat {\bm \beta}^{\text{OLS}}}$. Then, by Lemma \ref{lem:olsest-error},  we know that $\mathbb{E}\|{{\hat {\bm \beta}^{\text{RL}}} - \bm \beta}^{}\|^2$ is of order at least $\displaystyle\frac{\sigma^2d^2}{n}$ under (A4). From a secondary perspective regarding record linkage, it is beyond our scope to study how $\operatorname{tr}(\Sigma^{\text{RL}})$ behaves in general.

For the two proposed algorithms, we present upper bounds of the excess squared error of the private estimators, namely, $\|{\bm{\hat\beta}^{\text{priv}}} - {\hat {\bm \beta}^{\text{RL}}}\|^2$.
\begin{theorem}[Post-RL NGD]
\label{thm:ngd-bound}
{\color{black} Given the linked data $(X, \bm z)$ and the matching probability matrix $Q$ for the regression problem in (\ref{model:lm}),}
set the parameters of Algorithm \ref{alg:ngd}
as follows:
\begin{itemize}
    \item step size $\eta = d/L$, number of iterations $T = \lceil L^2 \ln(c_0^2n) \rceil$, feasibility $C = c_0$, initialization $\bm\beta^0=\bm 0$;
    \item truncation level $R = \sigma \sqrt{2\ln n}$;
    \item noise scale factor $B = Rc_{x}  (M+4) + 2c_0c^2_{x}(M+2)$;
\end{itemize}
   
Under Assumptions (A1)-(A4), given $\delta = o(1/n)$, with probability at least $1-c_1e^{-c_2\ln n} - e^{-c_3d}$ where $c_1, c_2, c_3$ are constants (see the proof), it follows that
\begin{equation}
    \|{\bm{\hat\beta}^{\text{priv}}} - {\hat {\bm \beta}^{\text{RL}}}\|^2 = \frac{1}{n} + O\left(\frac{\sigma^2 d^3 \ln^2n  \ln(1/\delta)}{n^2\epsilon^2} \right).
    \label{eq:privbound-ngd}
\end{equation}
\label{thm: priv}
\end{theorem}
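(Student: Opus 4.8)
The plan is to view Algorithm~\ref{alg:ngd} as projected, noise-perturbed gradient descent on the quadratic objective $\mathcal{L}_n(\bm\beta)=\frac{1}{2n}\|\bm z-W\bm\beta\|^2$, whose unconstrained minimizer is exactly $\hat{\bm\beta}^{\text{RL}}$. Assumption (A4) fixes the extreme eigenvalues of $W^\top W/n$ at $\mu=1/(dL)$ and $\ell=L/d$, so $\mathcal{L}_n$ is $\mu$-strongly convex and $\ell$-smooth, and the prescribed step size $\eta=d/L=1/\ell$ is exactly the one for which the noiseless step contracts distances to $\hat{\bm\beta}^{\text{RL}}$ by the factor $\rho:=1-\eta\mu=1-1/L^2$ (the objective being quadratic, the step applies the linear map $I-\tfrac{\eta}{n}W^\top W$, whose operator norm equals $\rho$ under (A4)). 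First I would combine this with the nonexpansiveness of the projection $\Pi_C$ --- valid because $\hat{\bm\beta}^{\text{RL}}$ lies in the $c_0$-ball on the good event described below --- to obtain, with $a_t:=\|\bm\beta^t-\hat{\bm\beta}^{\text{RL}}\|$ and $\bm u_t$ the injected noise, the one-step inequality
\begin{equation*}
a_{t+1}\le\rho\,a_t+\|\bm u_t\|,
\end{equation*}
and then unroll it from $\bm\beta^0=\bm 0$ to $a_T\le\rho^T\|\hat{\bm\beta}^{\text{RL}}\|+\sum_{t=0}^{T-1}\rho^{T-1-t}\|\bm u_t\|$.

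Next I would square and split into a deterministic optimization term and an accumulated-noise term. The optimization term $\rho^{2T}\|\hat{\bm\beta}^{\text{RL}}\|^2$ is driven below $1/n$ by the choice $T=\lceil L^2\ln(c_0^2n)\rceil$, since $\rho^{2T}\le e^{-2T/L^2}\le(c_0^2n)^{-2}$ and $\|\hat{\bm\beta}^{\text{RL}}\|\le c_0$ on the good event; this is the first term of (\ref{eq:privbound-ngd}). For the noise term I would pass to high probability: each $\|\bm u_t\|^2/\omega^2$ is $\chi^2_d$, so a chi-square tail bound with a union bound over the $T=\Theta(L^2\ln n)$ steps gives $\max_t\|\bm u_t\|^2\le 2d\omega^2$ off an event of probability $\le e^{-c_3 d}$, and Cauchy--Schwarz against the geometric weights (with $\sum_t\rho^{T-1-t}\le 1/(1-\rho)=L^2$) bounds the accumulated noise by a constant multiple of $d\omega^2$. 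Substituting $\omega^2=4\eta^2B^2T\ln(1/\delta)/(n^2\epsilon^2)$, and using $\eta^2T=d^2\ln(c_0^2n)$ together with $B=\Theta(\sigma\sqrt{\ln n})$ (the stated $B$ with $R=\sigma\sqrt{2\ln n}$ and $c_x,M,c_0,L$ absorbed into constants), collapses $d\omega^2$ to $O\!\left(\sigma^2d^3\ln^2 n\,\ln(1/\delta)/(n^2\epsilon^2)\right)$, the second term of (\ref{eq:privbound-ngd}).

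The remaining work, and the main obstacle, is to establish the good event of probability $1-c_1e^{-c_2\ln n}-e^{-c_3 d}$ on which the response-truncation $\Pi_R$ is harmless and $\hat{\bm\beta}^{\text{RL}}$ is feasible. For the truncation I would use that each $z_i=\bm x_{\pi(i)}^\top\bm\beta+e_{\pi(i)}$ has mean bounded by $c_xc_0$ (by (A1) and (A3)) and Gaussian fluctuation $e_{\pi(i)}$, so that with $R=\sigma\sqrt{2\ln n}$ a Gaussian tail bound controls both the probability and, where truncation fires, the magnitude of the discrepancy $\bm z-\bm z^\ast$, showing that replacing $\mathcal{L}_n$ by its truncated version perturbs the gradient fixed point negligibly. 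This is the delicate step precisely because $R$ sits at the scale of $\max_i|e_i|$: one cannot simply require that no coordinate be truncated, but must bound the \emph{effect} of the few truncated coordinates on $\hat{\bm\beta}^{\text{RL}}$, which is the source of the $c_1e^{-c_2\ln n}$ factor. I would also verify $\|\hat{\bm\beta}^{\text{RL}}\|\le c_0$ with high probability from the concentration of $\hat{\bm\beta}^{\text{RL}}$ about $\bm\beta$ implied by (A4), Lemma~\ref{lem:rlest-error}, and (A3); failure of feasibility, of the truncation event, and of the noise concentration are what the three deficits in the stated probability account for.
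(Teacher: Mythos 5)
Your proposal reproduces the skeleton of the paper's argument: identify $\hat{\bm\beta}^{\text{RL}}$ as the minimizer of $\mathcal{L}_n$, use (A4) to get $\frac{1}{dL}$-strong convexity and $\frac{L}{d}$-smoothness so that the projected step with $\eta=d/L$ contracts by $1-1/L^2$ (the paper invokes Lemma~\ref{conv}), unroll the recursion from $\bm\beta^0=\bm 0$, kill the optimization term with the stated $T$, and control the accumulated noise by chi-square concentration. The differences are cosmetic or local. You contract distances and then square via Cauchy--Schwarz against the geometric weights; the paper contracts squared distances directly using the inequality $\|\bm a+\bm b\|^2\le(1+c^2)\|\bm a\|^2+(1+1/c^2)\|\bm b\|^2$ (Lemma~\ref{lem:tri-ineq}), which slightly degrades the contraction rate but reaches the same order. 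For the noise you take a union bound over the $T=\Theta(\ln n)$ iterations and bound $\max_t\|\bm u_t\|^2$; the paper instead applies a single tail bound for the geometrically weighted sum of $\chi^2_d$ variables (Lemma~\ref{sum_of_chi}), which is why its failure probability is exactly $e^{-c_3d}$ while your union bound carries an extra factor of $T$ that you would need to absorb. The one genuinely different design choice is the truncation step: the paper simply conditions on the event $\mathcal{E}=\{\Pi_R(z_i)=z_i,\ \forall i\}$ and asserts $\mathbb{P}(\mathcal{E})\ge 1-c_1e^{-c_2\ln n}$, whereas you correctly observe that with $R=\sigma\sqrt{2\ln n}$ sitting exactly at the scale of $\max_i|e_i|$ a union bound over $n$ coordinates does not give a decaying failure probability, and you propose instead to bound the \emph{effect} of the few truncated coordinates on the minimizer. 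That route is executable (the gradient perturbation is $O(\operatorname{poly}(\ln n)/n)$ in norm, hence negligible against the $1/n$ term) and is arguably more honest than the paper's treatment, though you do not carry it out; either way it does not change the stated bound. Your handling of the feasibility $\|\hat{\bm\beta}^{\text{RL}}\|\le c_0$ via concentration is likewise a more careful version of the paper's ``without loss of generality.'' No genuine gap.
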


\begin{theorem}[Post-RL SSP]
\label{thm:ssp-bound}
{\color{black} Given the linked data $(X, \bm z)$ and the matching probability matrix $Q$ for the regression problem in (\ref{model:lm})}, in Algorithm \ref{alg:ssp}, set
\begin{itemize}
    \item truncation level $R = \sigma \sqrt{2\ln n}$;
    \item noise scale factor $B =  Rc_x(M+4) + 2\max\{c_x^2(M+2), R^2\}$.
\end{itemize}
Under Assumptions (A1)-(A4), given $\delta = o(1/n)$, 
with probability at least $1-c_1e^{-c_2\ln n}-e^{-c_3d}$ where $c_1, c_2, c_3$ are constants (see the proof),
\begin{equation}
    \|{\bm{\hat\beta}^{\text{priv}}} - {\hat {\bm \beta}^{\text{RL}}}\|^2 = O\left(\frac{ \sigma^4 d^3 \ln^2n \ln(1/\delta)}{n^2\epsilon^2} \right).
    \label{eq:privbound-ssp}
\end{equation}
\label{thm: privanalyze}
\end{theorem}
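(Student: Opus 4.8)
The plan is to express the error ${\bm{\hat\beta}^{\text{priv}}} - {\hat {\bm \beta}^{\text{RL}}}$ as a sum of terms driven separately by the two noise injections ($U$ and $\bm u$) and by the truncation, and then to control each term on a single high-probability event. Writing $G := W^\top W$, recall ${\hat {\bm \beta}^{\text{RL}}} = G^{-1}W^\top \bm z$ and ${\bm{\hat\beta}^{\text{priv}}} = (G+U)^{-1}(W^\top \bm z^* + \bm u)$. First I would dispose of the truncation: since $\mathbb{E}(z_i) = \bm w_i^\top\bm\beta$ is bounded under (A1) and (A3) and $\operatorname{Var}(z_i) = \sigma^2 + \bm\beta^\top A_i\bm\beta$ is bounded by a constant multiple of $\sigma^2$ (Lemma~\ref{lem:cov-z}), a sub-Gaussian tail bound together with the choice $R = \sigma\sqrt{2\ln n}$ gives $\Pi_R(z_i) = z_i$ for all $i$ simultaneously with probability at least $1 - c_1 e^{-c_2\ln n}$. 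On this event $\bm z^* = \bm z$, so the truncation contributes nothing, and this accounts for the $c_1 e^{-c_2\ln n}$ term in the stated failure probability.

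On that event I would invoke the resolvent identity $(G+U)^{-1} - G^{-1} = -(G+U)^{-1}U G^{-1}$ to obtain the clean decomposition
\begin{equation*}
{\bm{\hat\beta}^{\text{priv}}} - {\hat {\bm \beta}^{\text{RL}}} = -(G+U)^{-1}U\,{\hat {\bm \beta}^{\text{RL}}} + (G+U)^{-1}\bm u,
\end{equation*}
whence $\|{\bm{\hat\beta}^{\text{priv}}} - {\hat {\bm \beta}^{\text{RL}}}\| \le \|(G+U)^{-1}\|\,\bigl(\|U\|\,\|{\hat {\bm \beta}^{\text{RL}}}\| + \|\bm u\|\bigr)$. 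The remaining work is to bound the four factors. For $\|{\hat {\bm \beta}^{\text{RL}}}\|$ I would use unbiasedness, $\|{\hat {\bm \beta}^{\text{RL}}}\| \le \|\bm\beta\| + \|{\hat {\bm \beta}^{\text{RL}}} - \bm\beta\|$, assumption (A3), and the identity $\mathbb{E}\|{\hat {\bm \beta}^{\text{RL}}} - \bm\beta\|^2 = \operatorname{tr}(\Sigma^{\text{RL}})$ from Lemma~\ref{lem:rlest-error}, which under (A1)--(A4) is of lower order, to conclude $\|{\hat {\bm \beta}^{\text{RL}}}\| = O(1)$ with high probability. For the injected noise, standard Gaussian concentration gives $\|U\| = O(\omega\sqrt{d})$ (spectral norm of a $d\times d$ symmetric Gaussian matrix with entry scale $\omega$) and $\|\bm u\| = O(\omega\sqrt{d})$, each with probability at least $1 - e^{-c_3 d}$; this is the source of the $e^{-c_3 d}$ term.

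The crux, and the step I expect to be the \emph{main obstacle}, is controlling $\|(G+U)^{-1}\|$, i.e.\ showing the perturbed Gram matrix stays well-conditioned. Assumption (A4) gives $\lambda_{\min}(G) \ge n/(dL)$, while the random-matrix bound gives $\|U\| = O(\omega\sqrt d)$; by Weyl's inequality $\lambda_{\min}(G+U) \ge \lambda_{\min}(G) - \|U\| \ge n/(dL) - O(\omega\sqrt d)$. I would then verify that in the operative regime $\omega\sqrt d = o(n/d)$ --- using $\omega = B\sqrt{2\ln(1.25/\delta)}/\epsilon$ and the fact that the $R^2$ term dominates $B$, so $B = O(R^2) = O(\sigma^2\ln n)$ --- so that $\lambda_{\min}(G+U) \gtrsim n/(dL)$ and hence $\|(G+U)^{-1}\| = O(d/n)$; this is also what makes $G+U$ nonsingular, justifying Remark~\ref{rmk:ssp}. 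Substituting everything yields $\|{\bm{\hat\beta}^{\text{priv}}} - {\hat {\bm \beta}^{\text{RL}}}\| \le O(d/n)\cdot O(\omega\sqrt d) = O(d^{3/2}\omega/n)$; plugging in $\omega = O(\sigma^2\ln n\sqrt{\ln(1/\delta)}/\epsilon)$ and squaring gives the claimed bound $O\bigl(\sigma^4 d^3\ln^2 n\,\ln(1/\delta)/(n^2\epsilon^2)\bigr)$. The delicate bookkeeping is intersecting the three high-probability events and confirming that the $R^2$ term dominates $B$, which is precisely what produces the $\sigma^4$ (rather than $\sigma^2$) scaling distinguishing this bound from the NGD bound in Theorem~\ref{thm:ngd-bound}.
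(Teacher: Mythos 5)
Your proposal is correct and follows essentially the same route as the paper's proof: the same truncation event $\mathcal{E}$, a matrix-inverse perturbation identity to isolate the contributions of $U$ and $\bm u$, Gaussian concentration giving $\|U\|, \|\bm u\| = O(\omega\sqrt{d})$, and Assumption (A4) to control the (perturbed) Gram matrix, yielding the same $O(d^{3/2}\omega/n)$ bound before squaring. The only cosmetic differences are that you bound $\|(G+U)^{-1}\|$ directly via Weyl's inequality where the paper expands $(I+U(W^\top W)^{-1})^{-1}$ as a Neumann series (both hinge on the same condition $\|U\| = o(\lambda_{\min}(G))$), and you make explicit the $\|{\hat {\bm \beta}^{\text{RL}}}\| = O(1)$ step that the paper leaves implicit.
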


In both algorithms, the response is projected with a level $R =  \sigma \sqrt{2 \ln n}$ where $\sigma^2$ is the homoskedastic variance of the random error in linear model (\ref{model:lm}). Let $\mathcal{E} = \{\Pi_R(z_i) = z_i, \forall i \in [n] \}$, then $\mathcal{E}$ is a high-probability event. The error bound is analyzed under $\mathcal{E}$, thus we obtain a statement with high probability.

In the NGD method, the bound consists of two parts on the RHS in (\ref{eq:privbound-ngd}). The first error term $1/n$ results from the convergence rate of gradient descent after $T$ iterations. The second error term is due to the addition of Gaussian noise for privacy and thus involves $\epsilon, \delta$. It is worth noting that the choice in theory $T = \lceil L^2\ln(c_0^2 n) \rceil$ is, to some extent, conservative to ensure the first error term is $O\left(1/n\right)$, which is the same order as $\mathbb{E}\|{{\hat {\bm \beta}^{\text{OLS}}} - \bm \beta}^{}\|^2$.
{\color{black} 
However, more iterations give rise to larger random noise being added to gradient updates due to a smaller privacy budget per iteration.
In practice, a smaller number of iterations may be favored for the tradeoff (see the experiment in Section \ref{subsec:app}), especially when $n$ is not sufficiently large.  
}

For the SSP algorithm, the convergence rate in (\ref{eq:privbound-ssp}) depends on similar variables as in the NGD algorithm. The major difference is that it is controlled by $\sigma^4$ instead of $\sigma^2$ due to the sensitivity of the gram matrix $A^\top A$ defined in Section \ref{subsec:algo-ssp}. However, the SSP method has a faster convergence rate when $n$ is sufficiently large. As a result, the SSP estimator is more susceptible to a large variance of the random error in the response variable whereas the NGD method is more robust. As we shall see in Section \ref{sec:numerical}, the performance of the two algorithms is different under various scenarios. 

Putting together Lemma \ref{lem:rlest-error} and Theorems \ref{thm:ngd-bound} and \ref{thm:ssp-bound}, we obtain a high probability error bound for each algorithm as follows.
\begin{corollary}
Under the regularity conditions (A1)-(A4),

(i) (Post-RL NGD)
\begin{equation}
    \mathbb{E}\|{\bm{\hat\beta}^{\text{priv}}} - {\bm \beta}\|^2 = O\left(\operatorname{tr}(\Sigma^{\text{RL}}) +   \frac{\sigma^2 d^3  \ln^2n \ln(1/\delta)}{n^2\epsilon^2}\right)
\end{equation}
with probability at least $1 - c_1e^{-c_2 \ln n} - e^{-c_3 d}$.

(ii) (Post-RL SSP)
\begin{equation}
     \mathbb{E}\|{\bm{\hat\beta}^{\text{priv}}} - {\bm \beta}\|^2 = O\left(\operatorname{tr}(\Sigma^{\text{RL}}) +   \frac{ \sigma^4 d^3  \ln^2n \ln(1/\delta)}{n^2\epsilon^2}\right)
\end{equation}
with probability at least $1 - c_1e^{-c_2 \ln n} - e^{-c_3 d}$.
\label{cor:final.bound}
\end{corollary}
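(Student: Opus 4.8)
The plan is to derive both bounds from a single error decomposition that separates the statistical error of the non-private estimator ${\hat {\bm \beta}^{\text{RL}}}$ from the excess error introduced by privacy. I would start from the elementary inequality $\|\bm a + \bm b\|^2 \le 2\|\bm a\|^2 + 2\|\bm b\|^2$ applied with $\bm a = {\bm{\hat\beta}^{\text{priv}}} - {\hat {\bm \beta}^{\text{RL}}}$ and $\bm b = {\hat {\bm \beta}^{\text{RL}}} - \bm\beta$, which gives the pointwise bound
\[
\|{\bm{\hat\beta}^{\text{priv}}} - \bm\beta\|^2 \le 2\|{\bm{\hat\beta}^{\text{priv}}} - {\hat {\bm \beta}^{\text{RL}}}\|^2 + 2\|{\hat {\bm \beta}^{\text{RL}}} - \bm\beta\|^2.
\]
Taking the expectation over the data-generating randomness (the regression error $\bm e$ and the linkage permutation encoded in $\bm z$), the second summand is handled exactly by Lemma \ref{lem:rlest-error}, contributing $2\operatorname{tr}(\Sigma^{\text{RL}})$. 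The constant $2$ is immaterial under the $O(\cdot)$ convention, so no finer accounting of a cross term is needed.

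For the first summand I would invoke Theorem \ref{thm:ngd-bound} for part (i) and Theorem \ref{thm:ssp-bound} for part (ii). Each supplies, on the common high-probability event of probability at least $1 - c_1 e^{-c_2 \ln n} - e^{-c_3 d}$, a bound on $\|{\bm{\hat\beta}^{\text{priv}}} - {\hat {\bm \beta}^{\text{RL}}}\|^2$ — namely $1/n + O(\sigma^2 d^3 \ln^2 n \ln(1/\delta)/(n^2\epsilon^2))$ for NGD and $O(\sigma^4 d^3 \ln^2 n \ln(1/\delta)/(n^2\epsilon^2))$ for SSP. Because these bounds depend only on the fixed-design quantities $X, Q, \sigma, \epsilon, \delta, n, d$ and not on the averaged-out data randomness, the expectation over $\bm e$ and $\bm z$ leaves them unchanged on that event. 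Combining the two summands and absorbing constants into $O(\cdot)$ yields the stated expressions; the residual $1/n$ convergence term from the NGD case is absorbed into $\operatorname{tr}(\Sigma^{\text{RL}})$, since the discussion following Lemma \ref{lem:rlest-error} shows that $\operatorname{tr}(\Sigma^{\text{RL}})$ is of order at least $\sigma^2 d^2/n$ and hence dominates $1/n$.

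The main obstacle is bookkeeping the two distinct modes of randomness consistently. Theorems \ref{thm:ngd-bound} and \ref{thm:ssp-bound} are high-probability statements over the privacy mechanism together with the truncation event $\mathcal{E} = \{\Pi_R(z_i) = z_i,\ \forall i \in [n]\}$, whereas Lemma \ref{lem:rlest-error} is an exact identity in expectation over the data. The care required is to fix the good mechanism event first, argue that on it the privacy-cost contribution is a deterministic quantity unaffected by the subsequent expectation over $\bm e$ and $\bm z$, and then take that expectation only on the statistical term; the small-probability complement $\mathcal{E}^c$ contributes negligibly because the relevant estimators have bounded moments. The final inequality then holds in expectation over the data while inheriting the high-probability qualifier over the mechanism noise, exactly as stated in the corollary.
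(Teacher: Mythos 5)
Your proposal is correct and follows essentially the same route as the paper: the paper's proof consists precisely of the inequality $\|{\bm{\hat\beta}^{\text{priv}}} - \bm \beta\|^2 \leq 2(\|{\hat {\bm \beta}^{\text{RL}}} - \bm \beta\|^2 + \|{\bm{\hat\beta}^{\text{priv}}} - {\hat {\bm \beta}^{\text{RL}}}\|^2)$ combined implicitly with Lemma \ref{lem:rlest-error} and Theorems \ref{thm:ngd-bound} and \ref{thm:ssp-bound}. Your additional remarks on absorbing the $1/n$ term into $\operatorname{tr}(\Sigma^{\text{RL}})$ and on separating the mechanism randomness from the data expectation are consistent with the paper and, if anything, more explicit than its one-line argument.
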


\subsection{Variances}
\label{subsec:var}

{\color{black} 
As discussed in the Introduction, although a few works \citep{Alabi, Sheffet17} have addressed uncertainty of DP estimators through confidence bounds and intervals, the exact variance of DP estimators is rarely determined in most cases. 
}
Recent work, such as \citet{Lin2023}, has explored the variance of the private estimators for population proportions that have fairly simple structures.
The main barrier to the inspection of variance is that if the noise is injected into the intermediate steps  of the estimation process other than the output, then it is difficult to track the variability that noise introduces to the output estimator due to the  intricate nature of the algorithm.

The NGD and SSP algorithms are two examples where noise is added in the middle of the estimation process. The operations like function composition and taking the inverse complicate the inspection of the variance of the output estimator ${\bm{\hat\beta}^{\text{priv}}}$. 
To address this issue, we investigate the variance of ${\bm{\hat\beta}^{\text{priv}}}$ for the two algorithms by studying the variances of two proxy estimators. 
The theoretical variances of the proxy estimators can be used to approximate those of ${\bm{\hat\beta}^{\text{priv}}}$.

\begin{theorem}[Variance for Post-RL NGD]
\label{thm:var-ngd}
    In Algorithm \ref{alg:ngd} , if we consider the estimator without projections
\begin{equation}
\label{eq:est-noprojection}
    \bm{\beta}^{t+1} = \bm{\beta}^t - 
    \frac{\eta}{n}W^\top  (W \bm{\beta}^t - \bm z) + \bm u_t,
\end{equation}
then the variance of the $T$th iterate is given by
\begin{equation}
\label{var:ngd}
    \Sigma = \sum_{t=1}^T (I_d - A) ^{t-1} \cdot B^\top \Sigma_{\bm z} B  \cdot \sum_{t=1}^T (I_d - A) ^{t-1}+ \omega^2 \sum_{t=1}^T (I_d - A) ^{2t-2},
\end{equation}
where $I_{d}$ is the identity matrix of size $d$, $\displaystyle A \stackrel{def}{=} \frac{\eta}{n}W^\top  W$, $\displaystyle B \stackrel{def}{=} \frac{\eta}{n}W$, and $\omega^2$ is the variance of $\bm u_t$.

\end{theorem}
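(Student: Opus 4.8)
The plan is to exploit the fact that, once the projections $\Pi_R$ and $\Pi_C$ are removed, the update (\ref{eq:est-noprojection}) is an \emph{affine-linear} recursion in $\bm\beta^t$, so that the $T$th iterate is an explicit linear image of the data vector $\bm z$ and of the injected noise vectors $\bm u_0,\dots,\bm u_{T-1}$. First I would substitute the definitions $A = \frac{\eta}{n}W^\top W$ and $B = \frac{\eta}{n}W$. Since $\frac{\eta}{n}W^\top(W\bm\beta^t - \bm z) = A\bm\beta^t - B^\top\bm z$, the recursion collapses to
\[
\bm\beta^{t+1} = (I_d - A)\,\bm\beta^t + B^\top\bm z + \bm u_t .
\]
Abbreviating $M := I_d - A$, this is a first-order linear recursion with a driving term $B^\top\bm z$ that is identical at every step (full-batch gradient descent reuses all of $\bm z$), perturbed by fresh noise $\bm u_t$.

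Next I would unfold the recursion from the deterministic start $\bm\beta^0 = \bm 0$. A short induction on $T$ yields
\[
\bm\beta^T = M^T\bm\beta^0 + \Big(\sum_{t=1}^T M^{t-1}\Big)B^\top\bm z + \sum_{j=0}^{T-1} M^{T-1-j}\bm u_j ,
\]
the partial-sum operator $\sum_{t=1}^T M^{t-1}$ appearing precisely because $B^\top\bm z$ is re-injected at every iteration and propagated forward by $M$. I would then form the variance. Because the linkage/response randomness in $\bm z$ is independent of the algorithmic noise $\{\bm u_j\}$, and the $\bm u_j$ are i.i.d.\ with $\operatorname{Var}(\bm u_j)=\omega^2 I_d$, all cross terms vanish and the deterministic term $M^T\bm\beta^0$ drops, leaving a two-term decomposition: a data part $\operatorname{Var}\big((\sum_{t=1}^T M^{t-1})B^\top\bm z\big) = (\sum_{t=1}^T M^{t-1})B^\top\Sigma_{\bm z}B(\sum_{t=1}^T M^{t-1})^\top$ and a noise part $\operatorname{Var}\big(\sum_{j=0}^{T-1}M^{T-1-j}\bm u_j\big) = \omega^2\sum_{j=0}^{T-1}M^{T-1-j}(M^{T-1-j})^\top$.

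The final simplification rests on the symmetry of $A = \frac{\eta}{n}W^\top W$, which makes $M = I_d - A$ and all its powers self-adjoint. This lets me drop the transposes: $(\sum_{t=1}^T M^{t-1})^\top = \sum_{t=1}^T M^{t-1}$, giving the first summand of (\ref{var:ngd}); and reindexing the noise sum by $k = T-1-j$ turns $\sum_{j=0}^{T-1}M^{T-1-j}(M^{T-1-j})^\top$ into $\sum_{k=0}^{T-1}M^{2k} = \sum_{t=1}^T (I_d-A)^{2t-2}$, giving the second summand.

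There is no deep obstacle, as the recursion is linear; the two points requiring care are (i) the independence of $\bm z$ and the injected noise, which is exactly what kills the cross terms and produces the clean two-term formula, and (ii) the symmetry of $M$, without which the transposed partial sums would not collapse into the stated quadratic forms. I would close by noting that (\ref{var:ngd}) is the variance of the unprojected proxy (\ref{eq:est-noprojection}), not of ${\bm{\hat\beta}^{\text{priv}}}$ itself; the projections $\Pi_R,\Pi_C$ are what break linearity in Algorithm \ref{alg:ngd}, but on the high-probability event $\mathcal{E}$ with all iterates feasible the proxy coincides with the true output, so (\ref{var:ngd}) serves as the intended approximation to $\operatorname{Var}({\bm{\hat\beta}^{\text{priv}}})$.
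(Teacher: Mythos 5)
Your proposal is correct and follows essentially the same route as the paper: rewrite the unprojected update as the affine-linear recursion $\bm\beta^{t+1}=(I_d-A)\bm\beta^t+B^\top\bm z+\bm u_t$, unfold it from $\bm\beta^0$, and compute the variance of the resulting linear image of $\bm z$ and the independent noise terms, using the symmetry of $I_d-A$ to collapse the transposes. The paper's proof is just a terser version of the same computation, so there is nothing substantive to add.
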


\begin{remark}
\label{rmk:var}
    In the non-private case where $\omega^2 = 0$, let $T \to \infty$, in which case
\begin{equation*}
    \Sigma \to A^{-1} B^\top \Sigma_{\bm z} BA^{-1} = (W^\top W)^{-1}W^\top \Sigma_{\bm z}W (W^\top W)^{-1} = \Sigma^{\text{RL}},
\end{equation*}
which is exactly the variance of ${\hat {\bm \beta}^{\text{RL}}}$ given in (\ref{eq:cov-rl}).
\end{remark}
The estimator in Algorithm \ref{alg:ngd} is a projected variant of (\ref{eq:est-noprojection}). 
The use of projection with level $C$ on $\bm\beta^{t+1}$ in (\ref{eq:update}) impedes the exact analysis of variance for ${\bm{\hat\beta}^{\text{priv}}}$. Instead, we provide the variance in (\ref{var:ngd}) for the non-projected estimator as a conservative variance for ${\bm{\hat\beta}^{\text{priv}}}$. The level of projection, the scale of noise, and the number of iterations together determine how conservative it is. From Remark \ref{rmk:var}, we know that as $T$ increases, the first term in the RHS of (\ref{var:ngd}) is getting close to $\Sigma^{\text{RL}}$. The second term, $\displaystyle\omega^2 \sum_{t=1}^T (I_d - A) ^{2t-2}$,  then summarizes the cumulative variability resulting from adding random noise at each iteration. Note that this term does not converge by simply increasing $T$, due to the fact that a smaller budget leads to larger noise at each iteration.

\begin{theorem}[Variance for Post-RL SSP]
\label{thm:var-ssp} 
For Algorithm~\ref{alg:ssp},
let $\hat{\bm\beta}' = {\hat {\bm \beta}^{\text{RL}}} + (W^\top W )^{-1} \bm u - ( W^\top W )^{-1} \cdot U(\bm {\hat {\bm \beta}^{\text{RL}}} + (W^\top W)^{-1} \bm u)$, 
{\color{black} then 
${\bm{\hat\beta}^{\text{priv}}} 
 - \hat{\bm\beta}' \stackrel{p}{\to} 0$ as $n \to \infty$}.
The variance of $\hat{\bm\beta}'$ is given by
\begin{equation}
    \Sigma = \Sigma^{\text{RL}} + \omega ^2 (W^\top W)^{-1} (I_d + \Sigma_0 + \Sigma_1 + \Sigma_2) (W^\top W)^{-1},
\end{equation}
where $\Sigma^{\text{RL}} = \operatorname{Cov}({\hat {\bm \beta}^{\text{RL}}})$ and the entries of $\Sigma_0$, $\Sigma_1$ and $\Sigma_2$ are given by
\begin{itemize}
    \item $(\Sigma_0)_{kk} = \sum_{i=1}^d\beta_i^2 $ for $k = 1, ..., d$;
    $(\Sigma_0)_{kl} = \beta_k\beta_l $ for $k \neq l$.
    \item $(\Sigma_1)_{kk} = \sum_{i=1}^d  \Sigma^{\text{RL}}_{ii}$ for $k = 1, ..., d$; 
    $(\Sigma_1)_{kl} = \Sigma^{\text{RL}}_{kl}$ for $k \neq l$.
    \item $ (\Sigma_2)_{kk} = \sum_{i=1}^d\Sigma'_{ii}$ for $k = 1, ..., d$;
    $(\Sigma_2)_{kl} = \Sigma'_{kl} $ for $k \neq l,$ where $\Sigma' \stackrel{def}{=} \omega^2(W^\top W)^{-2}$.
\end{itemize}
\begin{remark}
As shown in the proof of \ref{thm:var-ssp} (see the supplemental), the proxy estimator $\hat{\bm\beta}'$ is a first-order approximation for ${\bm{\hat\beta}^{\text{priv}}}$ using Taylor series for the term $(I + U(W^\top W)^{-1})^{-1}$ which appears in the decomposition of ${\bm{\hat\beta}^{\text{priv}}}$. 

\end{remark}

\end{theorem}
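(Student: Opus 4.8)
The plan is to realize $\hat{\bm\beta}'$ as the first-order term of a Neumann-series expansion of ${\bm{\hat\beta}^{\text{priv}}}$, show the higher-order remainder is negligible to get the $\stackrel{p}{\to}$ claim, and then compute $\operatorname{Var}(\hat{\bm\beta}')$ by a term-by-term second-moment calculation. First I would restrict to the high-probability event $\mathcal{E}=\{\Pi_R(z_i)=z_i,\ \forall i\}$, on which $\bm z^{*}=\bm z$ and hence ${\bm{\hat\beta}^{\text{priv}}} = (W^\top W + U)^{-1}(W^\top \bm z + \bm u)$. Factoring $W^\top W + U = (I_d + U(W^\top W)^{-1})(W^\top W)$ gives
\begin{equation*}
{\bm{\hat\beta}^{\text{priv}}} = (W^\top W)^{-1}\bigl(I_d + U(W^\top W)^{-1}\bigr)^{-1}(W^\top \bm z + \bm u),
\end{equation*}
and expanding $(I_d + U(W^\top W)^{-1})^{-1} = I_d - U(W^\top W)^{-1} + \sum_{k\ge 2}(-U(W^\top W)^{-1})^k$, retaining the first two terms, and using $(W^\top W)^{-1}(W^\top \bm z + \bm u) = {\hat {\bm \beta}^{\text{RL}}} + (W^\top W)^{-1}\bm u$ reproduces exactly the proxy $\hat{\bm\beta}'$. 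This also justifies the Remark.

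The hard part will be the convergence claim ${\bm{\hat\beta}^{\text{priv}}} - \hat{\bm\beta}' \stackrel{p}{\to} 0$. The remainder equals $\bigl[\sum_{k\ge 2}(-U(W^\top W)^{-1})^k\bigr](W^\top W)^{-1}(W^\top \bm z + \bm u)$, whose norm is bounded by $\|U(W^\top W)^{-1}\|^{2}\big/\bigl(1-\|U(W^\top W)^{-1}\|\bigr)$ times a stochastically bounded factor. By Assumption (A4), $\|(W^\top W)^{-1}\| \le dL/n$; combined with the standard operator-norm tail bound $\|U\| = O_p(\omega\sqrt{d})$ for a symmetric Gaussian matrix, this gives $\|U(W^\top W)^{-1}\| = O_p(\omega d^{3/2}/n) \to 0$. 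Since ${\hat {\bm \beta}^{\text{RL}}}$ has bounded second moment by Lemma~\ref{lem:rlest-error} and $\|(W^\top W)^{-1}\bm u\| = O_p(\omega d^{3/2}/n)$, the remainder vanishes in probability. Establishing the Gaussian operator-norm concentration and assembling these pieces uniformly over $\mathcal{E}$ is the technical crux.

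Finally, for the variance I would write, with $\bm\delta \stackrel{def}{=} {\hat {\bm \beta}^{\text{RL}}} - \bm\beta$ (mean zero, covariance $\Sigma^{\text{RL}}$),
\begin{equation*}
\hat{\bm\beta}' - \bm\beta = \bm\delta + (W^\top W)^{-1}\bm u - (W^\top W)^{-1}U\bm\beta - (W^\top W)^{-1}U\bm\delta - (W^\top W)^{-1}U(W^\top W)^{-1}\bm u .
\end{equation*}
Here $W^\top W$ is deterministic in the fixed-design setting, and $\bm u$, $U$ are independent of the data and of each other with zero mean. By mutual independence and these zero means, every cross-covariance among the five summands vanishes, since each cross term contains a lone mean-zero factor ($\bm u$, $U$, or $\bm\delta$) that factors out. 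The five diagonal contributions then yield $\Sigma^{\text{RL}}$ (from $\bm\delta$), the $I_d$ block $\omega^2(W^\top W)^{-2}$ (from $(W^\top W)^{-1}\bm u$), and the $\Sigma_0,\Sigma_1,\Sigma_2$ blocks. The latter three all reduce to evaluating $\mathbb{E}[UPU]$ for a fixed symmetric $P$, which for a symmetric Gaussian $U$ with i.i.d.\ $\mathcal{N}(0,\omega^2)$ upper-triangular entries satisfies $\mathbb{E}[UPU]_{kk} = \omega^2\operatorname{tr}(P)$ and $\mathbb{E}[UPU]_{kl} = \omega^2 P_{kl}$ for $k\ne l$; this identity is the one non-routine computation and follows from $\mathbb{E}[U_{ka}U_{lb}]=\omega^2$ exactly when $\{k,a\}=\{l,b\}$. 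Applying it with $P = \bm\beta\bm\beta^\top$, $P = \Sigma^{\text{RL}}$, and $P = (W^\top W)^{-2}$ produces $\Sigma_0$, $\Sigma_1$, and $\Sigma_2$ respectively, and collecting all contributions gives the stated $\Sigma$.
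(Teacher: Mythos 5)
Your proposal is correct and follows essentially the same route as the paper: a first-order Neumann/Taylor truncation of $(I + U(W^\top W)^{-1})^{-1}$ to define $\hat{\bm\beta}'$, remainder control via $\|U(W^\top W)^{-1}\| = o_p(1)$ using (A4) and the Gaussian operator-norm bound, and a term-by-term variance computation in which every cross-covariance vanishes by independence and zero means. The only cosmetic difference is that you center ${\hat {\bm \beta}^{\text{RL}}}$ at $\bm\beta$ so that $\Sigma_0$ and $\Sigma_1$ emerge as separate contributions, whereas the paper computes $\operatorname{Var}(U{\hat {\bm \beta}^{\text{RL}}})$ in one step via Lemma~\ref{lem:xy-var}, which is exactly your $\mathbb{E}[UPU]$ identity applied with $P = \bm\beta\bm\beta^\top + \Sigma^{\text{RL}}$.
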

The variance of $\hat{\beta}'$ also consists of two parts: the variance of the non-private estimator ${\hat {\bm \beta}^{\text{RL}}}$ and the additional variation due to the noise injected for privacy purposes.
Given Assumption (A4), we have $\|(W^\top W)^{-1}\| = O(d/n)$ that appears in $\Sigma_1$ and $\Sigma_2$. As $n$ increases, the dominant component of the second term would be $\omega ^2 (W^\top W)^{-1} (I_d + \Sigma_0) (W^\top W)^{-1}$.

\section{Numerical Results}
\label{sec:numerical}
To evaluate the finite-sample performance of the proposed algorithms, we conduct a series of simulation studies and an application to a synthetic dataset that contains real data.

\subsection{Simulation Studies}
\label{subsec:simu}
In this section, we conduct simulation studies to assess the performance of the two proposed algorithms for simple linear regression with linked data. 
The non-private OLS estimator and RL estimator ${\hat {\bm \beta}^{\text{RL}}}$ by \citet{lahiri2005} are included as benchmarks. 
The private, non-RL counterpart methods are also performed in the absence of linkage errors for comparison.

For each simulation, a fixed design matrix $X$ and an matching probability matrix $Q$ are produced and
a total of 1000 repetitions are run over the randomness of both the intrinsic error $\bm e\sim \mathcal{N}(0, \sigma^2I_n)$ of the regression model and the noise injected for privacy. 
Figure~\ref{fig:sim} displays the $\ell_2$ relative error and both empirical and theoretical variances for the two settings. 

Two sets of simulations are conducted to explore the performance with varying sample size $n$ and $\sigma$, the homoskedastic variance of the random error in linear model (\ref{model:lm}). The parameters are set as follows:
\begin{itemize}
    \item ELE linkage model: blockwise linkage accuracy $\gamma_i$ characterizing $Q$,  block size $n_i = 25$.
    \begin{itemize}
        \item Settings 1 and 2: $\gamma_i \in \text{uniform}(0.6, 0.9)$, $M = 1$ in Assumption (A2).
        {\color{black} 
        \item Setting 3: the linkage accuracy $\gamma_i \equiv \gamma$ which varies from 0.6 to 1, while $M$ scales from 1 to 0. 
        }
    \end{itemize}
    \item regression model: $x_1,\dots,x_n \stackrel{i.i.d.}{\sim} \text{uniform}(-1,1)$, true regression coefficient $\beta = 1$. 
    \begin{itemize}
        \item Setting 1: $n$ varies from 3,000 to 10,000, $\sigma$ is fixed at 1.
        \item Setting 2: $n$ is fixed at $10,000$, $\sigma$ varies from 0.5 to 1.8.
        \item Setting 3: $n$ is fixed at $10,000$, $\sigma$ is fixed at 1.
    \end{itemize} 
    \item privacy budget: $\epsilon = 1$, $\delta = 1/n^{1.1}$.
\end{itemize}

\begin{figure}[H]
     \centering
     \begin{subfigure}[b]{0.495\textwidth}
         \centering
         \includegraphics[width=\textwidth]{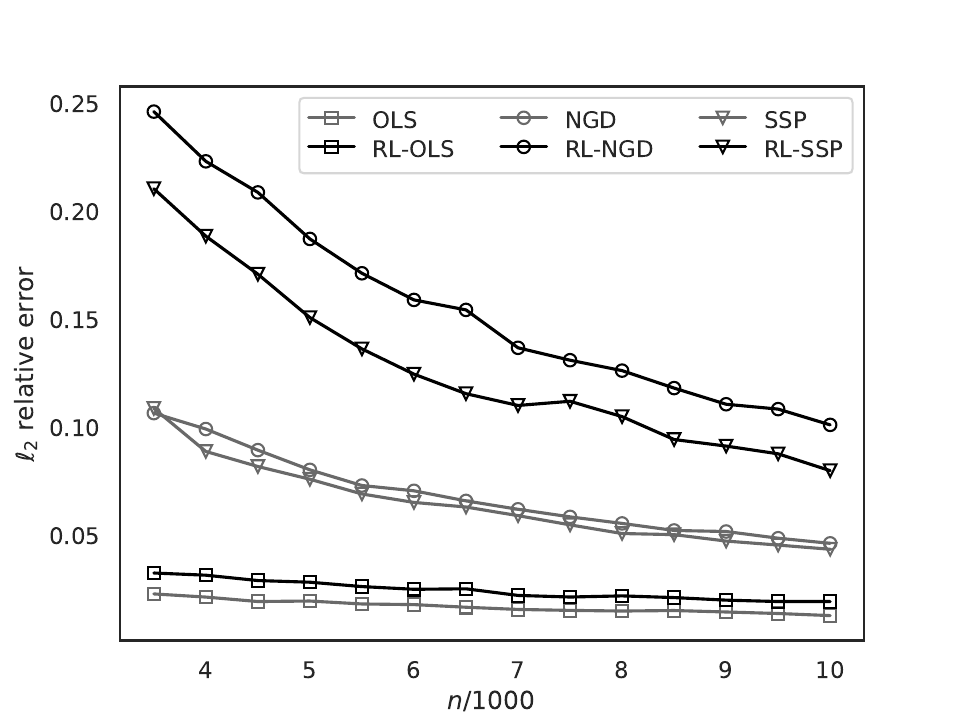}
         \caption{Setting 1: $\sigma = 1$}
         \label{fig:err_n}
     \end{subfigure}
     \hfill
     \begin{subfigure}[b]{0.495\textwidth}
         \centering
         \includegraphics[width=\textwidth]{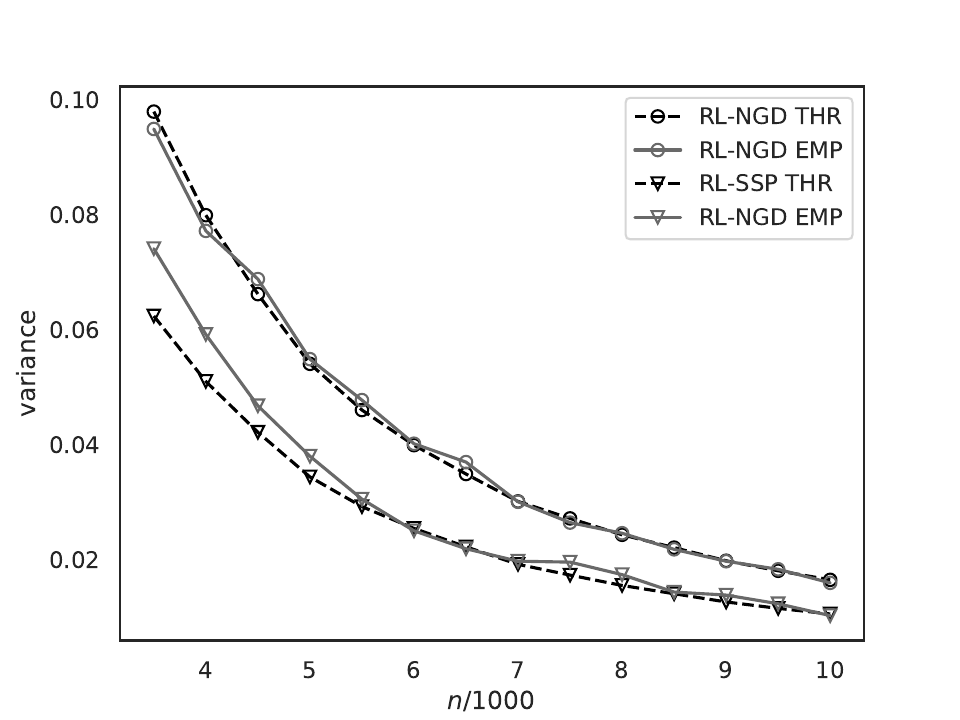}
         \caption{Setting 1: $\sigma = 1$}
         \label{fig:var_n}
     \end{subfigure}
     \hfill
\begin{subfigure}[b]{0.495\textwidth}
         \centering
         \includegraphics[width=\textwidth]{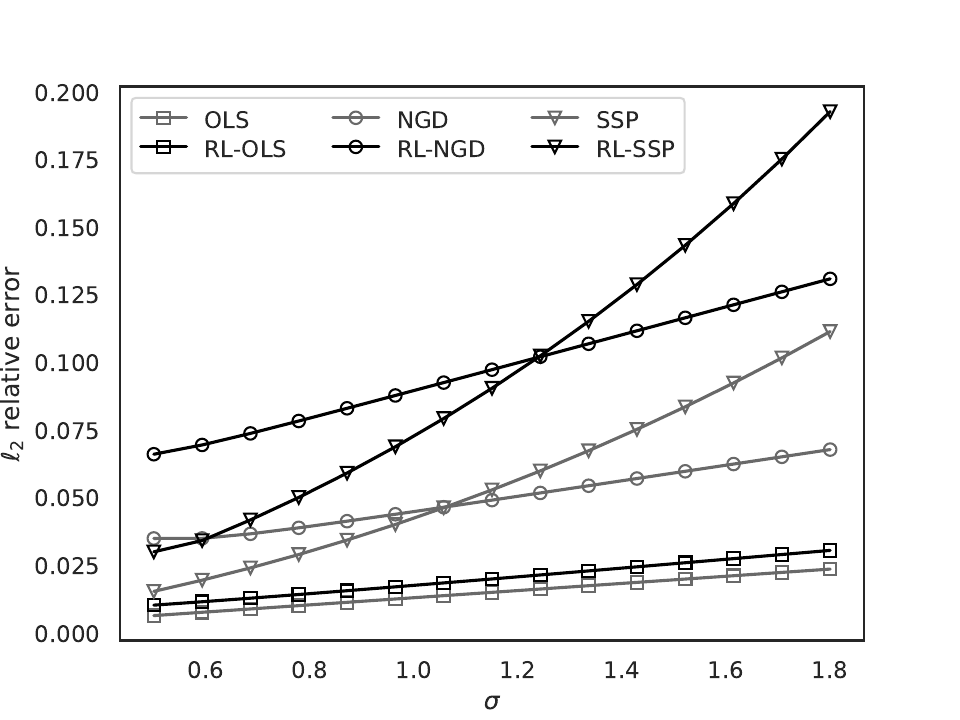}
         \caption{Setting 2: $n = 10,000$}
         \label{fig:err_sigma}
     \end{subfigure}
     \hfill
     \begin{subfigure}[b]{0.495\textwidth}
         \centering
         \includegraphics[width=\textwidth]{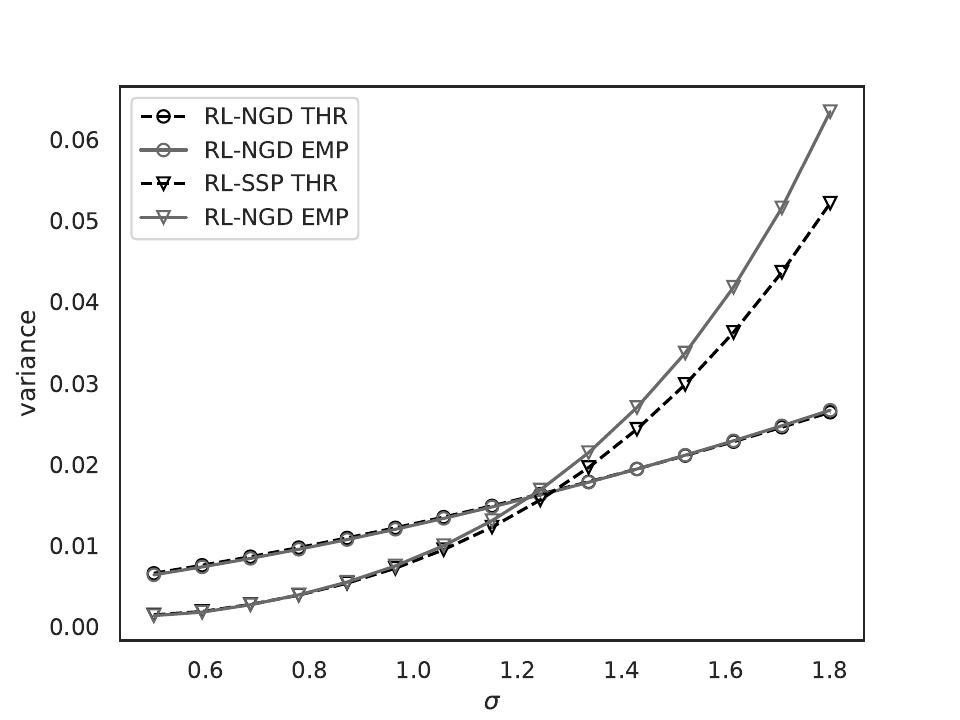}
         \caption{Setting 2: $n = 10,000$}
         \label{fig:var_sigma}
     \end{subfigure}

\hfill
\begin{subfigure}[b]{0.495\textwidth}
         \centering         \includegraphics[width=\textwidth]{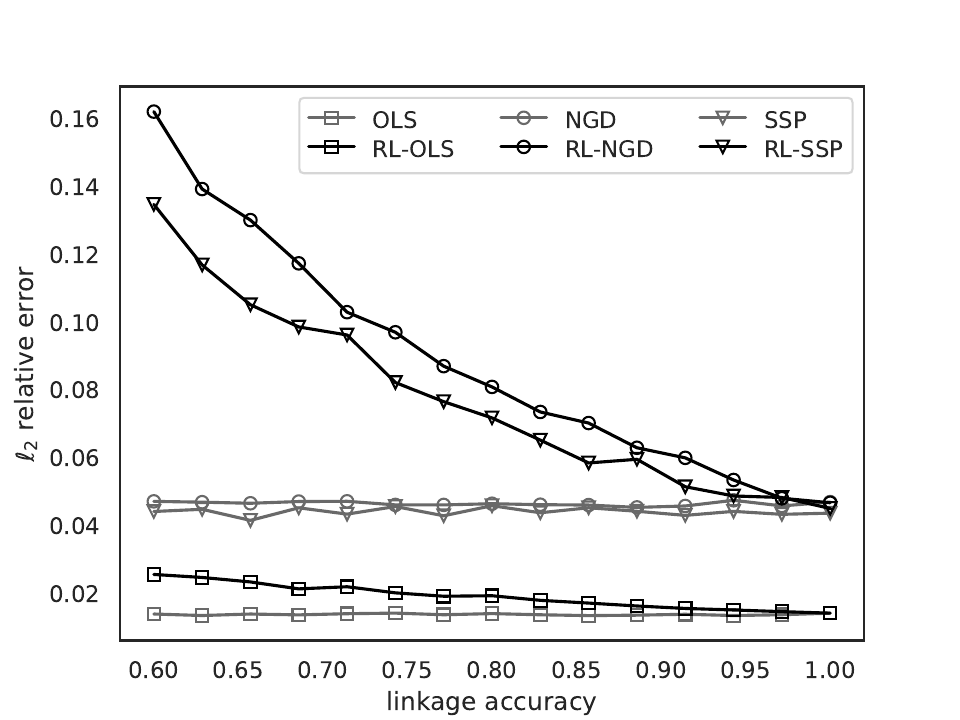}
         \caption{Setting 3: $\sigma = 1$, $n = 10,000$}
         \label{fig:err_lraccu}
     \end{subfigure}
     \hfill
     \begin{subfigure}[b]{0.495\textwidth}
         \centering
         \includegraphics[width=\textwidth]{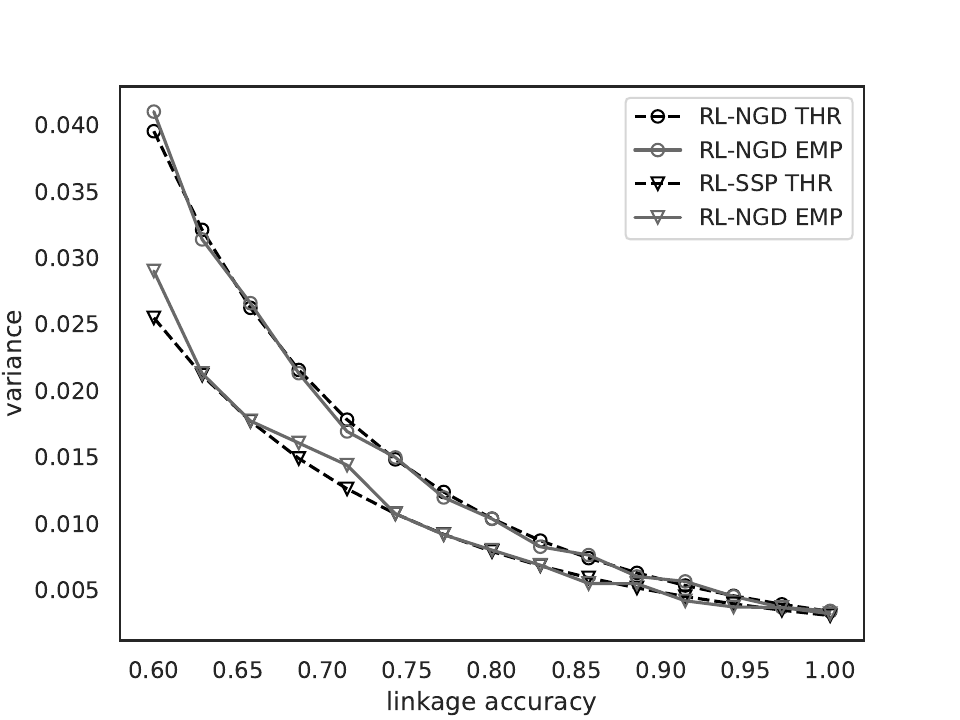}
         \caption{Setting 3: $\sigma = 1$, $n = 10,000$}
         \label{fig:var_lraccu}
     \end{subfigure}
     \caption{Average $\ell_2$-error and variance (theoretical versus empirical), with $(\epsilon, \delta) = (1, 8.5\times 10^{-5})$, against $n$ and $\sigma$, respectively.
     \\
     The ``RL-NGD'' and ``RL-SSP'' algorithms are our proposed post-RL approaches applied to the linked data, compared with the non-RL ``NGD'' and ``SSP'' methods applied to $(X, \bm y)$ (i.e., with no linkage errors).
     The non-private ``OLS'' and ``RL-OLS'' \citep{lahiri2005} results are also plotted for benchmarking.
     The number of iterations for ``RL-NGD'' results fall within the range of $(210, 260)$. }
     \label{fig:sim}
\end{figure}

In setting 1, where $\sigma$ is fixed at 1, Figure \ref{fig:err_n} shows the errors of all methods decrease 
with a growing sample size.
Due to the linkage errors,
the post-RL methods, including ${\hat {\bm \beta}^{\text{RL}}}$ and our two algorithms (denoted as ``RL-OLS'', ``RL-NGD'', and ``RL-SSP'' in the figures) run on the linked data $(X, \bm z)$, naturally always yield larger errors than their counterparts run on $(X, \bm y)$ when no linkage has to be done beforehand.
In this case, with $\sigma=1$, post-RL SSP outperforms post-RL NGD in terms of both accuracy and variance. 
However, as $\sigma$ increases, post-RL NGD algorithm starts perform better, as depicted in Figure \ref{fig:err_sigma} with varying $\sigma$. 
The error grows linearly for post-RL NGD and quadratically for post-RL SSP, which aligns with the theoretical results on the error bounds presented in Section \ref{subsec:bounds}.
Similar trends are observed for comparison of the non-RL NGD and SSP algorithms. 
{\color{black} 
In Figure \ref{fig:err_lraccu}, where linkage error tends to zero, the post-RL versions of the three estimators approach the corresponding non-RL versions. NGD and SSP methods have strictly larger error than OLS due to the cost of privacy.
}

Figures \ref{fig:var_n}, \ref{fig:var_sigma} and \ref{fig:var_lraccu} illustrate the empirical variances (EMP) against the theoretical variances (THR) of the proxy estimators given in Section \ref{subsec:var}. 
The theoretical variance of post-RL NGD closely aligns with the empirical variance at the chosen level of projection $C$. 
Recall that the theoretical variance would be exact when no projection is applied. 
Thus, with a lower level of projection on the gradient update, we anticipate it to be conservative.
On the other hand, the theoretical variance of post-RL SSP approximates well with moderately large $n$ and small $\sigma$. 
However, in scenarios with small $n$ and/or large $\sigma$, our theoretical variance may underestimate the reality due to the approximation's reliance on a first-order Taylor expansion. Therefore, one can include higher-order terms for better approximation.
{\color{black} 
In setting 3, where $n$ and $\sigma$ are fixed, as the linkage error vanishes, the variance reduces as a result of the smaller DP noise needed.
}

\subsection{Application to Synthetic Data}
\label{subsec:app}
Due to privacy concerns, pairs of datasets containing personal information, which serve as quasi-identifiers, are typically not made public.
We instead synthesize from a pair of generated quasi-identifiers datasets and real data for regression, as in \citet{Chambers2019SmallAE}.
For quasi-identifiers, we take advantage of the datasets generated by the Freely Extensible Biomedical Record Linkage (Febrl), which are available in the module \texttt{RecordLinkage} by \citet{rlpython} in Python. The pair of datasets for linkage we use correspond to 5000 individuals. The domain indicator (state) is used for blocking. The record linkage is performed based on the Jaro-Winkler score \citep{Jaro1989AdvancesIR} or exact comparison on 6 quasi-identifiers (names, date of birth, address, etc.). 
The maximum score is 6 for pairs that have exact alignment.
A threshold of 4 is chosen to link the records. For those left unlinked, we assign random links to ensure one-to-one linkage. 
A unique identifier is available in the dataset for verification purposes.
The resulting linkage accuracy for the 9 blocks are $\bm\gamma = (0.880,  0.903 , 0.918,  0.938, 0.905, 0.875, 0.898, 0.917, 0.898)$, making the overall accuracy 92.5\%. 
We adopt the ELE model for $Q$ and estimate it using $\bm\gamma$.

On the other hand, an anonymous dataset for regression comes from the Survey on Household Income and Wealth (SHIW) from the \citet{italydataset}. The net disposable income and consumption are the explanatory variable $X$ and the response $\bm y$, respectively. Since the SHIW dataset is larger, consisting of 8151 data points, we drop the outliers and randomly draw 5000 records and synthesize them with the Febrl dataset. 
{\color{black} Figure \ref{fig:syn_data} depicts the setup of the synthesization process. 
Using the unique identifier from the Ferlb dataset, the regression variables $(X, \bm y)$ are appended to the quasi-identifiers $(\Phi_{A}, \Phi_{B})$, resulting in two separate datasets: $(\Phi_{A}, X)$ and $(\Phi_{B}, \bm y)$.
Then, record linkage is performed by comparing $\Phi_{A}$ and $\Phi_{B}$ to output the linked data $(X, \bm z)$ and the matrix $Q$.  
}
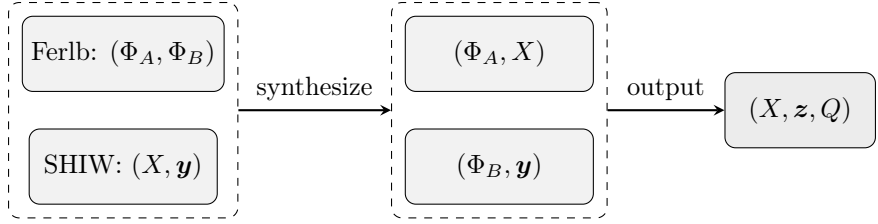
\begin{figure}[H]
\centering
\begin{tikzpicture}
    \node (ferlb) at (-5,1)  [rectangle, rounded corners, minimum width=2.5cm, minimum height=1cm, text centered, draw=black, fill=gray!10] {Ferlb: $(\Phi_{A}, \Phi_{B})$};
    \node (shiw) at (-5,-.5)  [rectangle, rounded corners, minimum width=2.5cm, minimum height=1cm, text centered, draw=black, fill=gray!10]  {SHIW: $(X, \bm y)$};
    
    \node (a) at (0,1)  [rectangle, rounded corners, minimum width=2.5cm, minimum height=1cm, text centered, draw=black, fill=gray!10] {$(\Phi_{A}, X)$};

    \node (b) at (0,-.5)  [rectangle, rounded corners, minimum width=2.5cm, minimum height=1cm, text centered, draw=black, fill=gray!10] {$ (\Phi_{B}, \bm y)$};
    
    \node (output) at (4,0.25)  [rectangle,  rounded corners, minimum width=2cm, minimum height=1cm, text centered, draw=black,  fill=gray!15] { $(X, \bm z, Q)$};

    \node [fit=(ferlb)(shiw), draw, inner sep=5pt, dashed, rounded corners,] (fs) {};

    \node [fit=(a)(b), draw, inner sep=5pt, dashed, rounded corners,] (ab) {};

     \draw [arrow] (fs) -- node [midway,above] {synthesize} (ab);
     
    \draw [arrow] (ab) -- node [midway,above] {output} (output);

\end{tikzpicture}
\caption{Synthesization.
The Ferlb dataset provides quasi-identifiers $(\Phi_{A}, \Phi_{B})$, and the SHIW dataset provides regression variables $ (X, \bm y)$.}
\label{fig:syn_data}
\end{figure}

To apply the proposed DP algorithms to the synthesized dataset, we set the (hyper)parameters as follows. The privacy budget is given by $(\epsilon, \delta) = (1, 8.5\times 10^{-5})$. 
The variance of the random error, $\sigma^2$, is estimated by the MSE.
The upper bounds in Assumptions (A1)-(A3) are set as: $M=1$, $c_0=1$, $c_x=\max(X) = 2.78$. 
In the NDG method, the projection level $C$ is set to 1.2.

To illustrate the importance of propagating linkage uncertainty when conducting downstream regression, we also apply the non-RL version of NGD and SSP algorithms. 
We obtain the non-RL regression results by running post-RL NGD and post-RL SSP methods with $M$ set to 0 and without converting $X$ into $W$.
This is equivalent to applying the non-RL methods discussed in \citet{Cai2019TheCO, Sheffet17} to the linked set $(X, \bm z)$ as if it were perfectly linked. 

\begin{figure}[ht]
    \centering
    \includegraphics[scale = 0.75
    ]{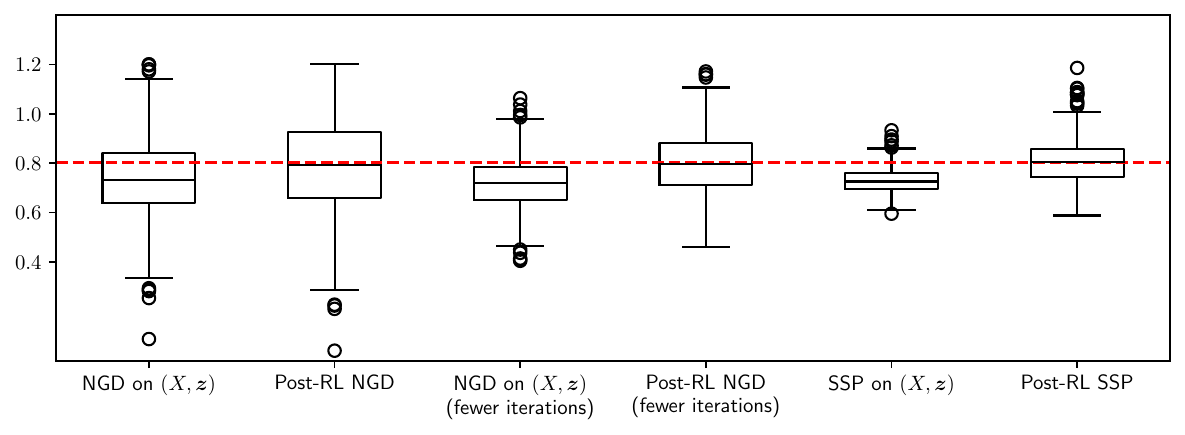} 
    \caption{Boxplots of DP estimates based on 1000 repetitions with $(\epsilon, \delta) = (1, 8.5\times 10^{-5})$. 
    \\
     {\color{black} The red dashed line indicates the OLS estimate. The proposed post-RL algorithms are compared with the non-RL ``NGD'' and ``SSP'' methods applied to $(X, \bm z)$ (i.e., without accounting present linkage errors). The third and fourth columns represent the two NGD methods running for $T = \lceil L^2\ln(c_0^2n)/3 \rceil$ iterations.}
    }
    \label{fig: boxplot}
\end{figure}

Figure \ref{fig: boxplot} 
displays the boxplots of the estimates of each algorithm. 
For each algorithm, a total of 1000 repetitions are done in order to reflect the randomness of the injected noise for privacy purposes.
The variables $X$ and $y$ are standardized before conducting simple linear regression. 
The OLS estimator on $(X, \bm y)$ (dashed line) is plotted for comparison. 
As can be seen, the DP estimators by running (non-RL) NGD and SSP on $(X, \bm z)$ directly are excessively biased as a consequence of ignoring linkage errors, even when the overall linkage accuracy is as high as 92.5\%.
Conversely, the results of post-RL NGD and post-RL SSP yields estimates centered around the OLS estimator but with higher variances, attributed to the cost of bias correction.
{\color{black} 
Post-RL NGD is more flexible due to hyperparameter tuning. Additionally, we run the NGD methods for fewer iterations with $T = \lceil L^2\ln(c_0^2n)/3 \rceil$, which is one-third of the value recommended by theory. We have found that this approach yields smaller variance while still producing accurate results in finite samples. Therefore, the theoretical number of iterations $T = \lceil L^2\ln(c_0^2n)\rceil$ may be conservative in some circumstances. Moderately reducing $T$ may lead to better results.
}

\section{Discussion}
\label{sec:discussion}

In this paper, we propose two differentially private algorithms for linear regression on a linked dataset that contains linkage errors, by leveraging the existing work on (1) linear regression after record linkage, and (2) differentially private linear regression. Figure \ref{Diagram} displays the connections among the related areas at a high level, including PPRL and SMPC mentioned in Sections \ref{sec:intro} and \ref{subsec:related-work}. Our work is the first one to simultaneously consider the linkage uncertainty propagation and the privatization of the output.
It also complements the area of PPRL where the main concern is the data leakage among different parties.
However, we do not discuss how to link the records in the first place and thus the security issues of the linkage process are beyond our scope. Instead, we treat record linkage from a secondary perspective: we begin with linked data prepared by an external entity and we have limited information about the linkage quality.

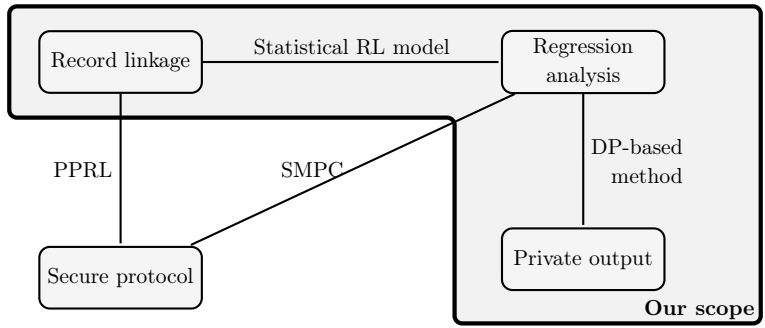
\begin{figure}[ht]
\centering
 \begin{tikzpicture}[%
  >=stealth,
  shorten >=1pt,
  node distance=2cm,
  on grid,
  auto,
  state/.append style={minimum size=2em},
  no arrow/.style={-,every loop/.append style={-}},
  thick,
  scale=0.74,
  every node/.style={scale=0.82}]
    \draw[rounded corners=1mm, black, fill=black!5, ultra thick] (-2,1) -- (11.5,1) -- (11.5,-4.7) -- (7,-4.7) -- (6, -4.7) -- (6, -1) -- (-2, -1) -- cycle;
    
    \node (RL)[box, label = below:] {Record linkage};
    \node (Model) [box, right of=RL, xshift=5.5cm, yshift=0cm] {Regression analysis};
    \node (security) [box, below of=RL, yshift=-1.5cm] {Secure protocol};
    \node (privacy) [box, below of=Model, yshift=-1.2cm] {Private output};
    \draw(10.4,-4.45) node {\textbf{Our scope}};
    \draw (RL) --node[above] {Statistical RL model} (Model) ;
    \draw (RL) --node[left] {PPRL} (security) ;
    \draw (Model) --node[right, align=right] {DP-based \\method} (privacy) ;
    \draw (Model) --node[left] {SMPC} (security) ;

\end{tikzpicture}
\caption{Diagram of related research areas. A secure protocol ensures no data is revealed to external parties during the linkage process.}
\label{Diagram}
\end{figure}
Specifically, we propose two post-RL algorithms
based on the noisy gradient descent and sufficient statistics perturbation methods from the DP literature. We provide privacy guarantees and finite-sample error bounds for these algorithms and discuss the variances of the private estimators. Our simulation studies and the application demonstrate the following: (1) the proposed estimators converge as the sample size increases; (2) post-RL linear regression incurs a higher cost than the non-RL counterpart in terms of the privacy-accuracy tradeoff; (3) The NGD method is flexible with hyperparameter tuning and can be applied to more general optimization problems; (4) SSP is specific to the least-squares problem, offering greater budget efficiency and more accurate results provided that the random error of the regression model is not too large.   

There are different directions to extend our work.
{\color{black} 
Note that there may be different scenarios of linking between the two datasets of the same set of entities. 
Assuming one-to-one linkage, as in our paper, is a canonical scenario. 
Although we do not explore it, we expect that our methods can be extended to other scenarios (e.g., one-to-many linkage) where $Q$ still makes sense. 
Extra assumptions may be required when determining the relevant sensitivities for privacy purposes.
}

One can also consider record linkage from a primary perspective. 
In addition to the traditional Fellegi–Sunter model, Bayesian approaches and machine learning-based methods have gained popularity. 
The record linkage may take forms other than the matching probability matrix adopted here.
Furthermore, when privacy concerns arise during the linkage process involving different parties, PPRL and SMPC protocols become essential. 
Tackling all the challenges depicted in Figure \ref{Diagram} simultaneously with a single efficient tool is of great practical use and significance. This interdisciplinary challenge requires expertise in both statistics and computer science.

Another important direction is exploring related statistical problems in the post-RL context, with or without privacy constraints. For example, confidence intervals and hypothesis testing are fundamental statistical inference tools.
Other potential problems that interest statisticians include high-dimensional linear regression and ridge regression. 





\appendix
\section{Proofs}\label{app}
\renewcommand{\thesection}{\Alph{section}}

\subsection{Lemmas}
\label{sec:lemmas}
The lemmas here support the proofs in Section \ref{sec:prfs}.

\begin{lemma}

If the minimum and maximum eigenvalues of $ W^\top  W/n$ satisfy $0< a < \lambda_{\min}( W^\top  W/n) \leq \lambda_{\max}( W^\top  W/n) < b$ for some constant $1 < L < \infty$, then the loss function
$\mathcal{L}_n(\bm \beta) = \frac{1}{2n}(\bm z - W\bm\beta)^\top (\bm z - W\bm\beta)$ is $b$-smooth and $a$-strongly convex.
\label{smoothconvex}
\end{lemma}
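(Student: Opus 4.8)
The plan is to show directly from the definitions that the loss function $\mathcal{L}_n(\bm\beta) = \frac{1}{2n}(\bm z - W\bm\beta)^\top(\bm z - W\bm\beta)$ is a quadratic whose Hessian is exactly $W^\top W / n$, and then translate the eigenvalue bounds on $W^\top W/n$ into the standard smoothness and strong convexity inequalities. The key observation is that this loss is a clean quadratic form, so its second derivative is constant and the analysis reduces to a routine spectral argument.

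First I would expand the loss as $\mathcal{L}_n(\bm\beta) = \frac{1}{2n}\bm z^\top\bm z - \frac{1}{n}\bm z^\top W\bm\beta + \frac{1}{2n}\bm\beta^\top W^\top W\bm\beta$, and compute the gradient $\nabla\mathcal{L}_n(\bm\beta) = \frac{1}{n}W^\top W\bm\beta - \frac{1}{n}W^\top\bm z$ and the Hessian $\nabla^2\mathcal{L}_n(\bm\beta) = \frac{1}{n}W^\top W$, which does not depend on $\bm\beta$. Next, recall the standard equivalences: a twice-differentiable function $\mathcal{L}_n$ is $b$-smooth (i.e. its gradient is $b$-Lipschitz) if and only if $\nabla^2\mathcal{L}_n \preceq b\, I_d$, and it is $a$-strongly convex if and only if $\nabla^2\mathcal{L}_n \succeq a\, I_d$. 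Since the Hessian equals $W^\top W/n$, both conditions follow immediately from the hypothesis $a < \lambda_{\min}(W^\top W/n) \le \lambda_{\max}(W^\top W/n) < b$, because the eigenvalue bounds are exactly the statements $a\,I_d \preceq W^\top W/n \preceq b\,I_d$ (with strict inequality on the spectrum).

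For completeness I would verify the defining inequalities directly rather than merely quoting the Hessian characterization. For $b$-smoothness, a first-order Taylor expansion of the quadratic gives the exact identity $\mathcal{L}_n(\bm\beta_1) = \mathcal{L}_n(\bm\beta_2) + \nabla\mathcal{L}_n(\bm\beta_2)^\top(\bm\beta_1 - \bm\beta_2) + \frac{1}{2n}(\bm\beta_1-\bm\beta_2)^\top W^\top W(\bm\beta_1-\bm\beta_2)$, and the quadratic remainder term is bounded above by $\frac{b}{2}\|\bm\beta_1-\bm\beta_2\|^2$ using $\lambda_{\max}(W^\top W/n) < b$; the analogous lower bound with $\frac{a}{2}\|\bm\beta_1-\bm\beta_2\|^2$ using $\lambda_{\min}(W^\top W/n) > a$ gives $a$-strong convexity. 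Equivalently, the gradient difference satisfies $\nabla\mathcal{L}_n(\bm\beta_1) - \nabla\mathcal{L}_n(\bm\beta_2) = \frac{1}{n}W^\top W(\bm\beta_1-\bm\beta_2)$, whose norm is controlled between $a\|\bm\beta_1-\bm\beta_2\|$ and $b\|\bm\beta_1-\bm\beta_2\|$ by the spectral bounds.

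I do not anticipate a genuine obstacle here, since the result is essentially a restatement of the spectral bounds for a quadratic objective; the only point requiring mild care is bookkeeping the connection between the eigenvalues of $W^\top W/n$ and the eigenvalues of the Hessian (they coincide), and making sure the strict versus non-strict inequalities in the hypothesis align with the $\preceq$ and $\succeq$ relations needed for the standard smoothness and strong-convexity definitions. One should also note in passing that the statement's hypothesis mentions a constant $L$ while the displayed bounds use $a$ and $b$; for the present lemma only the roles of $a$ as the strong-convexity parameter and $b$ as the smoothness parameter matter, so I would simply use $a$ and $b$ throughout.
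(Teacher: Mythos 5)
Your proposal is correct and matches the paper's argument in substance: the paper's proof verifies the defining inequalities directly via the gradient difference $\nabla\mathcal{L}_n(\bm\beta_1)-\nabla\mathcal{L}_n(\bm\beta_2)=\frac{1}{n}W^\top W(\bm\beta_1-\bm\beta_2)$ and then remarks that the Hessian eigenvalue characterization gives a neater proof, which is exactly the route you lead with. Both of your arguments (the spectral bound on the constant Hessian and the direct gradient computation) appear in the paper, so there is nothing substantively different here.
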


\begin{proof}
Since $\nabla \mathcal{L}_n(\bm \beta) = \frac{1}{n}(W^\top W\bm\beta - W^\top  \bm z)$, 
then for any $\bm \beta_1, \bm \beta_2$,
\begin{equation*}
    \begin{aligned}
        (\nabla \mathcal{L}_n(\bm \beta_1) - \nabla \mathcal{L}_n(\bm \beta_2))^\top (\bm \beta_1 - \bm \beta_2) 
        & = (\bm \beta_1 - \bm \beta_2)^\top \frac{W^\top W}{n} (\bm \beta_1 - \bm \beta_2)\\
        & \geq \lambda_{\min}\left(\frac{W^\top W}{n}\right)\|\bm \beta_1 - \bm \beta_2 \|^2 \\
        & \geq a\|\bm \beta_1 - \bm \beta_2  \|^2.
    \end{aligned}
\end{equation*}
By definition, $ \mathcal{L}_n(\bm \beta)$ is $a$-strongly convex.

For smoothness, we have 
\begin{equation*}
\begin{aligned}
     \|\nabla \mathcal{L}_n(\bm \beta_1) - \nabla \mathcal{L}_n(\bm \beta_2)\| & = \left\| \frac{W^\top W }{n} (\bm \beta_1 - \bm \beta_2)\right\| \\
     & \leq  \left\| \frac{W^\top W}{n} \right\| \|\bm \beta_1 - \bm \beta_2 \|  \\ 
     & = \lambda_{\max}\left(\frac{W^\top W}{n}\right)\|\bm \beta_1 - \bm \beta_2 \| \\
     & \leq b\|\bm \beta_1 - \bm \beta_2  \|.
\end{aligned}
\end{equation*}
The second equality is due to the fact that $\|A\| = |\lambda_{\max}(A)| $ for symmetric matrix $A$.
By definition, $ \mathcal{L}_n(\bm \beta)$ is $b$-smooth.

One can have a neater proof using 
alternative definitions (See Eq. (4) and (10) in \cite{lecture_smoothness2} for a twice differentiable function:

(1)$f$ is $\mu$–strongly convex if
$\lambda_{\min}(\nabla^2f) \geq \mu$;

(1)$f$ is $L$-smooth if
$\lambda_{\max}(\nabla^2f) \leq L$.

\end{proof}

\begin{lemma}[\cite{Bubeck2015}, Proof of Theorem 3.10.]
Let $f$ be $\alpha$-strongly convex and $\beta$-smooth on $\mathcal{X}$, and $x^*$ be the minimizer of $f$ on $\mathcal{X}$. Then
projected gradient descent with step size $\eta = \frac{1}{\beta}$ satisfies for $t \geq 0$,
\begin{equation*}
    \|x_{t+1} - x^*\|^2 \leq \left( 1- \frac{\alpha}{\beta}\right)\|x_{t} - x^*\|^2.
\end{equation*}

\label{conv}
\end{lemma}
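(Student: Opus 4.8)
The final statement to prove is Lemma \ref{conv}, the standard linear-convergence estimate for projected gradient descent on an $\alpha$-strongly convex, $\beta$-smooth function with step size $\eta = 1/\beta$. This is a classical result, attributed in the excerpt to \cite{Bubeck2015}, so the plan is to reproduce the textbook argument rather than invent something new.

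The plan is to work directly with one iteration of projected gradient descent, $x_{t+1} = \Pi_{\mathcal{X}}(x_t - \eta\, \nabla f(x_t))$ with $\eta = 1/\beta$, and to bound $\|x_{t+1} - x^*\|^2$ in terms of $\|x_t - x^*\|^2$. First I would invoke the nonexpansiveness of the Euclidean projection onto the convex set $\mathcal{X}$ (equivalently, the firm nonexpansiveness / obtuse-angle characterization of $\Pi_{\mathcal{X}}$), which lets me discard the projection: since $x^* = \Pi_{\mathcal{X}}(x^*)$ is a fixed point of the projection, one gets $\|x_{t+1} - x^*\|^2 \le \|x_t - \eta \nabla f(x_t) - x^*\|^2$, or more sharply the inequality that keeps the cross term with the gradient. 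Expanding the squared norm yields
\begin{equation*}
    \|x_{t+1}-x^*\|^2 \le \|x_t - x^*\|^2 - 2\eta\, \nabla f(x_t)^\top(x_t - x^*) + \eta^2 \|\nabla f(x_t)\|^2.
\end{equation*}

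Next I would control the inner-product and gradient-norm terms using the two structural hypotheses. Strong convexity gives the lower bound $\nabla f(x_t)^\top(x_t - x^*) \ge \alpha \|x_t - x^*\|^2$ (using $\nabla f(x^*)$ pointing appropriately, or the coercivity inequality for strongly convex functions), while smoothness, via the co-coercivity of the gradient of a $\beta$-smooth convex function, supplies a bound relating $\|\nabla f(x_t)\|^2$ to $\nabla f(x_t)^\top(x_t-x^*)$. Substituting $\eta = 1/\beta$ and combining these two bounds so that the $\eta^2\|\nabla f\|^2$ term is absorbed into the contraction is the crux of the computation; the telescoping cancellation is exactly what produces the clean factor $\bigl(1 - \tfrac{\alpha}{\beta}\bigr)$. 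The main obstacle — really the only place care is needed — is choosing the right pairing of inequalities so that the smoothness and strong-convexity estimates interlock cleanly at the step size $\eta = 1/\beta$; using the na\"ive Lipschitz bound $\|\nabla f(x_t)\| \le \beta\|x_t - x^*\|$ instead of co-coercivity gives a weaker constant, so the sharp result requires the co-coercivity route.

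I should note that in our application the feasible set is the $\ell_2$ ball $\{\bm\beta : \|\bm\beta\| \le C\}$, which is closed and convex, so projection nonexpansiveness applies without modification; moreover the loss $\mathcal{L}_n$ is shown to be $a$-strongly convex and $b$-smooth by Lemma \ref{smoothconvex}, so this lemma applies with $\alpha = a$, $\beta = b$, $\eta = 1/b$, giving contraction factor $(1 - a/b)$. Since the statement is quoted verbatim from \cite{Bubeck2015}, the expected proof is short and self-contained, and I would simply present the three displayed inequalities above in sequence and conclude by recognizing the resulting coefficient.
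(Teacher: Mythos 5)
The paper does not prove this lemma at all: it is quoted verbatim from \cite{Bubeck2015} (Proof of Theorem 3.10) and used as a black box in the proof of Theorem~\ref{thm:ngd-bound}, so there is no internal argument to compare yours against. Judged on its own merits, your outline follows the textbook \emph{unconstrained} argument, and its main line has a genuine gap in the constrained setting that the lemma as stated actually covers. After discarding the projection by nonexpansiveness you must bound $\eta^2\|\nabla f(x_t)\|^2$, but co-coercivity of a $\beta$-smooth convex gradient only controls $\|\nabla f(x_t)-\nabla f(x^*)\|^2$ by $\beta\,(\nabla f(x_t)-\nabla f(x^*))^\top(x_t-x^*)$. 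When $x^*$ is the minimizer of $f$ \emph{on} $\mathcal{X}$ it may sit on the boundary with $\nabla f(x^*)\neq 0$, and then $\|\nabla f(x_t)\|^2$ is not absorbed by the contraction. Concretely, $f(x)=\tfrac12(x+10)^2$ on $\mathcal{X}=[0,1]$ has $\alpha=\beta=1$ and $x^*=0$; the claimed bound forces $x_{t+1}=x^*$ after one step (which is true), yet $\|x_t-\eta\nabla f(x_t)-x^*\|^2=100$, so the step ``drop the projection, then expand'' provably cannot yield the stated inequality. The route Bubeck actually takes keeps the projection and works with the gradient mapping $g_{\mathcal{X}}(x)=\beta(x-x^+)$ where $x^+=\Pi_{\mathcal{X}}(x-\tfrac1\beta\nabla f(x))$, together with the inequality $f(x^+)-f(y)\le g_{\mathcal{X}}(x)^\top(x-y)-\tfrac1{2\beta}\|g_{\mathcal{X}}(x)\|^2-\tfrac\alpha2\|x-y\|^2$; expanding $\|x_{t+1}-x^*\|^2=\|x_t-\tfrac1\beta g_{\mathcal{X}}(x_t)-x^*\|^2$ and applying this with $y=x^*$ produces the factor $\bigl(1-\tfrac{\alpha}{\beta}\bigr)$ exactly. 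You gesture at ``the inequality that keeps the cross term with the gradient,'' which is the right instinct, but it is not developed, and the route you actually lay out fails.

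That said, in the only place the paper invokes the lemma, the unconstrained minimizer ${\hat {\bm \beta}^{\text{RL}}}$ is assumed to lie inside the ball $\{\bm\beta : \|\bm\beta\|\le C\}$, so the gradient of $\mathcal{L}_n$ vanishes at the minimizer and your argument does close there: co-coercivity gives $\|\nabla f(x_t)\|^2\le\beta\,\nabla f(x_t)^\top(x_t-x^*)$, and with $\eta=1/\beta$ the expansion collapses to $\|x_t-x^*\|^2-\tfrac1\beta\nabla f(x_t)^\top(x_t-x^*)\le\bigl(1-\tfrac{\alpha}{\beta}\bigr)\|x_t-x^*\|^2$ by strong convexity. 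So your sketch establishes a weaker statement that suffices for the paper's application, but not the lemma as stated for a general constrained minimizer.
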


\begin{lemma}
\label{lem:tri-ineq}
$\|\bm a+ \bm b\|^2 \leq (1+ c^2)\|\bm a \|^2+ (1+ 1/c^2)\|\bm b \|^2$ for any scalar $c \neq 0$.
\end{lemma}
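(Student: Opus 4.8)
The plan is to reduce the claimed inequality to a single elementary scalar inequality obtained by completing the square. First I would expand the left-hand side using the bilinearity of the inner product, writing
\[
\|\bm a + \bm b\|^2 = \|\bm a\|^2 + 2\langle \bm a, \bm b\rangle + \|\bm b\|^2.
\]
After cancelling the terms $\|\bm a\|^2$ and $\|\bm b\|^2$ that appear on both sides, the target bound becomes equivalent to the single statement that the cross term satisfies
\[
2\langle \bm a, \bm b\rangle \leq c^2 \|\bm a\|^2 + \frac{1}{c^2}\|\bm b\|^2.
\]

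Next I would control the cross term by Cauchy--Schwarz, observing that $2\langle \bm a, \bm b\rangle \leq 2|\langle \bm a, \bm b\rangle| \leq 2\|\bm a\|\,\|\bm b\|$, so it suffices to establish $2\|\bm a\|\,\|\bm b\| \leq c^2\|\bm a\|^2 + \frac{1}{c^2}\|\bm b\|^2$. This final step is just the weighted arithmetic--geometric mean (Young's) inequality, which I would derive by expanding the manifestly nonnegative quantity
\[
\left( |c|\,\|\bm a\| - \frac{1}{|c|}\,\|\bm b\| \right)^2 \geq 0.
\]
Rearranging gives precisely $c^2\|\bm a\|^2 + \frac{1}{c^2}\|\bm b\|^2 \geq 2\|\bm a\|\,\|\bm b\|$, and chaining this with the Cauchy--Schwarz estimate and the expansion above yields the claim.

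There is no substantive obstacle in this argument; the only point requiring any care is the sign of the free scalar $c$, which I would handle by working throughout with $c^2$ and $|c|$ so that the reasoning remains valid for every $c \neq 0$. It is also worth noting in passing that the bound is tight, with equality holding when $\bm a$ and $\bm b$ are aligned and $\|\bm b\| = c^2\|\bm a\|$, confirming that the weights $(1+c^2)$ and $(1+1/c^2)$ cannot be improved uniformly in $c$.
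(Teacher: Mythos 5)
Your proof is correct and follows essentially the same route as the paper's: both reduce the claim to bounding the cross term $2\|\bm a\|\,\|\bm b\|$ by $c^2\|\bm a\|^2 + \tfrac{1}{c^2}\|\bm b\|^2$ via the nonnegativity of $\left(|c|\,\|\bm a\| - \tfrac{1}{|c|}\|\bm b\|\right)^2$. The only cosmetic difference is that you expand the inner product and invoke Cauchy--Schwarz, whereas the paper first applies the triangle inequality and then squares; the key step is identical.
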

\begin{proof}
Since
\begin{equation*}
    c^2\|\bm a\|^2 + \frac{1}{c^2}\|\bm b \|^2 - 2 \|\bm a \|\|\bm b\| = \left(\|c\bm a \| - \left\|\frac{1}{c}\bm b \right\|\right)^2 \geq 0,
\end{equation*}
it follows that
\begin{equation*}
    \|\bm a+ \bm b\|^2 \leq ( \|\bm a \| + \|\bm b\|) ^2 \leq \|\bm a \|^2 + \|\bm b \|^2 + 2 \|\bm a \|\|\bm b\| \leq  (1+ c^2)\|\bm a \|^2+ (1+ \frac{1}{c^2})\|\bm b \|^2.
\end{equation*}
\end{proof}

\begin{lemma} [\cite{Cai2019TheCO}, Lemma A.2]
For $X_1, ..., X_k \stackrel{i.i.d.}{\sim} \chi^2_d$, $\zeta>0$, $0<\rho<1$, 
$$
\mathbb{P}\left( \sum_{j=1}^k \zeta\rho^j X_j > \frac{\rho\zeta d}{1-\rho} + s\right) \leq \exp\left( -\min \left( \frac{(1 - \rho^2)s^2}{8\rho^2\zeta^2d}, \frac{s}{8\rho\zeta}\right)\right).
$$
\label{sum_of_chi}
\end{lemma}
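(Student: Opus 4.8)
The plan is to prove this by a Chernoff (exponential Markov) argument tuned to the chi-squared moment generating function, which produces a Bernstein-type tail that is sub-Gaussian for small $s$ and sub-exponential for large $s$; those two regimes are exactly what yield the $\min$ in the statement. Write $S = \sum_{j=1}^k \zeta\rho^j X_j$, note that $\mathbb{E}[S] = \zeta d\sum_{j=1}^k \rho^j \le \tfrac{\rho\zeta d}{1-\rho}$, so the offset in the claim is an upper bound for the mean and the event is a genuine deviation of size $s$.

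First I would fix $\lambda > 0$ in an admissible range and apply Markov to $e^{\lambda S}$:
\[
\mathbb{P}\left(S > \tfrac{\rho\zeta d}{1-\rho} + s\right) \le \exp\left(-\lambda\Big(\tfrac{\rho\zeta d}{1-\rho} + s\Big)\right)\mathbb{E}\big[e^{\lambda S}\big].
\]
By independence the MGF factorizes, and using $\mathbb{E}[e^{tX}] = (1-2t)^{-d/2}$ for $X \sim \chi^2_d$,
\[
\mathbb{E}\big[e^{\lambda S}\big] = \prod_{j=1}^k (1 - 2\lambda\zeta\rho^j)^{-d/2},
\]
which is finite precisely when the largest weight obeys $2\lambda\zeta\rho < 1$. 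I would record the constraint $\lambda \le \tfrac{1}{4\zeta\rho}$, under which every $u_j := 2\lambda\zeta\rho^j$ lies in $[0,\tfrac12]$.

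Next I would bound the log-MGF using $-\ln(1-u) \le u + u^2$ on $[0,\tfrac12]$, so that $-\tfrac{d}{2}\sum_j \ln(1-u_j) \le \tfrac{d}{2}\sum_j u_j + \tfrac{d}{2}\sum_j u_j^2$. Summing the two geometric series gives $\sum_j u_j \le 2\lambda\zeta\tfrac{\rho}{1-\rho}$ and $\sum_j u_j^2 \le 4\lambda^2\zeta^2\tfrac{\rho^2}{1-\rho^2}$. The crucial cancellation is that the linear-in-$\lambda$ contribution $\lambda\zeta d\tfrac{\rho}{1-\rho}$ exactly cancels the offset $\lambda\tfrac{\rho\zeta d}{1-\rho}$ from the mean bound, leaving
\[
\mathbb{P}\left(S > \tfrac{\rho\zeta d}{1-\rho} + s\right) \le \exp\left(-\lambda s + \tfrac{2d\zeta^2\rho^2}{1-\rho^2}\,\lambda^2\right).
\]

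Finally I would optimize the quadratic exponent over $\lambda \in (0, \tfrac{1}{4\zeta\rho}]$. The unconstrained minimizer is $\lambda^\star = \tfrac{s(1-\rho^2)}{4d\zeta^2\rho^2}$, giving the value $-\tfrac{(1-\rho^2)s^2}{8\rho^2\zeta^2 d}$, the first term of the $\min$; this is admissible exactly when $s \le \tfrac{d\zeta\rho}{1-\rho^2}$. In the complementary regime I would take the boundary $\lambda = \tfrac{1}{4\zeta\rho}$, where the exponent equals $-\tfrac{s}{4\zeta\rho} + \tfrac{d}{8(1-\rho^2)}$, and since then $\tfrac{d}{8(1-\rho^2)} \le \tfrac{s}{8\rho\zeta}$ the exponent is at most $-\tfrac{s}{8\rho\zeta}$, the second term. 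Taking the better of the two cases yields the stated bound. The hard part will be this constrained optimization: one must respect the MGF domain so as not to select an inadmissible $\lambda$, and verify that at the boundary the residual $\tfrac{d}{8(1-\rho^2)}$ is absorbed by half of $\tfrac{s}{4\rho\zeta}$ precisely in the large-$s$ regime, which is what makes the sub-Gaussian and sub-exponential pieces stitch together into a single $\min$.
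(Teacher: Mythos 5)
The paper does not prove this lemma itself; it imports it verbatim from \citet{Cai2019TheCO} (their Lemma A.2), whose own derivation is the same Chernoff/Bernstein argument you give. Your proof is correct: the MGF constraint $\lambda\le\tfrac{1}{4\zeta\rho}$ keeps every $u_j=2\lambda\zeta\rho^j$ in $[0,\tfrac12]$ where $-\ln(1-u)\le u+u^2$ holds, the linear term cancels the mean offset exactly, and the regime split $s\lessgtr\tfrac{d\zeta\rho}{1-\rho^2}$ coincides precisely with which of the two quantities attains the $\min$, so in each case the bound you obtain is exactly $\exp(-\min(\cdot,\cdot))$.
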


\begin{lemma}[\cite{Sheffet17}, Proof of 
Proposition D.2]
For any invertible matrix $A$ and any matrix $B$ such that $(I + BA^{-1})$ is invertible, 
\begin{equation*}
    (A+B)^{-1} = A^{-1} - A^{-1}(I + BA^{-1})^{-1}BA^{-1}.
\end{equation*}
\label{lem: matrix}

\end{lemma}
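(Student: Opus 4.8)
The statement is a purely algebraic matrix identity of Sherman--Morrison--Woodbury type, so the plan is not to invoke any deep structure but simply to rearrange inverses while respecting non-commutativity. First I would confirm that the left-hand side is even well-defined under the two stated hypotheses. Writing $A + B = (I + BA^{-1})A$, which one checks by expanding $(I + BA^{-1})A = A + BA^{-1}A = A + B$, exhibits $A+B$ as a product of two invertible matrices ($A$ is invertible by assumption and $I + BA^{-1}$ is invertible by assumption). Hence $A+B$ is invertible and
\begin{equation*}
(A+B)^{-1} = A^{-1}(I + BA^{-1})^{-1}.
\end{equation*}

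The key step is then a one-line simplification of $(I+M)^{-1}$ with $M := BA^{-1}$. From the trivial identity $(I+M)^{-1}(I+M) = I$, i.e. $(I+M)^{-1} + (I+M)^{-1}M = I$, I obtain
\begin{equation*}
(I + BA^{-1})^{-1} = I - (I + BA^{-1})^{-1}BA^{-1}.
\end{equation*}
Substituting this into the expression for $(A+B)^{-1}$ above and distributing $A^{-1}$ on the left yields exactly
\begin{equation*}
(A+B)^{-1} = A^{-1} - A^{-1}(I + BA^{-1})^{-1}BA^{-1},
\end{equation*}
which is the claim.

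As an alternative that avoids factoring, I could verify the identity directly: denoting the claimed inverse by $C := A^{-1} - A^{-1}(I + BA^{-1})^{-1}BA^{-1}$, I would compute $C(A+B)$ and simplify using the relation $BA^{-1}(A+B) = (I + BA^{-1})B$ (valid since $BA^{-1}A + BA^{-1}B = B + BA^{-1}B$), after which the factor $(I+BA^{-1})^{-1}(I+BA^{-1})$ collapses to the identity and the two surviving $A^{-1}B$ terms cancel, leaving $I$. Either route is elementary; the only genuine care required is bookkeeping of left versus right multiplication, so that the correct factor $I + BA^{-1}$ (rather than $I + A^{-1}B$) appears throughout, and noting at the outset that the hypotheses guarantee invertibility of $A+B$ so that the left-hand side is meaningful. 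I expect this bookkeeping to be the sole, minor obstacle; there is no analytic difficulty.
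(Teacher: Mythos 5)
Your proof is correct. The paper does not actually supply its own argument for this lemma---it is stated with a citation to \cite{Sheffet17} (Proof of Proposition D.2) and used as a black box---so there is nothing to compare against; your factorization $A+B=(I+BA^{-1})A$ followed by the identity $(I+M)^{-1}=I-(I+M)^{-1}M$ is the standard derivation, and your observation that the hypotheses force $A+B$ to be invertible (as a product of two invertible matrices) is a worthwhile point that the paper's statement leaves implicit.
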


\begin{lemma}
\label{lem:xy-var}
    Let X be a $d\times d$ symmetric random matrix with i.i.d upper triangle entries. Each entry has mean 0 and variance $\sigma^2$. Let $\bm y$ be a $d$-dimensional random vector, which has mean $\bm \mu$ and covariance matrix $\Sigma$. Let $\Sigma_{X\bm y}$ denote the covariance matrix of $X\bm y$. Then, the diagonal entries of $\Sigma_{X\bm y}$ are given by
    \begin{equation}
        (\Sigma_{X\bm y})_{kk} = \sigma^2 \sum_{i=1}^d(\mu_i^2 + \Sigma_{ii});
    \end{equation}
    the off-diagonal entries are
    \begin{equation}
        (\Sigma_{X\bm y})_{kl} = \sigma^2 (\mu_k\mu_l + \Sigma_{kl}) \text{ for } k \neq l.
    \end{equation}    
\end{lemma}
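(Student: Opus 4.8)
The plan is to compute the entries of $\Sigma_{X\bm y}$ directly from the definition of covariance, treating the random matrix $X$ as independent of $\bm y$ (as it is in our application, where $X$ plays the role of the freshly drawn privacy noise). Write the $k$-th coordinate of $X\bm y$ as $w_k = \sum_{i=1}^d X_{ki}\,y_i$. The first step is to observe that, since each entry of $X$ has mean zero and $X$ is independent of $\bm y$, we have $\mathbb{E}[w_k] = \sum_{i=1}^d \mathbb{E}[X_{ki}]\,\mathbb{E}[y_i] = 0$ for every $k$. Consequently $(\Sigma_{X\bm y})_{kl} = \operatorname{Cov}(w_k, w_l) = \mathbb{E}[w_k w_l]$, so the whole problem collapses to a pure second-moment calculation.

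Next I would expand the product and push the expectation through using independence of $X$ and $\bm y$:
\begin{equation*}
\mathbb{E}[w_k w_l] = \sum_{i=1}^d\sum_{j=1}^d \mathbb{E}[X_{ki}X_{lj}\,y_i y_j] = \sum_{i=1}^d\sum_{j=1}^d \mathbb{E}[X_{ki}X_{lj}]\,\mathbb{E}[y_i y_j].
\end{equation*}
The crux of the argument is then to evaluate $\mathbb{E}[X_{ki}X_{lj}]$. Because $X$ is symmetric, the genuinely independent random variables are indexed by \emph{unordered} pairs $\{a,b\}$ with $a\le b$, and any two distinct such entries are independent with mean zero. Hence $\mathbb{E}[X_{ki}X_{lj}]$ equals $\sigma^2$ exactly when $\{k,i\}=\{l,j\}$ as unordered pairs, and vanishes otherwise. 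This combinatorial bookkeeping of the symmetry constraint is the one delicate step and the main obstacle: I expect the care to lie in verifying that, for the diagonal entry ($l=k$), the condition $\{k,i\}=\{k,j\}$ forces $i=j$, whereas for an off-diagonal entry ($k\ne l$) the condition $\{k,i\}=\{l,j\}$ has the \emph{unique} solution $i=l,\ j=k$. It is precisely the symmetry $X_{ki}=X_{ik}$ that makes this off-diagonal cross term survive; for a non-symmetric $X$ the off-diagonal covariance would be zero.

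Finally I would collapse the double sum according to these two cases. For the diagonal, only $i=j$ contributes, giving $(\Sigma_{X\bm y})_{kk} = \sigma^2\sum_{i=1}^d \mathbb{E}[y_i^2] = \sigma^2\sum_{i=1}^d(\mu_i^2+\Sigma_{ii})$, using $\mathbb{E}[y_i^2]=\mu_i^2+\Sigma_{ii}$. For $k\ne l$, only the single term $i=l,\ j=k$ survives, giving $(\Sigma_{X\bm y})_{kl} = \sigma^2\,\mathbb{E}[y_k y_l] = \sigma^2(\mu_k\mu_l+\Sigma_{kl})$, using $\mathbb{E}[y_k y_l]=\mu_k\mu_l+\Sigma_{kl}$. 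These are exactly the claimed formulas, completing the proof.
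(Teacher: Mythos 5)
Your proof is correct, and it reaches the same formulas by a somewhat different organization of the computation. You work entrywise: writing $w_k=\sum_i X_{ki}y_i$, reducing everything to the second moment $\mathbb{E}[w_kw_l]=\sum_{i,j}\mathbb{E}[X_{ki}X_{lj}]\,\mathbb{E}[y_iy_j]$, and isolating the single combinatorial fact that $\mathbb{E}[X_{ki}X_{lj}]=\sigma^2$ exactly when the unordered pairs $\{k,i\}$ and $\{l,j\}$ coincide. The paper instead decomposes $X\bm y=\sum_i \bm x_i y_i$ into column contributions, computes $\operatorname{Var}(\bm x_iy_i)$ via the law of total variance conditioning on $y_i$, and handles the cross terms through $\operatorname{Cov}(\bm x_iy_i,\bm x_jy_j)=\mathbb{E}(y_iy_j)\,\mathbb{E}(\bm x_i\bm x_j^\top)$, where the matrix $\mathbb{E}(\bm x_i\bm x_j^\top)$ has a single nonzero entry at position $(j,i)$ — which is exactly your unordered-pair observation in disguise. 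Your version is the more elementary and arguably cleaner of the two: it avoids conditioning entirely, makes the role of the symmetry $X_{ki}=X_{ik}$ in producing the surviving off-diagonal term completely explicit, and correctly resolves the case analysis ($i=j$ on the diagonal; the unique solution $i=l$, $j=k$ off the diagonal). The paper's column decomposition buys a compact matrix-level statement ($\operatorname{Var}(X\bm y)=\sigma^2(\bm\mu\bm\mu^\top+\Sigma)$ with the diagonal replaced by the trace, as noted in its remark) but at the cost of a slightly heavier computation. One point you handle more carefully than the statement itself: you make explicit that $X$ must be independent of $\bm y$, an assumption the lemma omits but which both proofs (and the application, where $X$ is the freshly drawn privacy noise $U$) rely on.
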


\begin{proof}
    Let $X = (\bm x_1, \bm x_2, ..., \bm x_d)$ where $\bm x_i = (x_{i1},x_{i2}, ..., x_{id})^T$ and $y = (y_{1},y_{2}, ..., y_{d})^T$.
    Then,
    \begin{equation}
        X\bm y = \sum_{i=1}^d \bm x_i y_i.
    \end{equation}
    Therefore,
    \begin{equation}
    \label{eq:var-xy}
        \operatorname{Var}(X\bm y) = \operatorname{Var}\left(\sum_{i=1}^d \bm x_i y_i\right) = \sum_{i=1}^d \operatorname{Var}( \bm x_i y_i) + \sum_{i\neq j}^d\operatorname{Cov}(\bm x_i y_i, \bm x_j y_j).
    \end{equation}
For the first term,
\begin{equation*}
\begin{aligned}
    \operatorname{Var}( \bm x_i y_i) & = \mathbb{E}[\operatorname{Var}( \bm x_i y_i \mid y_i)] + \operatorname{Var}[\mathbb{E}( \bm x_i y_i \mid y_i)] \\
    & = \mathbb{E}[y_i^2 \operatorname{Var}( \bm x_i  \mid y_i)] + \operatorname{Var}[y_i\mathbb{E}( \bm x_i \mid y_i)]\\
    & = \mathbb{E}(y_i^2) \operatorname{Var}(\bm x_i) + \operatorname{Var}(y_i) \mathbb{E}(\bm x_i) \\
    & = \sigma^2(\mu_i^2 + \Sigma_{ii})I_d.
\end{aligned}
\end{equation*}
For $i \neq j$,
\begin{equation*}
\begin{aligned}
        \operatorname{Cov}(\bm x_i y_i, \bm x_j y_j) & = \mathbb{E}[\bm x_i y_i(\bm x_j y_j)^T] - \mathbb{E}(\bm x_i y_i)\mathbb{E}(\bm x_j y_j) \\
        & = \mathbb{E}[(y_iy_j)\bm x_i \bm x_j ^T] - \mathbb{E}(\bm x_i)  \mathbb{E}(y_i)\mathbb{E}(\bm x_j) \mathbb{E}( y_j) \\
        & = \mathbb{E}(y_iy_j)\mathbb{E}(\bm x_i \bm x_j ^T),
\end{aligned}
\end{equation*}
where
\begin{equation*}
    \mathbb{E}(y_iy_j) = \mathbb{E}(y_i)\mathbb{E}(y_j) + \operatorname{Cov}(y_i, y_j) = \mu_i\mu_j + \Sigma_{ij},
\end{equation*}
and $\mathbb{E}(\bm x_i \bm x_j ^T)$ is a $d \times d$ matrix with the $(j,i)$ entry being $\sigma^2$ and 0 other wise. 
Putting $\operatorname{Var}( \bm x_i y_i)$ and $\operatorname{Cov}(\bm x_i y_i, \bm x_j y_j)$ back in (\ref{eq:var-xy}),
we know that 
$\operatorname{Var}( X \bm y)$ has the diagonal entries $$\sigma^2 \sum_{i=1}^d(\mu_i^2 + \Sigma_{ii})$$ and the $(k,l)$ off-diagonal entry $$\sigma^2 (\mu_k\mu_l + \Sigma_{kl}).$$  
\end{proof}

\begin{remark}
    In Lemma~\ref{lem:xy-var},
    $\operatorname{Var}( X \bm y)$ is given by $\sigma^2(\bm \mu \bm \mu ^T + \Sigma )$ with the diagonal entries replaced by its trace.
\end{remark}

\begin{lemma}
\label{lem:cov1}
    Let $X$ be a $d\times d$ random matrix with $ \mathbb{E} X = 0_{d\times d}$. Let $\bm y$ be a $d$-dimensional random vector that is independent of $X$.
    Then, $
        \operatorname{Cov}(\bm y, X\bm y) = 0_{d\times d}
   $.
\end{lemma}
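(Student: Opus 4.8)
The plan is to compute the cross-covariance directly from its definition,
$\operatorname{Cov}(\bm y, X\bm y) = \mathbb{E}[\bm y (X\bm y)^\top] - \mathbb{E}[\bm y]\,\mathbb{E}[X\bm y]^\top$, and to show that both terms on the right-hand side vanish. The only ingredients I will need are the independence of $X$ and $\bm y$ together with the hypothesis $\mathbb{E}X = 0_{d\times d}$.

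First I would dispatch the second term. Since $X$ and $\bm y$ are independent, $\mathbb{E}[X\bm y] = \mathbb{E}[X]\,\mathbb{E}[\bm y] = 0_{d\times d}\,\mathbb{E}[\bm y] = \bm 0$, so the product $\mathbb{E}[\bm y]\,\mathbb{E}[X\bm y]^\top$ is the zero matrix irrespective of the mean of $\bm y$. This already disposes of one of the two pieces.

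The remaining work lies in the first term, which I would handle entrywise to keep the matrix bookkeeping transparent. Writing $(\bm y (X\bm y)^\top)_{kl} = y_k \sum_m X_{lm} y_m = \sum_m y_k y_m X_{lm}$, I take expectations and use independence to factor $\mathbb{E}[y_k y_m X_{lm}] = \mathbb{E}[y_k y_m]\,\mathbb{E}[X_{lm}]$. Because $\mathbb{E}X = 0$ forces $\mathbb{E}[X_{lm}] = 0$ for every $l,m$, each such summand is zero, so the $(k,l)$ entry—and hence the whole matrix $\mathbb{E}[\bm y (X\bm y)^\top]$—vanishes. Combining the two terms yields $\operatorname{Cov}(\bm y, X\bm y) = 0_{d\times d}$.

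Alternatively, in place of the entrywise argument I would condition on $\bm y$ and invoke the tower property: by independence, $\mathbb{E}[(X\bm y)^\top \mid \bm y] = \bm y^\top \mathbb{E}[X^\top \mid \bm y] = \bm y^\top \mathbb{E}[X]^\top = \bm 0^\top$, whence $\mathbb{E}[\bm y (X\bm y)^\top] = \mathbb{E}\big[\bm y\,\mathbb{E}[(X\bm y)^\top \mid \bm y]\big] = \bm 0$. There is no genuine obstacle in this lemma; the one point deserving care is the justification of the factorization of the mixed expectation, which is precisely where the full independence of $X$ and $\bm y$ (rather than mere uncorrelatedness) and the zero-mean hypothesis on $X$ are used.
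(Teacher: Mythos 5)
Your proof is correct and follows essentially the same route as the paper's: both expand the covariance from its definition, use independence to reduce everything to expectations involving $\mathbb{E}X$, and conclude from $\mathbb{E}X = 0_{d\times d}$. The only cosmetic difference is that you work with the uncentered form $\mathbb{E}[\bm y (X\bm y)^\top] - \mathbb{E}[\bm y]\,\mathbb{E}[X\bm y]^\top$ and justify the key factorization entrywise (or via conditioning), whereas the paper centers $\bm y$ first; the substance is identical.
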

\begin{proof}
\begin{equation*}
    \begin{aligned}
    \operatorname{Cov}(\bm y, X\bm y)
    & = \mathbb{E}[(\bm y - \mathbb{E} \bm y)(X\bm y - \mathbb{E} X \bm y)^\top]\\
    & = \mathbb{E}[(\bm y - \mathbb{E} \bm y)(X\bm y - \mathbb{E} X \mathbb{E} \bm y)^\top]\\
    & = \mathbb{E}[(\bm y - \mathbb{E} \bm y)(X\bm y)^\top]\\
    & = \mathbb{E}\bm y \bm y^\top \mathbb{E} X^\top - \mathbb{E} \bm y \mathbb{E} \bm y^\top \mathbb{E} X^\top  = 0_{d\times d}.
\end{aligned}
\end{equation*}

\end{proof}

\begin{lemma}
\label{lem:cov2}
    Let $X$ be a $d\times d$ random matrix with $ \mathbb{E} X = 0_{d\times d}$. Let $\bm y$ be a $d$-dimensional random vector. Let $\bm z$ be another $d$-dimensional random vector that is independent of both $X$ and $\bm y$ and $ \mathbb{E} \bm z = 0_{d}$. 
    Then, $
        \operatorname{Cov}(X\bm y, X\bm z) = 0_{d\times d}
    $.
\end{lemma}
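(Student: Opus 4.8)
The plan is to expand the covariance directly and reduce it to a single expectation that can be killed by conditioning on the pair $(X,\bm y)$ and exploiting the independence of $\bm z$ together with $\mathbb{E}\bm z = \bm 0$. This mirrors the strategy used in Lemma \ref{lem:cov1}, but here the sandwiched structure $X(\cdot)X^\top$ forces us to be slightly more careful about ordering.

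First I would compute the mean $\mathbb{E}(X\bm z)$. Since $\bm z$ is independent of $X$, we have $\mathbb{E}(X\bm z) = \mathbb{E}(X)\,\mathbb{E}(\bm z) = \bm 0$, using $\mathbb{E}\bm z = \bm 0$ (the assumption $\mathbb{E}X = 0$ gives the same conclusion). Consequently the covariance collapses to $\operatorname{Cov}(X\bm y, X\bm z) = \mathbb{E}[(X\bm y)(X\bm z)^\top] - \mathbb{E}(X\bm y)\,\mathbb{E}(X\bm z)^\top = \mathbb{E}[X\bm y\,\bm z^\top X^\top]$, because the second term vanishes once $\mathbb{E}(X\bm z) = \bm 0$ is in hand.

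The key step is then to evaluate $\mathbb{E}[X\bm y\,\bm z^\top X^\top]$ by the tower property, conditioning on $(X,\bm y)$. Since $\bm z$ is independent of both $X$ and $\bm y$, the left factor $X\bm y$ and the right factor $X^\top$ are measurable with respect to the conditioning $\sigma$-field and may be pulled outside, leaving $\mathbb{E}[\bm z^\top \mid X,\bm y] = (\mathbb{E}\bm z)^\top = \bm 0^\top$ in the middle. Hence the inner conditional expectation equals $X\bm y\,\bm 0^\top X^\top = 0_{d\times d}$, and taking the outer expectation yields $\operatorname{Cov}(X\bm y, X\bm z) = 0_{d\times d}$.

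I expect the only real care needed to lie in the bookkeeping of the matrix–vector ordering in $X\bm y\,\bm z^\top X^\top$: one must confirm that conditioning on $(X,\bm y)$ legitimately freezes both $X\bm y$ and $X^\top$ while $\bm z^\top$ remains the sole random factor to be averaged. This is exactly the place where independence of $\bm z$ from the \emph{pair} $(X,\bm y)$—rather than from $X$ alone—is essential, and it is the hypothesis that makes the argument go through.
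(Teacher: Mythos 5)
Your proposal is correct and follows essentially the same route as the paper: both reduce $\operatorname{Cov}(X\bm y, X\bm z)$ to $\mathbb{E}[X\bm y\,\bm z^\top X^\top]$ using $\mathbb{E}(X\bm z)=\mathbb{E}(X)\mathbb{E}(\bm z)=\bm 0$, and then annihilate that expectation via the independence of $\bm z$ from $(X,\bm y)$ together with $\mathbb{E}\bm z=\bm 0$. The paper phrases the final step as ``the entries of $\bm z$ appear linearly in every entry of $X\bm y\,\bm z^\top X^\top$'' while you phrase it as conditioning on $(X,\bm y)$ and applying the tower property; these are the same argument.
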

\begin{proof}
\begin{equation*}
    \begin{aligned}
    \operatorname{Cov}(X\bm y, X\bm z)
    & = \mathbb{E}[(X\bm y - \mathbb{E}X \bm y)(X\bm z - \mathbb{E} X \bm z)^\top]\\
    & = \mathbb{E}[(X\bm y - \mathbb{E}X \mathbb{E}\bm y)(X\bm z - \mathbb{E} X \mathbb{E}\bm z)^\top]\\
    & = \mathbb{E}[X \bm y \bm z^\top X^\top].
\end{aligned}
\end{equation*}
Since $\bm z$ is independent of both $X$ and $\bm y$ and the entries of $\bm z$ appear linearly in every entry of $X \bm y \bm z^\top X^\top$. By the zero-expectation of $\bm z$,
$ \operatorname{Cov}(X\bm y, X\bm z) = 0_{d\times d}$.

\end{proof}

\subsection{Proofs}
This section provides the proofs for theorems and lemmas presented in the paper.
\label{sec:prfs}

To derive this tighter composition for $(\epsilon,\delta)$-DP, we utilize the notion of zero-concentrated differential privacy (zCDP, \cite{Bun2016ConcentratedDP}), defined as follows.

\begin{definition}[$\rho$-zCDP] 
A randomized mechanism $M: \mathcal{X}^n \rightarrow \mathcal{ Y}$ is $\rho$-zero-concentrated-differentially private ($\rho$-zCDP) if, for
all $\bm x \sim \bm x' \in \mathcal{ X}^n$ differing on a single entry and all $\alpha \in (1, \infty)$,
 \begin{equation*}
    \operatorname{D}_\alpha (M(\bm x) \|M(\bm x')) \leq \rho\alpha,
    \label{defzcdp}
 \end{equation*}
where $ \operatorname{D}_\alpha (M(\bm x) \|M(\bm x'))$ is the $\alpha$-Rényi divergence \cite{van2014renyi}
between the distribution of $M(\bm x)$ and
the distribution of $M(\bm x')$. 
\label{def:rhozcdp}
\end{definition}

\begin{proof} [Proof of Lemma~\ref{lem:tighter-comp}]   
    \cite{Bun2016ConcentratedDP} have shown that, like the classic $(\epsilon, \delta)$-DP notion, 
    $\rho$-zCDP enjoys properties including basic composition and post-processing. 
    The corresponding Gaussian mechanism (Proposition 1.6, \cite{Bun2016ConcentratedDP}) states that an algorithm $f$ is $\rho$-zCDP after adding Gaussian noise $\mathcal{N}\left(0, \frac{\Delta_f^2}{2\rho}\right)$ to it.
    In addition, they have shown that $\rho$-zCDP implies $(\rho+2\sqrt{\rho\ln(1/\delta}), \delta)$-DP (Proposition 1.3, \cite{Bun2016ConcentratedDP}).

    Therefore, to achieve $(\epsilon, \delta)$-DP, it suffices for the algorithm to be $\rho$-zCDP with $\rho:= \epsilon+2\ln(1/\delta)-2\sqrt{(\epsilon+\ln(1/\delta))\ln(1/\delta)}$.
    Using the Gaussian mechanism and basic composition rule of $\rho$-zCDP, it suffices to add noise 
    \begin{equation*}
        u_t \sim \mathcal{N}\left(0, \frac{T\Delta_t^2}{2\rho}\right)
    \end{equation*}
    to $f_t$ for all $t$.
    Note that if $\epsilon \leq \frac{8\ln(1/\delta)}{2+\sqrt{2}}$, then $\rho \geq \frac{\epsilon^2}{8\ln(1/\delta)}$.
    Therefore, it suffices to add noise 
    \begin{equation*}
        u_t \sim \mathcal{N}\left(0, \frac{4T\Delta_t^2\ln(1/\delta)}{\epsilon^2}\right)
    \end{equation*}
    to $f_t$ for all $t$.

\end{proof}

\begin{proof}[Proof of Theorem~\ref{thm:privacy}(i)]
\label{prf:privay-ngd}
By the composition proposition of differential privacy, to establish that Algorithm \ref{alg:ngd} (Post-RL Noisy Gradient Descent) satisfies $(\epsilon,\delta)$-differential privacy it suffices to show that the computation of  $\beta^{t+1}$ is $(\epsilon/T, \delta/T)$-differentially private. According to the Gaussian mechanism 
Theorem 2.1, showing the latter boils down to proving that the sensitivity is controlled at each gradient step. Let $g^{t+1}(\bm \beta^t ; X, \bm y, Q) = \sum_{i=1}^n(\bm w_i^\top \bm\beta^t  - \Pi_R(z_i))\bm w_i$. We will show that the $\ell_2$-sensitivity of $g^{t+1}$, denoted by $\Delta_g$, is bounded by $B$.

Without loss of generality, we assume the neighboring data sets $(X, \Phi_{X}, \bm y, \Phi_{\bm y}) $ and $(X',\Phi_{X'}, \bm y', \Phi_{\bm y'})$ differ in the $j$-th record. Recall $\bm z$ is a permutation of $\bm y$ satisfying $q_{ij} = P(z_i = y_i)$. Let $\bm z'$ be a copy of $\bm z$ but with the entry $y_j$ changed to $y'_j$. 
Recall in the record linkage linear model elaborated in \cite{lahiri2005}, 
$\bm w_i = \sum_{j=1}^nq_{ij}\bm x_j$, which are convex combinations of rows of $X$.
We can write
$\bm w_i = \sum_{k \neq j} q_{ik} \bm x_k + q_{ij} \bm x_j$ and $\bm w'_i = \sum_{k \neq j} q'_{ik} \bm x_k + q'_{ij} \bm x'_j$. From assumption (A1) that $\|\bm x\|<c_{\bm x}$ with probability 1, we know $\|\bm w_i\|<c_{\bm x}$ almost surely.
Then, we have
\begin{equation}
\begin{aligned}
    \|\bm w_i - \bm w'_i\| & = \left\|\left(\sum_{k \neq j} q_{ik} \bm x_k + q_{ij} \bm x_j \right) - \left(\sum_{k \neq j} q'_{ik} \bm x_k + q'_{ij} \bm x'_j \right)\right\| \\
     & = \left\| \sum_{k \neq j} (q_{ik} - q'_{ik}) \bm x_k + (q_{ij}  \bm x_j - q'_{ij} \bm x_j') \right \| \\
     & = \left\| \sum_{k \neq j} (q_{ik} - q'_{ik}) \bm x_k  + (q_{ij} - q'_{ij}) \bm x_j +  q'_{ij} (\bm x_j - \bm x_j') \right \| \\
     & \leq \left\| \sum_{k = 1}^n (q_{ik} - q'_{ik}) \bm x_k \right\| + \left\| q'_{ij} (\bm x_j - \bm x_j') \right \|\\
     & \leq c_{\bm x}\left(  \sum_{k = 1}^n |q_{ik} - q'_{ik}| +  2q'_{ij} \right)  . 
\end{aligned}
\end{equation}

Since $Q'$ is a doubly stochastic matrix,  $\sum_{i=1}^n q'_{ij} = 1$ for any $j$. By the arbitrariness of index $j$, it follows that 
\begin{equation}
    \begin{aligned}
        \max_{(X, \bm y, Q)\sim(X', \bm y', Q')} \sum_{i=1}^n \|\bm w_i - \bm w'_i\|  \leq \sum_{i=1}^n c_{\bm x} \left( \sum_{k = 1}^n |q_{ik} - q'_{ik}| +  2q'_{ij} \right)  
        = c_{\bm x}(\|Q - Q'\|_1 + 2).
    \end{aligned}
\label{eq:wiwiprime}
\end{equation}

The sensitivity of $g^{t+1}$ is
\begin{equation}
\begin{aligned}
\Delta_g & = \max_{(X, \bm y, Q)\sim(X', \bm y', Q')}   \left\|g^{t+1}(\bm\beta^{t};X, \bm y, Q) - g^{t+1}(\bm\beta^{t};X',  \bm y', Q')\right\| \\
& = \max_{(X, \bm y, Q)\sim(X', \bm y', Q')}  \left\| \sum_{i=1}^n(\bm w_i^\top \bm\beta^t  - \Pi_R(z_i))\bm w_i - \sum_{i=1}^n(\bm w_i^{\prime T}\bm\beta^t  - \Pi_R(z'_i))\bm w'_i \right\|\\
& \leq \max_{(X, \bm y, Q)\sim(X', \bm y', Q')}   \sum_{i=1}^n \left\|\bm w_i^\top \bm\beta^t \bm w_i - \bm w_i^{\prime T}\bm\beta^t  \bm w'_i \right\| + \max_{(X, \bm y, Q)\sim(X', \bm y', Q')}    \sum_{i=1}^n \left\| \Pi_R(z_i)\bm w_i -  \Pi_R(z'_i)\bm w'_i \right\|\\
& =: \Delta_1 + \Delta_2.
\end{aligned}
\label{sens: g}
\end{equation}

We use $\Delta_1$ and $\Delta_2$ to denote the two terms on the right of (\ref{sens: g}), respectively.
To bound the first term $\Delta_1$, since
\begin{equation*}
\begin{aligned}
    \|\bm w_i^\top  \bm \beta^t  \bm w_i - \bm w_i^{\prime T}\bm \beta^t  \bm w_i^{\prime} \|
    & = \|\bm w_i^\top  \bm \beta^t  (\bm w_i - \bm w_i^{\prime} + \bm w_i^{\prime})- \bm w_i^{\prime T}\bm \beta^t  \bm w_i^{\prime} \|\\
    & =  \|\bm w_i^\top  \bm \beta^t  (\bm w_i - \bm w_i^{\prime}) + \bm w_i^\top  \bm \beta^t \bm w_i^{\prime}- \bm w_i^{\prime T}\bm \beta^t  \bm w_i^{\prime} \|\\
    & = \|\bm w_i^\top  \bm \beta^t  (\bm w_i - \bm w_i^{\prime}) + (\bm w_i - \bm w_i^{\prime })^T\bm \beta^t  \bm w_i^{\prime} \|\\
    & \leq \|\bm w_i^\top  \bm \beta^t  (\bm w_i - \bm w_i^{\prime})\| + \|(\bm w_i - \bm w_i^{\prime })^T\bm \beta^t  \bm w_i^{\prime} \|\\
    & \leq \|\bm w_i\| \|\bm \beta^t \| \|\bm w_i - \bm w_i^{\prime }\| + \|\bm w_i - \bm w_i^{\prime }\| \|\bm \beta^t \| \|\bm w_i^{\prime}\|,
\end{aligned}
\end{equation*}
then by (\ref{eq:wiwiprime}), $\Delta_1$ can be controlled:
\begin{equation}
\begin{aligned}
    \Delta_1 & \leq  \max_{(X, \bm y, Q)\sim(X', \bm y', Q')} \sum_{i=1}^n 2Cc_{\bm x} \|\bm w_i - \bm w_i^{\prime }\|
     \leq 2Cc^2_{\bm x}(\|Q - Q'\|_1 + 2) \leq 2Cc^2_{\bm x}(M + 2).
\end{aligned}
\end{equation}

For the second term $\Delta_2 =  \max_{(X, \bm y, Q)\sim(X', \bm y', Q')}  \sum_{i=1}^n\|\Pi_R(z_i)\bm w_i - \Pi_R(z'_i)\bm w'_i\|$,
since
\begin{equation}
\begin{aligned}
    \|\Pi_R(z_i)\bm w_i - \Pi_R(z'_i)\bm w'_i\|
    & = \|\Pi_R(z_i)(\bm w_i - \bm w'_i) + (\Pi_R(z_i) - \Pi_R(z'_i))\bm w'_i\|\\
    & \leq \|\Pi_R(z_i)(\bm w_i - \bm w'_i)\| + \|(\Pi_R(z_i) - \Pi_R(z'_i))\bm w'_i\|\\
    & \leq R\|(\bm w_i - \bm w'_i)\| + c_{\bm x}\|\Pi_R(z_i) - \Pi_R(z'_i)\|.
\end{aligned}
\end{equation}
Then, 
\begin{equation}
\begin{aligned}
    \Delta_2 & \leq  \max_{(X, \bm y, Q)\sim(X', \bm y', Q')}\sum_{i=1}^n \left(R\|(\bm w_i - \bm w'_i)\| + c_{\bm x}\|\Pi_R(z_i) - \Pi_R(z'_i)\| \right) \\
    & = (Rc_{\bm x}(\|Q - Q'\|_1 + 2) + 2R c_{\bm x}).  \\ 
    & \leq Rc_{\bm x} (M + 4).  
\end{aligned}
\end{equation}

It follows that
\begin{equation*}
    \Delta_g \leq \Delta_1 + \Delta_2 \leq 2Cc^2_{\bm x}(M + 2) + Rc_{\bm x} (M + 4).
\end{equation*}
\end{proof}

\begin{proof}[Proof of Theorem~\ref{thm:privacy}(ii)]
We now show that Algorithm~\ref{alg:ssp} (Post-RL sufficient statistics perturbation) is $(\epsilon,\delta)$-differentially private.  Let $A = (W \mid \bm z)$ be the augmented matrix considering linkage errors. Then $A^\top A$ contains sufficient statistics for the $\bm \beta$. Thus, it suffices to show that the sensitivity of  $A^\top A$ is controlled by $B = Rc_x(M+4) + \max\{2c_x^2(M+2), 2R^2\}$.

Let $A' = (W' \mid \bm z')$, where $W'$ and $\bm z'$ come from any neighboring data set, as in the proof of Theorem~\ref{thm:privacy} (i). 
We have
\begin{equation*}
\begin{aligned}
    A^\top A - A'^\top A' & = 
    \begin{pmatrix}
    W^\top W - W'^\top W' & 0\\
    0 & \bm z^\top\bm z - \bm z'^\top \bm z'
    \end{pmatrix}
    +
    \begin{pmatrix}
    0 & W^\top \bm z - W'^\top \bm z'\\
    \bm z^\top W - \bm z'^\top W' & 0
    \end{pmatrix}\\
    & =: A_1 + A_2. 
\end{aligned}
\end{equation*}
By the properties of the norm of block matrices, 
\begin{equation*}
    \|A_1\| \leq \max\{\|W^\top W - W'^\top W' \|, \| \bm z^\top\bm z - \bm z'^\top \bm z'\|\}
\end{equation*}
and
\begin{equation*}
\begin{aligned}
    \|A_2\| & = \|W^\top \bm z - W'^\top \bm z'\| \\
    & = \|W^\top \bm z - W'^\top \bm z + W'^\top \bm z - W'^\top \bm z' \|\\
    & \leq \|(W-W')^\top\bm z\| + \|W'^\top(\bm z - \bm z')\|.
\end{aligned}
\end{equation*}
We have
\begin{equation*}
\begin{aligned}
    \|W^\top W - W'^\top W' \|
    & = \|\sum_{i}^n\bm w_i \bm w_i^\top - \sum_{i}^n \bm w'_i \bm w'^\top_i\| \\
    & \leq \sum_{i}^n \| \bm w_i(\bm w_i - \bm w_i')^\top + (\bm w_i-\bm w_i')\bm w_i'^\top\| \\
    & \leq 2 c_x \sum_{i}^n \|\bm w_i - \bm w_i'\| \\
    & \leq 2 c_x^2(M+2),
\end{aligned}
\end{equation*}
and
\begin{equation*}
    \|\bm z^\top\bm z - \bm z'^\top \bm z'\| = |y^2_j - y'^2_j| \leq 2R^2.
\end{equation*}
Note that we can swap the rows of $\bm z'$ and $Q'$, such that $\bm z$ and $\bm z'$ only differ in one record and it does not change the estimation using $Q'$ and $\bm z'$ after swapping. 
Then,
\begin{equation*}
    \|A_1\| \leq \max\{2c_x^2(M+2), 2R^2\}.
\end{equation*}
Since
\begin{equation*}
    \|(W-W')^\top\bm z\| = \|\sum_{i}^n (\bm w_i - \bm w'_i) z_i\| \leq R \sum_{i}^n \|\bm w_i - \bm w_i'\| \leq Rc_x(M+2)
\end{equation*}
and 
\begin{equation*}
    \begin{aligned}
        \|W'^\top(\bm z - \bm z')\| 
        & = \|\sum_{i}^n \bm w'_i(z_i - z_i') \| 
        & = \sum_{i}^n \|\bm w_i\| |z_i - z_i'| 
        & \leq c_x \sum_{i}^n |z_i - z_i'|
        & \leq 2Rc_x,
    \end{aligned}
\end{equation*}
we have
\begin{equation*}
    \|A_2\| \leq Rc_x(M+4).
\end{equation*}
Putting the upper bounds together,
we derive
\begin{equation*}
    \max_{A,A'}\|A^\top A - A'^\top A'\| \leq Rc_x(M+4) + \max\{2c_x^2(M+2), 2R^2\}.
\end{equation*}

\end{proof}

\begin{proof}[Proof of Lemma~\ref{lem:olsest-error}]
\label{Prf:olsest}
To establish the behavior of the expected squared-error loss of ${\hat {\bm \beta}^{\text{OLS}}}$, first note that since 
\begin{equation*}
0< \frac{1}{L} < d\lambda_{\min}\left(\frac{X^\top  X}{n} \right) \leq d\lambda_{\max}\left(\frac{X^\top  X}{n} \right)  < L,
\end{equation*}
we have
\begin{equation*}
    \frac{d}{Ln} < \lambda_{\max}( X^\top  X)^{-1} \leq \lambda_{\max}( X^\top  X)^{-1} < \frac{dL}{n}.
\end{equation*}
Therefore, $\operatorname{tr}(X^\top X)^{-1} = \sum_{i=1}^d \lambda_i ((X^\top X)^{-1}) =  \Theta(d^2/n)$ where $\lambda_i, \quad i = 1, \dots, d$ are the eigenvalues of $(X^\top X)^{-1}$.
\end{proof}

\begin{proof}[Proof of Lemma~\ref{lem:rlest-error}]
\label{Prf:rlsigma}
Note that ${\hat {\bm \beta}^{\text{RL}}}$ is an unbiased estimator for $\bm \beta$, and hence
\begin{equation}
    \begin{aligned}
    \mathbb{E}\|{{\hat {\bm \beta}^{\text{RL}}} - \bm \beta}^{}\|^2 & = \mathbb{E}\left[\sum_{i=1}^d (\hat\beta_i^{\text{RL}} - \beta_i)^2\right] \\
    & = \sum_{i=1}^d \mathbb{E} (\hat\beta_i^{\text{RL}} - \beta_i)^2 \\
    & = \sum_{i=1}^d \operatorname{Var}(\hat\beta_i^{\text{RL}})\\
    & = \operatorname{tr}(\Sigma^{\text{RL}}).
    \end{aligned}
\end{equation}
\end{proof}

\begin{proof}[Proof of Theorem~\ref{thm:ngd-bound}]
To establish an upper bound of the excess error of the private estimator, i.e., $\|{\bm{\hat\beta}^{\text{priv}}} - {\hat {\bm \beta}^{\text{RL}}}\|^2$, for Algorithm~\ref{alg:ngd}, we work under the event $\mathcal{E} = \{\Pi_{R}(z_i) = z_i, \forall i \in [n]\}$.
By the concentration bound of the Gaussian distribution, with the choice of $R = \sigma \sqrt{2\ln n}$, $\mathbb{P}(\mathcal{E}) \geq 1 - c_1\exp(-c_2 \ln n)$ where $c_1$ and $c_2$ are constants.

Recall that the loss function
$\mathcal{L}_n(\bm \beta) \stackrel{def}{=} \frac{1}{2n}(\bm z - W\bm\beta)^\top (\bm z - W\bm\beta)$.
The assumption about the eigenvalues of $W^\top  W/n$ implies that $\mathcal{L}_n(\bm \beta)$ is $\frac{L}{d}$-smooth and $\frac{1}{dL}$-strongly convex. 
See the proof in Lemma~\ref{smoothconvex}. Under $\mathcal{E}$, the iterate $\bm{\beta}^{t+1} = \Pi_C(\bm{\beta}^{t} - \eta\nabla \mathcal{L}_n(\bm{\beta}^{t}) + \bm u_t)$.
Let
$\hat {\bm\beta}^{t+1} = \Pi_C(\bm{\beta}^{t} - \eta\nabla \mathcal{L}_n(\bm{\beta}^{t}))$ be the unperturbed iterate, then $\|\bm{\beta}^{t+1} - \hat {\bm\beta}^{t+1}\| \leq \|\bm u_t\|$.
Since (A3) says that $\|\bm\beta\| \leq c_0$,
we can assume $\|{\hat {\bm \beta}^{\text{RL}}}\| \leq c_0$ without loss of generality where ${\hat {\bm \beta}^{\text{RL}}} = \underset{\bm \beta}{\arg\min}\mathcal{L}_n(\bm \beta)$.
Then, ${\hat {\bm \beta}^{\text{RL}}}$ is the same as ${\hat {\bm \beta}^{\text{RL}}} \stackrel{def}{=}\underset{\|\bm \beta\| \leq C}{\arg\min}\mathcal{L}_n(\bm \beta)$ by setting $C = c_0$.
By Lemma~\ref{conv}, with $\eta = \frac{d}{L}$, it then follows that
$$
\|\hat{\bm \beta}^{t+1} - {\hat {\bm \beta}^{\text{RL}}}\|^2 \leq \left(1- \frac{1}{L^2}\right) \|{\bm \beta}^{t} - {\hat {\bm \beta}^{\text{RL}}}\|^2.
$$
Let $c \geq 2$ be some constant, then by Lemma~\ref{lem:tri-ineq}, we have the following for the noisy iterate ${\bm \beta}^{t+1}$:
\begin{equation*}
    \begin{aligned}
    \|\bm{\beta}^{t+1} -{\hat {\bm \beta}^{\text{RL}}} \|^2 
    & = \| \hat {\bm \beta}^{t+1} - {\hat {\bm \beta}^{\text{RL}}} 
    +\bm{\beta}^{t+1} - \hat {\bm \beta}^{t+1} \|^2 \\
    & \leq \left(1 + \frac{1}{cL^2}\right)\|\hat {\bm \beta}^{t+1} -{\hat {\bm \beta}^{\text{RL}}} \|^2 + \left(1 + cL^2\right)\|\bm u_t \|^2 \\ 
    &\leq \left(1 + \frac{1}{cL^2}\right)\left(1 - \frac{1}{L^2}\right)\| {\bm \beta}^{t} -{\hat {\bm \beta}^{\text{RL}}} \|^2 +  \left(1 + cL^2\right) \|\bm u_t \|^2 \\
    &\leq \left(1 - \frac{c-1}{cL^2}\right)\| {\bm \beta}^{t} -{\hat {\bm \beta}^{\text{RL}}} \|^2 +  \left(1 + cL^2\right) \|\bm u_t \|^2.
    \end{aligned}
\end{equation*}

The above recursive formula yields
\begin{equation*}
    \|{\bm{\hat\beta}^{\text{priv}}} -{\hat {\bm \beta}^{\text{RL}}} \|^2 \leq \left(1 - \frac{c-1}{cL^2}\right)^T \|{\bm \beta}^{0} -{\hat {\bm \beta}^{\text{RL}}} \|^2
    +  \left(1 + cL^2\right) \sum_{k = 0}^{T-1} \left(1 - \frac{c-1}{cL^2}\right)^k \|\bm u_{T-1-k} \|^2.
\end{equation*}

Setting $\displaystyle T = \left\lceil \ln(c_0^2 n) / \ln \left(1 - \frac{c-1}{cL^2}\right) \right\rceil$, the first term $\displaystyle\left(1 - \frac{c-1}{cL^2}\right)^T \|{\bm \beta}^{0} -{\hat {\bm \beta}^{\text{RL}}} \|^2 \leq \frac{1}{n}$, given that $\bm\beta^0 = \bm 0$ and $\|{\hat {\bm \beta}^{\text{RL}}}\| \leq c_0$. 
When $c$ is sufficiently large, $T$ can be set to $\left\lceil L^2\ln(c_0^2 n)  \right\rceil$.


To control the second term, we apply Lemma~\ref{sum_of_chi} with $\rho = 1-(c-1)/(cL^2)$ and $\zeta = 2{\eta}^2TB^2 \frac{\ln(1/\delta)}{n^2\epsilon^2}$ which is the variance of $\bm u_{T-1-k}$ for $k = 1,..., T-1$. Provided $\delta = o(1/n)$, let $s = K_1\zeta d$ for some sufficiently large constant $K_1$ so $\frac{\rho\zeta d}{1-\rho} +  s  = \Theta(\zeta d)$ where \begin{equation*}
    \begin{aligned}
    \zeta d & = 
    4{\eta}^2TB^2 \frac{\ln(1/\delta)}{n^2\epsilon^2} \cdot d \\
    & =
    4\left(\frac{d}{L}\right)^2 \left\lceil L^2 \ln(c_0^2n) \right\rceil (\sigma \sqrt{2\ln n} c_{\bm x}  (M+4) + 2c_0c^2_{\bm x}(M+2))^2 \frac{\ln(1/\delta)}{n^2\epsilon^2} \cdot d \\
    &= O\left( \frac{\sigma^2d^3 \ln^2 n \ln(1/\delta)}{n^2\epsilon^2}\right).     
    \end{aligned}
\end{equation*}

That is, the noise term $\displaystyle\sum_{k = 0}^{T-1} \left(1 - \frac{c-1}{cL^2}\right)^k \|\bm u_{T-1-k} \|^2$ is then controlled by a corresponding big-O statement with probability at least $1-e^{-c_3 d}$, 
where $c_3 = \min \left( \frac{(1 - \rho^2)K_1^2}{8\rho^2}, \frac{K_1}{8\rho}\right)$, hence 
\begin{equation*}
    \|{\bm{\hat\beta}^{\text{priv}}} -{\hat {\bm \beta}^{\text{RL}}} \|^2 = \frac{1}{n} + O\left( \frac{\sigma^2d^3 \ln^2 n \ln(1/\delta)}{n^2\epsilon^2}\right).
\end{equation*}

\end{proof}

\begin{proof}[Proof of Theorem~\ref{thm:ssp-bound}]

Algorithm~\ref{alg:ssp} is also analyzed under    
$\mathcal{E} = \{\Pi_R(z_i) = z_i, \forall i = 1, \dots, n \}$ as in Theorem~\ref{thm:ssp-bound}. Then, with $R = \sigma \sqrt{2\ln n}$, we have
$\mathbb{P}(\mathcal{E}) \geq 1 - c_1\exp(-c_2 \ln n)$.

By Lemma~\ref{lem: matrix}, 
\begin{equation}
\begin{aligned}
    \quad &  \quad (W^\top W + U )^{-1} 
    & = ( W^\top W )^{-1} - ( W^\top W )^{-1} \cdot (I + U(W^\top W)^{-1})^{-1} \cdot U(W^\top W)^{-1}.
\label{eq:matrixinverse}
\end{aligned}
\end{equation}
Then,
\begin{equation}
\begin{aligned}
    {\bm{\hat\beta}^{\text{priv}}} - {\hat {\bm \beta}^{\text{RL}}} & = (W^\top W + U )^{-1}(W^\top \bm z + \bm u) - (W^\top W )^{-1}W^\top \bm z\\
    & =  (W^\top W )^{-1} \bm u - ( W^\top W )^{-1} \cdot (I + U(W^\top W)^{-1})^{-1} \cdot U(W^\top W)^{-1}(W^\top \bm z + \bm u) \\
    & = (W^\top W )^{-1} \bm u - ( W^\top W )^{-1} \cdot (I + U(W^\top W)^{-1})^{-1} \cdot U(\bm {\hat {\bm \beta}^{\text{RL}}} + (W^\top W)^{-1} \bm u) 
    \label{eq:privest-rlest}
\end{aligned}
\end{equation}

To bound $ \|{\bm{\hat\beta}^{\text{priv}}} - {\hat {\bm \beta}^{\text{RL}}}\|$, we need to bound the norms of $(W^\top W)^{-1}$,  $U$, $\bm u$, and $(I + U(W^\top W)^{-1})^{-1}$.
First, by the assumption (A4): \begin{equation*}
0< \frac{1}{L} < d\lambda_{\min}\left(\frac{W^\top  W}{n} \right) \leq d\lambda_{\max}\left(\frac{W^\top  W}{n} \right)  < L
\end{equation*}
for some constant $1 < L < \infty$, we have
 $$ \frac{d}{Ln} \leq \lambda_{\min}( W^\top  W)^{-1} \leq \lambda_{\max}( W^\top  W)^{-1} \leq \frac{Ld}{n}.$$
Then,
\begin{equation}
    \| (W^\top W)^{-1}\| = \lambda_{\max}( W^\top  W)^{-1} = O\left(\frac{d}{n}\right).
    \label{norm:W}
\end{equation}
Recall from Algorithm \ref{alg:ssp} that $U$ is a Gaussian symmetric matrix with upper triangle given by iid $N(0,\omega).$ The Gaussian concentration bounds give that
w.p. $\geq 1 - e^{-c_3d}$, we have
\begin{equation}
    \| U \| = O(\omega \sqrt{d}) = O\left(\frac{\sigma^2 \ln n\sqrt{d\ln(1/\delta)}}{ \epsilon} \right).
    \label{norm:U}
\end{equation}
The result (\ref{norm:U}) is from random matrix theory 
\cite[Corollary 4.4.8]{Vershynin}.
Since the vector norm is bounded by the norm of the matrix that contains the vector as a column, thus we also have 
\begin{equation}
    \| \bm u \| 
    = O\left(\frac{\sigma^2 \ln n\sqrt{d\ln(1/\delta)}}{ \epsilon} \right).
    \label{norm:n}
\end{equation}

For $\|(I + U(W^\top W)^{-1})^{-1}\|$, 
consider its Taylor series:
\begin{equation}
     (I + U(W^\top W)^{-1})^{-1} = \sum_{i = 0}^\infty (- U(W^\top W)^{-1})^i
\label{eq:tsofinverse}
\end{equation}
W. p. $\geq 1-e^{-c_3d}$, $$\|U(W^\top W)^{-1}\| \leq \|U\|\|(W^\top W)^{-1}\| = O\left(\frac{\sigma^2 d \ln n \sqrt{d\ln(1/\delta)}}{n \epsilon} \right)$$  is going to zero as $n \to \infty$. Therefore, the Taylor Series (\ref{eq:tsofinverse}) converges.
Then,
\begin{equation}
    \begin{aligned}
         \|(I + U(W^\top W)^{-1})^{-1}\| & = \| \sum_{i = 0}^\infty (- U(W^\top W)^{-1})^i \| \\
         & \leq \sum_{i = 0}^\infty \| (- U(W^\top W)^{-1})^i \| \\
         & \leq \sum_{i = 0}^\infty \| U(W^\top W)^{-1}) \| ^i \\
         & = \frac{1}{1-\| U(W^\top W)^{-1})^{-1}\|}
         & = O(1)
    \end{aligned}
    \label{norm:the-inverse}
\end{equation}
 w.p. $\geq 1-e^{-c_3d}$. 
Plugging all the bounds into (\ref{eq:privest-rlest}),
we have derived
\begin{equation*}
    \|{\bm{\hat\beta}^{\text{priv}}} - {\hat {\bm \beta}^{\text{RL}}}\|^2 = O\left(\frac{\sigma^4 d^3 \ln^2 n  \ln(1/\delta)}{n^2 \epsilon^2} \right).
\end{equation*}
 
 \end{proof}

\begin{proof}[Proof of Corollary~\ref{cor:final.bound}]

The proof is completed by using the inequality
\begin{equation*}
\begin{aligned}
    \|{ {\bm{\hat\beta}^{\text{priv}}} - \bm \beta}^{}\|^2 & \leq 2(\|{{\hat {\bm \beta}^{\text{RL}}} - \bm \beta}^{}\|^2 + \|{ {\bm{\hat\beta}^{\text{priv}}} - {\hat {\bm \beta}^{\text{RL}}}}^{}\|^2 ).
    \end{aligned}
\end{equation*}
\end{proof}

\begin{proof}[Proof of Theorem~\ref{thm:var-ngd}]
Consider the non-projected estimator
\begin{equation}
     \bm{\beta}^{t+1} = \bm{\beta}^t - 
    \frac{\eta}{n}W^\top  (W \bm{\beta}^t - \bm z) + \bm u_t.
\end{equation}
Let $\displaystyle A \stackrel{def}{=} \frac{\eta}{n}W^\top  W$ and $\displaystyle B \stackrel{def}{=} \frac{\eta}{n}W$. From the recursive form $\bm{\beta}^{t+1} = (I_d - A) \bm{\beta}^{t} + B^\top \bm z + \bm u_t$, we derive for $ {\bm{\hat\beta}^{\text{priv}}} \stackrel{def}{=} \bm{\beta}^{T}$:
\begin{equation}
    {\bm{\hat\beta}^{\text{priv}}} = (I_d - A)^{T}\bm{\beta}^0 + \sum_{t=1}^T (I_d - A) ^{t-1}( B^\top\bm z + \bm u_{T-t}).
\end{equation}
Then,
\begin{equation}
    \operatorname{Var}({\bm{\hat\beta}^{\text{priv}}}) = \sum_{t=1}^T (I_d - A) ^{t-1} \cdot B^\top \Sigma_{\bm z} B  \cdot \sum_{t=1}^T (I_d - A) ^{t-1}+ \omega^2 \sum_{t=1}^T (I_d - A) ^{2t-2}.
\end{equation}
\end{proof}

\begin{proof}[Proof of Theorem~\ref{thm:var-ssp}]

By rewriting ${\bm{\hat\beta}^{\text{priv}}}$ as in (\ref{eq:privest-rlest}) and ignoring the remainder of the first-order Taylor expansion in (\ref{eq:tsofinverse}), 
we have
\begin{equation}
\begin{aligned}
    {\bm{\hat\beta}^{\text{priv}}} & =  {\hat {\bm \beta}^{\text{RL}}} + (W^\top W )^{-1} \bm u - (W^\top W )^{-1} U \bm {\hat {\bm \beta}^{\text{RL}}} - ( W^\top W )^{-1} U (W^\top W)^{-1} \bm u.
\end{aligned}
\end{equation}
Then,
\begin{equation}
\begin{aligned}
    \operatorname{Var}({\bm{\hat\beta}^{\text{priv}}}) & =  \operatorname{Var}({\hat {\bm \beta}^{\text{RL}}}) + \operatorname{Var}[(W^\top W )^{-1} \bm u] + \operatorname{Var}[(W^\top W )^{-1} U \bm {\hat {\bm \beta}^{\text{RL}}}]\\ & \quad +   \operatorname{Var}[(W^\top W )^{-1}U (W^\top W)^{-1} \bm u]  - 2 \operatorname{Cov}({\hat {\bm \beta}^{\text{RL}}}, (W^\top W )^{-1}U \bm {\hat {\bm \beta}^{\text{RL}}}) \\
    & \quad - 2 \operatorname{Cov}((W^\top W )^{-1} \bm u, ( W^\top W )^{-1} U (W^\top W)^{-1} \bm u)\\
    & \quad + 2 \operatorname{Cov}((W^\top W )^{-1} U \bm {\hat {\bm \beta}^{\text{RL}}}, ( W^\top W )^{-1} U (W^\top W)^{-1} \bm u).
    \label{eq:variance-prive}
\end{aligned}
\end{equation}
Since $\bm u \sim \mathcal{N}(0, \omega^2 I_d)$, we have
\begin{equation*}
    \operatorname{Var}((W^\top W )^{-1} \bm u)  = \omega^2 (W^\top W )^{-2}.
\end{equation*}
For the third term, 
let $\Sigma_1 \stackrel{def}{=}  \operatorname{Var}(U \bm {\hat {\bm \beta}^{\text{RL}}})$. 
By Lemma~\ref{lem:xy-var},
\begin{equation*}
    (\Sigma_1)_{kk} = \omega^2 \sum_{i=1}^d(\beta_i^2 + \Sigma^{\text{RL}}_{ii})
\end{equation*}
and 
\begin{equation*}
    (\Sigma_1)_{kl} = \omega^2 (\beta_k\beta_l + \Sigma^{\text{RL}}_{kl}).
\end{equation*}
For the fourth term, let $\Sigma_2 \stackrel{def}{=}  \operatorname{Var}(U (W^\top W)^{-1} \bm u)$ and let $\Sigma' = \operatorname{Var}((W^\top W)^{-1} \bm u) = \omega^2(W^\top W)^{-2}$. 
\begin{equation*}
    (\Sigma_2)_{kk} = \omega^2 \sum_{i=1}^d \Sigma'_{ii}
\end{equation*}
and 
\begin{equation*}
    (\Sigma_2)_{kl} = \omega^2 \Sigma'_{kl}.
\end{equation*}

On the other hand, the covariances in (\ref{eq:variance-prive}) are all zeros by Lemmas \ref{lem:cov1} and \ref{lem:cov2} due to the independencies and zero expectations of $U$ and $\bm u$.
Then putting them together, we have
\begin{equation}
    \Sigma = \Sigma^{\text{RL}}  + (W^\top W)^{-1} (\omega ^2I_d + \Sigma_1 + \Sigma_2) (W^\top W)^{-1}.
\end{equation}
Note that $\Sigma_1$ and $\Sigma_2$ have a common factor $\omega^2$. By rescaling $\Sigma_1$ and $\Sigma_2$, we have the expression for $\Sigma$ as stated in the theorem. 

\end{proof}

\printbibliography

\end{document}